\title{Quantum Algorithms based on the Block-Encoding Framework for Matrix Functions by Contour Integrals}
\author[1,4]{Souichi Takahira\thanks{Email:takahira@qc.ee.es.osaka-u.ac.jp}} 
\author[2]{Asuka Ohashi}
\author[3]{Tomohiro Sogabe}
\author[4]{Tsuyoshi Sasaki Usuda}
\affil[1]{
    Graduate School of Engineering Science, Osaka University, \authorcr
    \textit{1-3 Machikaneyama, Toyonaka, Osaka, 560-8531, Japan}
}
\affil[2]{
    Department of General Education, 
    National Institute of Technology, Kagawa College, \authorcr
   \textit{551 Kohda, Takuma-cho, Mitoyo, Kagawa, 769-1192, Japan}
}
\affil[3]{
    Graduate School of Engineering, Nagoya University, 
    \textit{Furo-cho, Chikusa, Nagoya, 464-8603, Japan}
}
\affil[4]{
    Graduate School of Information Science and Technology, Aichi Prefectural University, \authorcr
    \textit{1522-3 Ibaragabasama, Nagakute, Aichi, 480-1198, Japan}
}
\date{\today}
\begin{document}

\maketitle

\begin{abstract}
      The matrix functions can be defined by Cauchy's integral formula 
  and can be approximated by the linear combination of 
  inverses of shifted matrices using a quadrature formula. 
  In this paper, we show a concrete construction of a framework to implement the linear combination of the inverses on quantum computers
  and propose a quantum algorithm for matrix functions based on the framework. 
  Compared with the previous study 
  [S. Takahira, A. Ohashi, T. Sogabe, and T.S. Usuda, Quant. Inf. Comput., 
  \textbf{20}, 1\&2, 14--36, (Feb. 2020)] that proposed a quantum algorithm 
  to compute a quantum state for the matrix function based on the circular contour centered at the origin, 
  the quantum algorithm in the present paper can be applied to a more general contour. 
  Moreover, the algorithm is described by the block-encoding framework.  
  Similarly to the previous study, the algorithm can be applied even if the input matrix is not a Hermitian or normal matrix. 
    
\end{abstract}

\section{Introduction}

\subsection{Background}
In 2009, Harrow, Hassidim and Lloyd proposed a quantum algorithm for linear systems that outputs a quantum state corresponding to a normalized solution vector \cite{HHL2009}.  
Many quantum algorithms of the same kind \cite{CKS2017, SBJ2019, AGGW2020}, which solve linear algebra tasks, 
have been developed after the breakthrough by Harrow et al.
These algorithms are applied to many quantum algorithms 
(for example, differential equations \cite{Berry14, BCOW2017, CLO2020}, 
quantum machine learning \cite{WBL2012, RML2014}, semidefinite programs \cite{AGGW2020}) as a subroutine. 
In particular, quantum algorithms for matrix functions are important 
because  computing the matrix function is one of the important tasks in linear algebra \cite{Higham}. 

Matrix functions have different representations. 
Among several equivalent definitions of matrix functions, 
we adopt a definition by a contour integral. 
Let $A$ be a square matrix. 
Suppose that $\Gamma$ is a closed contour enclosing all the eigenvalues of $A$ and 
$f$ is an analytic function on and inside $\Gamma$. 
Then, the matrix function $f(A)$ can be defined as follows: 
\begin{align}
  f(A) = \frac{1}{2\pi\im}\int_\Gamma f(z)(zI - A)^{-1} \dz, 
  \label{eq:deffA}
\end{align}
where $\im = \sqrt{-1}$ is the imaginary unit. 
In the classical computing, many methods for matrix functions based on different definitions have been developed. 
For the details of these definitions, see, e.g., \cite{Rine1955, Higham}. 
In \cite{TOSU2020}, we proposed a quantum algorithm for obtaining a state 
$\ket{f} = f(A)\ket{\psi}/\norm{f(A)\ket{\psi}}$ when the contour is a circle with centered at the origin. 
However, the quantum algorithm proposed in \cite{TOSU2020}
cannot be applied to a function (e.g., matrix logarithm) that has a singular point at the origin. 
As such matrix functions appear in practical problems \cite{logdet}, 
it is important to construct a quantum algorithm for them. 

In this paper, we show a quantum algorithm for computing the state $\ket{f} = f(A)\ket{\psi}/\norm{f(A)\ket{\psi}}$ 
even if $\Gamma$ is not a circle centered at the origin. 
Specifically, we consider quantum algorithms for computing a linear combination of inverse matrices. 
This is because the problem of computing a matrix of the form \eqref{eq:deffA} 
is essentially the problem of computing the linear combination of 
inverses of shifted matrices arising from a quadrature formula \cite{Higham}, 
e.g., trapezoidal rule, Gauss-Laguerre quadrature, and the double exponential formula \cite{TM74}.
Using the algorithm for the linear combination of inverses as a main component, 
we construct two quantum algorithms for matrix functions.
The first one is the case in which the contour is a circle and the quadrature method is the trapezoidal rule. 
The second one is a case in which the contour is not a circle but a general contour, leading to a framework for computing the approximations of matrix functions obtained by various quadrature formulas.

Unlike the previous study of \cite{TOSU2020}, 
we describe quantum algorithms in terms of block-encoding \cite{SGJ19, GSLW2018, GSLW2019}, 
which is a framework for implementing matrix arithmetic on quantum computers.  
Using this framework, we can concisely describe the circuit of the quantum algorithms to solve linear algebra tasks.
Moreover, we obtain a complexity, the number of ancilla qubits of the quantum algorithm, and an error occurred in the quantum algorithm. 
In addition, it is easy to employ the algorithm as a subroutine. 

The block-encoding of the linear combination of inverses consists of a block-encoding of the inverse of a block-diagonal matrix, 
which is also considered as in \cite{TOSU2020}, and a circuit similar to the well-known linear combinations of unitaries (LCU) approach  \cite{CW2012,BCCKS2015}. 
Such the construction is also considered in the study \cite[Section 5.1]{TAWL2020}. 
The study \cite[Section 5.1]{TAWL2020} discusses mainly the matrix exponential represented by the contour integral 
and approximated by Gauss-Laguerre quadrature. Compared with the study, 
we discuss a general matrix function and methods for the approximations by various quadratures in detail. 
In the preliminaries, we show constructions of block-encodings for basic operations. 
We use these constructions to obtain a block-encodings of the block-diagonal matrix and its inverse. 

We now state a relation between a matrix function and its trace. 
If the block-encoding of the matrix is constructed, 
then the trace of the block-encoded matrix can be encoded in the amplitude of a certain state using 
a circuit for generating the maximum entangled state and the Hadamard test \cite{AGGW2020}. 
By estimating the amplitude using techniques such as amplitude estimation \cite{BHMT2002}, we obtain the trace of the block-encoded matrix. 
In other words, if we construct a block-encoding of $f(A)$, then we can estimate $\Tr f(A)$. 
The trace of the matrix function relate to various quantities, 
for example, there is a relation between a log-determinant of matrices and a trace of the matrix logarithm: $\log \det A = \Tr \log A$. 
Therefore, our method can be used to estimate various quantities by combing the quantum trace estimation.

\subsection{Related work}

In this section, we  review some related studies on quantum algorithms for (general) matrix functions by aspects of the principle. 
A quantum algorithm for a polynomial of unitary matrices was proposed in \cite{KR2003}. 
For the case of non-unitary matrices, Harrow, Hassidim, and Lloyd proposed a quantum algorithm for implementing functions of 
sparse Hermitian matrices \cite{HHL2009} on quantum computers. 
The key idea of \cite{HHL2009} is the fact that $f(A)$ is approximated as 
$f(A)  = U f(\Lambda) U^\dag \approx U f(\tilde{\Lambda}) U^\dag$, where 
$A =  U^\dag \Lambda U$ is the eigendecomposition of $A$ and 
$\tilde{\Lambda}$ denotes a diagonal matrix in which the diagonal elements are approximations of the eigenvalues.
The eigenvalues estimation is performed by quantum phase estimation (QPE) with Hamiltonian simulation on the Hermitian matrix $A$ 
and the map $\lambda \mapsto f(\lambda)$ is performed by a rotation gate based on the estimated eigenvalues. 
When $f$ is a function close to $1/x$, this method outputs a state corresponding to the solution of the linear system. 
Because the method based on eigenvalue estimation uses QPE, 
the overall cost is $\poly(1/\epsilon)$, where $\epsilon$ is the desired accuracy for the state $\ket{f}$. 

An approach based on function approximation was also considered \cite{CKS2017}. 
In this approach, the matrix function is approximated as $f(A)= U f(\Lambda) U^\dag  \approx U \tilde{f}(\Lambda) U^\dag = \tilde{f}(A)$, 
where $\tilde{f}$ is an approximation of the original function $f$. 
In particular, approaches using Fourier series \cite{AGGW2020} or Chebyshev polynomials (of the first kind) \cite{SBJ2019} were considered. 
If $f(x) \approx \sum_k c_k p_k(x)$ on a domain enclosing the eigenvalues of $A$, then $f(A) \approx \sum_k c_k p_k(A)$. 
Each term in the Fourier series and Chebyshev polynomial 
can be performed efficiently using the Hamiltonian simulation algorithm \cite{BCK2015} and a quantum walk \cite{BC2012,CKS2017}, respectively. 
When we obtain the expansion, we can perform the matrix functions using a framework that implements the linear combination of unitaries (LCU) \cite{CW2012,BCCKS2015}. 
When the number of terms is proportional to $\poly\log(1/\epsilon)$, 
then the matrix function is performed efficiently.  
Unlike the method based on eigenvalue estimation, the LCU-based approach is performed efficiently, that is, the overall cost is $\poly\log(1/\epsilon)$.

As seen from the above, 
the QPE and LCU techniques are based on eigenvalue decomposition. 
An approach using another matrix decomposition, i.e., the singular value decomposition, was also considered \cite{GSLW2018,GSLW2019}. 
The method in \cite{GSLW2018,GSLW2019} is called as ``quantum singular value transformation'' (QSVT). 
As its name suggests, 
QSVT implements a (generalized) matrix function $f^\diamond (A) = Uf(\Sigma) V^\dag$ on quantum computers, 
where $A = U\Sigma V^\dag$ is the singular value decomposition.   
Roughly speaking, 
QSVT treats the matrix function corresponding to the top-left element of a $2 \times 2$ matrix 
$\e^{\im \phi_0 \sigma_z} \e^{\im \theta \sigma_x}\e^{\im \phi_1 \sigma_z} \e^{\im \theta \sigma_x}\e^{\im \phi_2 \sigma_z} \dots$, 
where $\theta = \arccos(x)$, and the scalar sequence $\Phi = (\phi_0, \phi_1,\dots,\phi_{d-1}) \in \R^d$ 
corresponds to the function to be implemented. 
When the function satisfies certain assumptions, 
the scalar sequence $\Phi = (\phi_0, \phi_1,\dots,\phi_{d-1}) \in \R^d$ exists. 
Additionally, classical numerical methods for computing the sequence efficiently have been proposed \cite{Haa2019,DMWL2020,CDGHS2020}.
If we compute  the sequence by preprocessing, 
then the circuit that consists of a block-encoding of $A$ and the rotation gates represents a block-encoding of the matrix functions. 
We do not use QSVT directly, but utilize it to perform the inverse. 
QSVT is based on qubitization \cite{LC2019} and quantum signal processing \cite{LYC2016}. 
For the details of QSVT and related techniques, see \cite{GSLW2018,GSLW2019} and \cite{MRTC2021}.

The quantum algorithms as mentioned above use the decompositions of matrix functions and the associated approximations. 
As we stated earlier, the matrix functions can be represented as contour integrals and can be approximated by the linear combination of inverses. 
A quantum algorithm that uses this integral representation was proposed in \cite{TOSU2020} for the case in which the contour is a circle centered at the origin. 
In \cite{TOSU2020}, the computing the state $\ket{f}$ is considered under the assumption that we have oracles to return elements and nonzero positions of $A$. 
To compute the approximation (of the matrix function) as the linear combination of inverses, 
a block-diagonal matrix and a circuit to multiply the coefficients are mainly considered. 
Here, diagonal blocks of this block-diagonal matrix are matrices we want to invert. 
In \cite{TOSU2020}, LCU-based quantum linear systems solver \cite{CKS2017} is used to invert the block-diagonal matrix. 
To apply this solver, we need oracles for the matrix elements and nonzero positions. 
Thus, \cite{TOSU2020} constructs an arithmetic circuit to access the block-diagonal matrix and so the procedure is intricate. 
The problem setup in this paper has slightly different from the problem in the previous one. 
This paper considers the construction of the block-encoding of the matrix function under the assumption that we have a block-encoding of $A$ 
(Note that the block-encoding of $A$ can be constructed by the oracles for $A$ \cite{GSLW2018}). 
But the strategy of this paper is similar to the previous one, 
that is, we consider the block-diagonal matrix and inverse of it. 
We use QSVT to obtain the block-encoding of the inverse. 
From the simplicity of the block-encoding framework, the method is also simple construction.

In the recent study \cite[Section 5.1]{TAWL2020}, a method for matrix functions based on integrals has been proposed. 
They consider a block-encoding of an inverse of a block-diagonal matrix and construct a block-encoding 
of a linear combination of the form $\sum_k w_k(z_kI - A)^{-1}$ by the similar circuit of LCU. 
We describe the difference in methods below. 
First, we consider general matrix functions, while they mainly consider the matrix exponential function. 
Second, their method uses Gauss–Legendre quadrature.  
On the other hand, we  consider a method based on the trapezoidal rule and an adaptable method for the matrix function approximations by various quadratures. 
Specifically, approximation by trapezoidal rule is discussed in \Cref{subsec:matfunc-circle}, 
and an approximation of the form $\sum_k w_k(z_kI + y_kA)^{-1}$ is discussed in \Cref{subsec:notcircle}. 
The form $\sum_k w_k(z_kI + y_kA)^{-1}$ arises in various approximations of the matrix function, 
for example, Gauss–Legendre quadrature, the double exponential formula \cite{TM74}. 
Third, our algorithm describes the construction of it more explicitly. 
Further, the cost and parameter (in the block-encoding) of the method in \Cref{subsec:notcircle} are described in detail. 

In \cite{TAWL2020}, another interesting formulation of the matrix function has been discussed. Specifically, it is to use a relation $f(y^{-1}) = g(y)$. 
They considered an approximation of $g$ by the Chebyshev polynomials and proposed a quantum algorithm using QSVT. 

\subsection{Overview and notations}
The paper is structured as follows. 
In \Cref{sec:tools}, we review the definitions of the block-encoding and the state-preparation pair, 
which is a pair of unitary operations for representing the coefficients of the linear combination of a matrix. 
We also describe some basic matrix operations. 
These discussions are used to construct the block-encodings of the block-diagonal matrix and its inverse. 
In \Cref{sec:MainAlgorithm}, we construct a framework for implementing a linear combination of the inverse of each diagonal block in the block-diagonal matrix 
under a given block-encoding of it and state-preparation pair with respect to the coefficients. 
\Cref{sec:MatrixFunction} describes the use of this framework and shows quantum algorithms for the matrix functions in terms of the block-encoding. 
Finally, we conclude this paper in \Cref{sec:conclusion}. 

We introduce several notations.  
We say that $A$ is an $n$-qubit matrix if the size of $A$ is $2^n \times 2^n$. 
$I_n$ denotes the $n$-qubit identity matrix. Throughout this paper, $M=2^m$, where $m$ is a positive integer. 
For $d_0,d_1,\dots,d_{M-1} \in \C$ and matrices $B_0,B_1,\dots,B_{M-1} \in \C^{N \times N}$, 
$\diag(d_0,d_1,\dots,d_{M-1})$ and $\diag(B_0,B_1,\dots,B_{M-1})$ denote a diagonal matrix and a block diagonal matrix, 
respectively. $\vec{0}$ denotes the zero vector. 
$\sigma_x, \sigma_y$ and $\sigma_z$ are the Pauli matrices. 
The swap operation that swaps $i$-th and $j$-th qubit is denoted as $\SWAP_i^j$. 
Let $\SWAP_{a,b} = \prod_{i=1}^a \SWAP_{b+i}^i$. 
For simplicity, in the analysis of the complexity, we count the uses of $U^\dag$ operation, the controlled $U$ operation, and the controlled $U^\dag$ operation as the unitary operation $U$. 
$\norm{ \cdot }$ denotes the spectral norm and $\norm{\cdot}_1$ denotes $1$--norm.  
We denote $\ket{0^a}$ by $\ket{0^a} = \ket{0}^{\otimes a} = \underbrace{\ket{0} \otimes \dots \otimes \ket{0}}_{a \textrm{ times} }$.

\section{Preliminaries}\label{sec:tools}
In this section, we review the block-encoding described in \cite{SGJ19,GSLW2018}. 
Block-encoding is a good framework for implementing matrix arithmetic on quantum computers. 
Using the block-encoding framework, 
we can construct a quantum algorithm to perform matrix arithmetic by combining basic operations and given block-encodings.
Furthermore, we can derive the cost, error, and number of ancilla qubits from the parameters in the given block-encodings. 

In the later section, to construct the block-encoding of (the approximation of) the matrix functions, 
we consider a block-encoding matrix $\sfA$. 
The content in this section is used to obtain the block-encodings of this block-diagonal matrix $\sfA$ and its inverse $\sfA^{-1}$.

\subsection{Block-encoding}\label{subsec:defofblockencoding}

In quantum algorithms for linear algebra, it is often necessary to consider a unitary operation $U$ that implements a non-unitary matrix $A$
such that $U\ket{0^a}\ket{\psi} \approx \frac{1}{\alpha} \ket{0^a}A\ket{\psi} + \ket{\Psi_{\perp}}$ on quantum computers, 
where $\ket{\psi}$ is an $n$-qubit state and $\ket{\Psi_{\perp}}$ is the state such that $(\gaiseki{0^a}{0^a} \otimes I_n)\ket{\Psi_{\perp}} = \vec{0}$. 
If we have the unitary operation $U$, then we can obtain a state close to $A\ket{\psi}/\norm{A\ket{\psi}}$ with high probability 
by using amplitude amplification \cite{BHMT2002}. 
Focusing on the structure of the unitary operation $U$, one can see that the top-left block closes to $A/\alpha$, that is, 
\begin{align}
  U \approx \begin{bmatrix}
  A/\alpha & \cdot \\
  \cdot     & \cdot
\end{bmatrix},
\end{align}
where $\cdot$ represents other matrices. 
Block-encoding enacts such a unitary operation $U$ on the matrix $A$ using a positive scalar $\alpha$ as a scaling factor, 
the number of ancilla qubits $a$, and the error factor $\epsilon$. 
The definition of block-encoding is as follows. 

\begin{defi}[Block-encoding {\cite[Definition 43]{GSLW2018}}]
  Suppose that $A$ is an $n$-qubit matrix, $\alpha, \epsilon \in \R_+$, and $a \in \N$. 
  Then, we say that the $n+a$-qubit unitary operation $U$ is the $(\alpha,a,\epsilon)$-block-encoding of $A$ if
  \begin{align}
    \bigBnorm{A-\alpha(\bra{0^a} \otimes I_n)U(\ket{0^a} \otimes I_n)} \le \epsilon. 
    \label{eqdefi:be}
  \end{align}  
\end{defi}

For simplicity, we assume that $\norm{A} \le \alpha$ for the block-encoded matrix in a given block-encoding.
As implementation of the block-encoding of $A$ on quantum computers, some ways have been considered. 
When the oracle for the elements and the oracle for the nonzero positions of $A$ are given, 
the quantum circuit of the block-encoding of a sparse matrix $A$ can be performed efficiently\cite[Lemma 48]{GSLW2018}.  
In addition, if $A$ is described as a linear combination of unitaries (such as Pauli-matrices), 
then the LCU circuit on it becomes the block-encoding of $A$ (see also, \cite[Section II.A]{DMWL2020}).
From this background, in this study, we assume that the block-encoding of matrix $A$ is given. 
Therefore, in the analysis of the complexity, we mainly focus on the number of uses of the block-encoding of matrix $A$ and other one- and two-qubit gates. 

We mention to same of clear properties of block-encodings in the below. 
If $U$ is an $(\alpha, a, \epsilon)$-block-encoding of $A$, then $U$ is also the $(c\alpha, a, c\epsilon)$-block-encoding of $cA$, where $c$ is a positive constant. 
If a block-encoded matrix is a unitary, then the block-encoding is called trivial block-encoding. 
When $U$ is $(\alpha, a, \epsilon)$-block-encoding of $A$, $I_b \otimes U \ (b \ge 0)$ is $(\alpha, a+b, \epsilon)$-block-encoding of $A$.

\subsection{State preparation pair}

We show a pair of unitary operations, which is state preparation method, 
for representing the coefficients of the given linear combination of matrices. 
This ``state-preparation pair'' is used in block-encoding for the linear combination of matrices. 
The state-preparation pair is defined as follows.

\begin{defi}[State-preparation pair {\cite[Definition 51]{GSLW2018}}]
  Let $m$ be a positive integer and let $t$ be an integer such that $t \le  2^m-1$. 
  Suppose that $\vec{v} = (v_0, v_1, \dots, v_{t-1}) \in \C^t$ and $\norm{\vec{v}}_1 \le \mu$. 
  We say that the pair of unitaries $(V_L, V_R)$ is the $(\mu, m, \delta)$-state-preparation pair for a nonzero vector $\vec{v}$ if 
  $V_L\ket{0^m} = \sum_{j=0}^{2^m-1}c_j\ket{j}$ and $V_R\ket{0^m} =\sum_{j=0}^{2^m-1}d_j\ket{j}$ 
  such that $\sum_{j=0}^{t-1}\abs{\mu(c_j^\ast d_j) - v_j} \le \delta$ and $c_j^\ast d_j = 0$ holds for all $j=t,t+1,\dots,2^m-1$. 
\end{defi}


For example, a pair of unitaries $(V_L, V_R)$ such that $V_L\ket{0^m} = \sqrt{\norm{\vec{v}}_1^{-1}}\sum_j \sqrt{v_j^\ast}\ket{j}$ and $V_R\ket{0^m} = \sqrt{\norm{\vec{v}}_1^{-1}}\sum_j \sqrt{v_j}\ket{j} $ 
for an nonzero vectors $\vec{v} \in \C^{2^m}$ is a $(\norm{\vec{v}}_1, m, 0)$-state-preparation-pair for the vector $\vec{v}$. 
We can implement such unitary operations with $O(2^m)$ gates at least using the method in \cite{SBM2006}. 
In this paper, for simplicity, we assume that the state-preparation-pair for a given nonzero vector can be implemented without error by using the method in \cite{SBM2006}.

\subsection{Basic operation}\label{subsec:lincomb-prod-tensor}
In this subsection,
we explain the block-encodings of the product, tensor product, and linear combination of matrices.
In the last of this subsection (\Cref{subsubsec:BELCTensor}),
we construct the block-encoding on the linear combination of the tensor products by combining the block-encodings on the tensor product and the linear combination.
In particular, this block-encoding is used to construct the block-encoding of the block-diagonal matrix that is constructed from the approximation of matrix functions.

\subsubsection{Product}\label{subsubsec:product}
When the block-encodings of $A$ and $B$ are given,
the block-encoding of the product $AB$ of these matrices can be constructed as the product of the given block-encodings \cite{GSLW2018}.

\begin{prop}[Product of two block-encoded matrices {\cite[Lemma 53]{GSLW2018}}] \label{prop:productbe}
  If $U_A$ is an $(\alpha, a, \epsilon_A)$-block-encoding of an $n$-qubit matrix $A$ and
  $U_B$ is a $(\beta,  b, \epsilon_B)$-block-encoding of an $n$-qubit matrix $B$, then
  $(I_b\otimes U_A)(I_a \otimes U_B)$ is an $(\alpha\beta, a+b, \alpha\epsilon_B + \beta\epsilon_A)$-block-encoding of $AB$.
\end{prop}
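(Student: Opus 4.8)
The plan is to verify the block-encoding condition \eqref{eqdefi:be} for the candidate unitary $W := (I_b \otimes U_A)(I_a \otimes U_B)$ with scaling factor $\alpha\beta$, ancilla count $a+b$, and error $\alpha\epsilon_B + \beta\epsilon_A$. First I would fix the ordering of registers: reading $W$ as acting on $(a+b)+n$ qubits, say the first $b$ ancillas belong to the $U_B$-side and the next $a$ to the $U_A$-side, so that $I_b \otimes U_A$ acts on the last $a+n$ qubits and $I_a \otimes U_B$ acts on the last $b$... — actually the cleaner bookkeeping is to note $(I_a \otimes U_B)$ acts on the $b$-ancilla register together with the $n$-qubit system, leaving the $a$-ancilla register untouched, and symmetrically for $(I_b \otimes U_A)$. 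Then I would compute the compressed operator $\widetilde{W} := (\bra{0^{a+b}} \otimes I_n) W (\ket{0^{a+b}} \otimes I_n)$ by inserting $\ket{0^{a+b}} = \ket{0^a}\ket{0^b}$ and pushing the projections through.

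The key algebraic step is that projecting onto $\ket{0^a}\ket{0^b}$ on both sides factors: the $\bra{0^a}\dots\ket{0^a}$ projection applied to $I_b \otimes U_A$ yields $I_b \otimes \big((\bra{0^a}\otimes I_n)U_A(\ket{0^a}\otimes I_n)\big) =: I_b \otimes \widetilde{A}$, while the $\bra{0^b}\dots\ket{0^b}$ projection applied to $I_a \otimes U_B$ yields $\widetilde{B}$ on the system register (with the $I_a$ factor killed by its own projection at the appropriate stage). Carrying this out carefully — using that the $a$-register is idle during $U_B$ and the $b$-register is idle during $U_A$, so the two projections commute past the factor they don't touch — gives $\widetilde{W} = \widetilde{A}\,\widetilde{B}$, where $\|A - \alpha\widetilde{A}\| \le \epsilon_A$ and $\|B - \beta\widetilde{B}\| \le \epsilon_B$ by hypothesis. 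This is the one place to be careful: one must justify that $(\bra{0^a}\otimes I_{b+n})(I_b \otimes U_A)(\ket{0^a}\otimes I_{b+n}) = I_b \otimes \widetilde A$ exactly (no cross terms), which follows because $I_b \otimes U_A$ does not couple the $a$-register to anything, and similarly for the other factor.

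Finally I would bound $\|AB - \alpha\beta\,\widetilde{W}\| = \|AB - \alpha\beta\,\widetilde{A}\widetilde{B}\|$ by the standard add-and-subtract trick:
\begin{align}
  \bignorm{AB - \alpha\beta\,\widetilde{A}\widetilde{B}}
  &\le \bignorm{AB - A\cdot\beta\widetilde{B}} + \bignorm{A\cdot\beta\widetilde{B} - \alpha\widetilde{A}\cdot\beta\widetilde{B}} \nonumber \\
  &\le \norm{A}\,\bignorm{B - \beta\widetilde{B}} + \beta\bignorm{\widetilde{B}}\,\bignorm{A - \alpha\widetilde{A}} \nonumber \\
  &\le \alpha\epsilon_B + \beta\epsilon_A,
\end{align}
using submultiplicativity of the spectral norm, the assumption $\norm{A} \le \alpha$ noted after the definition of block-encoding, and $\beta\|\widetilde{B}\| \le \|B\| + \|B - \beta\widetilde{B}\| \le \|B\|$... — more simply, $\|\widetilde{B}\| \le \|U_B\| = 1$ since $\widetilde{B}$ is a compression of a unitary, so $\beta\|\widetilde{B}\| \le \beta$. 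That completes the verification; $W$ is a product of two unitaries hence unitary, and it acts on $a+b+n$ qubits, so it is an $(\alpha\beta, a+b, \alpha\epsilon_B+\beta\epsilon_A)$-block-encoding of $AB$.

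I expect the main obstacle to be purely notational rather than mathematical: setting up the tensor-factor conventions so that "$(I_b\otimes U_A)(I_a\otimes U_B)$" unambiguously denotes an operator on the same $a+b+n$ qubits and the projections $\bra{0^{a+b}}$ land on the intended registers. Once the convention is pinned down, the factorization $\widetilde W = \widetilde A \widetilde B$ and the triangle-inequality bound are routine. (Indeed this is exactly \cite[Lemma 53]{GSLW2018}, so the proof is a matter of recording the standard argument.)
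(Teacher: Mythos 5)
Your proof is correct and follows the same route the paper relies on: the paper's ``proof'' is just a citation to \cite[Lemma 53]{GSLW2018}, and the computation it carries out immediately after the proposition (showing the compressed operator factorizes as $\widetilde{A}\widetilde{B}$ once the register conventions are fixed, there via explicit $\SWAP$ gates) is exactly your key step, followed by the same add-and-subtract triangle-inequality bound using $\norm{A}\le\alpha$ and $\norm{\widetilde{B}}\le 1$. The only blemish is the momentary slip $\beta\norm{\widetilde{B}}\le\norm{B}+\norm{B-\beta\widetilde{B}}\le\norm{B}$, which you immediately and correctly replace by $\norm{\widetilde{B}}\le\norm{U_B}=1$.
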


\begin{proof}
  See \cite[Lemma 53]{GSLW2018}.
\end{proof}

Note that the meaning of notation of the tensor (Kronecker) product in the above proposition is different from
the normal notation.
Namely, $I_a$ and $I_b$ act on the ancilla qubits of $U_A$ and $U_B$, respectively.
When we write the equation using the normal notation of tensor products,
we have to insert the swap gates among the operation of the block-encodings.
We describe this in the following. This is based on the description in \cite{CB2020}.
Let $\text{SWAP}_i^j$ be the swap operation for $i$-th and $j$-th qubit and let
$\text{SWAP}_{a, b} := \prod_{i=1}^a \text{SWAP}_{b+i}^i$.
This swap gates satisfies
\begin{align}
\text{SWAP}_{a, b}
\Bigl( \ket{0^a} \otimes I_b \Bigr)
&= \prod_{i=1}^a \text{SWAP}_{b+i}^i \Bigl( \ket{0^a} \otimes I_b \Bigr) \notag \\
&= \prod_{i=1}^{a-1} \text{SWAP}_{b+i}^i \Bigl(\ket{0^{a-1}} \otimes I_b \otimes \ket{0} \Bigr) \notag \\
&= I_b \otimes \ket{0^a}.
\end{align}
Thus,
\begin{align}
  &
  \Bigl( \bra{0^{a+b}} \otimes I_n \Bigr)
  ( I_b \otimes U_A )
  (\text{SWAP}_{a, b} \otimes I_n)
  ( I_a \otimes U_B )
  \Bigl( \ket{0^{b+a}} \otimes I_n \Bigr) \notag \\
  &=
  \Bigl\{  \bra{0^b} \otimes \bigl( (\bra{0^a} \otimes I_n)U_A \bigr) \Bigr\}
  (\text{SWAP}_{a, b} \otimes I_n)
  \Bigl\{ \ket{0^a} \otimes \bigl( U_B(\ket{0^b} \otimes I_n)  \bigr) \Bigr\} \notag \\
  &=
  \Bigl( (\bra{0^a} \otimes I_n)U_A \Bigr)
  \Bigl(\bra{0^b} \otimes I_{a + n} \Bigr)
  (\text{SWAP}_{a, b} \otimes I_n)
  \Bigl(\ket{0^a} \otimes I_{b + n} \Bigl)
  \Bigl( U_B(\ket{0^b} \otimes I_n) \Bigr) \notag \\
  &=
  \Bigl( (\bra{0^a} \otimes I_n)U_A(\ket{0^a} \otimes I_n) \Bigr)
  \Bigl( (\bra{0^b} \otimes I_n)U_B(\ket{0^b} \otimes I_n) \Bigr),
\end{align}
where we used
$(\bra{0^b} \otimes I_{a + n})(\text{SWAP}_{a, b} \otimes I_n)(\ket{0^a} \otimes I_{b+n})
= (\bra{0^b} \otimes I_{a+n})(I_b \otimes \ket{0^a} \otimes I_n)
= (\ket{0^a} \otimes I_n)(\bra{0^b} \otimes I_n) $.
Thus, we can see that $( I_b \otimes U_A )(\text{SWAP}_{a, b} \otimes I_n)( I_a \otimes U_B )$ is the block-encoding of $AB$.
The circuit of the block-encoding for the product is shown in \Cref{qc:BEproduct}.

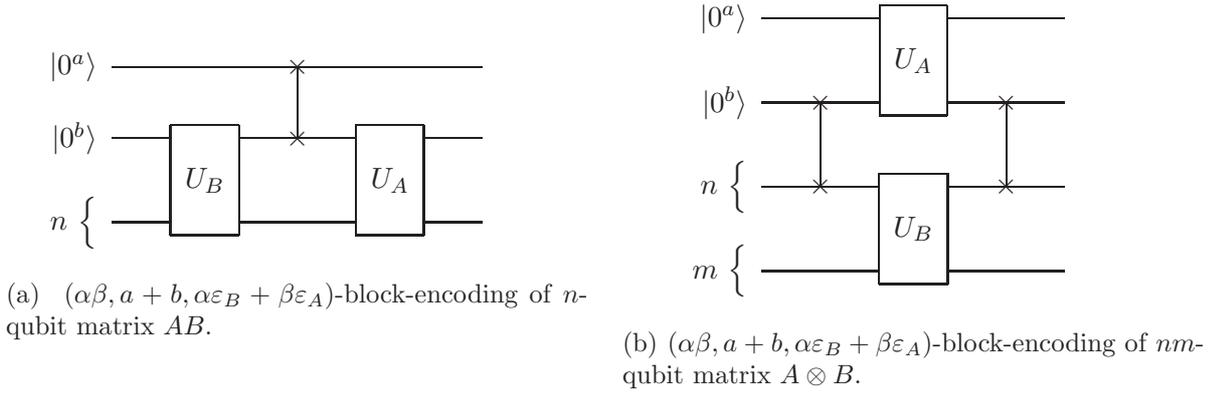
\begin{figure}[htbp]
  \begin{tabular}{cc}
    \begin{minipage}{0.45\hsize}
      \centering
      \[
        \Qcircuit @C=2em @R=2em {
        \lstick{\ket{0^a}}     & \qw                & \qswap     & \qw                & \qw \\
        \lstick{\ket{0^b}}     & \multigate{1}{U_B} & \qswap\qwx & \multigate{1}{U_A} & \qw \\
        \lstick{n \  \Bigl\{ } &  \ghost{U_B}        & \qw        & \ghost{U_A}        & \qw 
        }
      \]
      \subcaption{
        \ $(\alpha\beta, a+b, \alpha\epsilon_B + \beta\epsilon_A)$-block-encoding of $n$-qubit matrix $AB$.
      }
      \label{qc:BEproduct}
    \end{minipage}
    \hspace{2mm}
    \begin{minipage}{0.45\hsize}
      \centering
        \[
        \Qcircuit @C=2em @R=2em {
        \lstick{\ket{0^a}} & \qw        & \multigate{1}{U_A} & \qw    & \qw \\
        \lstick{\ket{0^b}} & \qswap     & \ghost{U_A}        & \qswap & \qw \\
        \lstick{n \ \Bigl\{ }         & \qswap\qwx & \multigate{1}{U_B} & \qswap\qwx & \qw \\
        \lstick{m \ \Bigl\{ }         & \qw        & \ghost{U_B}        & \qw    & \qw
        }
        \]
        \subcaption{
          $(\alpha\beta, a+b, \alpha\epsilon_B + \beta\epsilon_A)$-block-encoding of $nm$-qubit matrix $A \otimes B$.
        }
        \label{qc:BEtensorproduct}
    \end{minipage}
  \end{tabular}
  \caption{
    Block encodings for \textit{product} and \textit{tensor product} of $A$ and $B$,
    where $U_A$ is an $(\alpha, a, \epsilon_A)$-block-encoding of an $n$-qubit matrix $A$ and
    $U_B$ is a $(\beta, b, \epsilon_B)$-block-encoding of an $m$-qubit matrix $B$ ($n = m$ for (a)).
    The circuits (a) and (b) is related to \Cref{prop:productbe,prop:tensorproductbe}, respectively.
  }
\end{figure}

\subsubsection{Tensor product}
Similar to the case of the product,
when the block-encodings of $A$ and $B$ are given,
the block-encoding of the tensor (Kronecker) product $A \otimes B$ of these matrices can be constructed as the
tensor product of the given block-encodings.

\begin{prop}[\textit{Tensor} product of block-encoded matrices]\label{prop:tensorproductbe}
  If $U_A$ is an $(\alpha, a, \epsilon_A)$-block-encoding of an $n$-qubit matrix $A$ and
  $U_B$ is a $(\beta,  b, \epsilon_B)$-block-encoding of an $m$-qubit matrix $B$, then
  $(I_b \otimes (U_A \otimes I_m))(I_a \otimes (I_n \otimes U_B))$ is
  an $(\alpha\beta, a+b, \alpha\epsilon_B + \beta\epsilon_A)$-block-encoding of $A\otimes B$.
\end{prop}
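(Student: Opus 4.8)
The plan is to reduce the tensor-product statement to the product statement (\Cref{prop:productbe}) that has already been established, using the observation that $A \otimes B = (A \otimes I_m)(I_n \otimes B)$. So first I would record that $U_A \otimes I_m$ is a block-encoding of $A \otimes I_m$ and $I_n \otimes U_B$ is a block-encoding of $I_n \otimes B$, each with the appropriate scaling factor, ancilla count, and error. For the first: since $U_A$ is an $(\alpha, a, \epsilon_A)$-block-encoding of $A$, applying $(\bra{0^a} \otimes I_n) (\cdot) (\ket{0^a} \otimes I_n)$ to $U_A \otimes I_m$ yields $\big((\bra{0^a}\otimes I_n)U_A(\ket{0^a}\otimes I_n)\big) \otimes I_m$, which is within $\epsilon_A$ of $(A/\alpha) \otimes I_m$ in spectral norm because $\norm{X \otimes I_m} = \norm{X}$; hence $U_A \otimes I_m$ is an $(\alpha, a, \epsilon_A)$-block-encoding of $A \otimes I_m$ (here the ancilla qubits are the $a$ qubits of $U_A$, and the ``system'' is the $n+m$ qubits). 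Symmetrically, $I_n \otimes U_B$, read with the $b$ ancilla qubits of $U_B$ in the encoding role, is a $(\beta, b, \epsilon_B)$-block-encoding of $I_n \otimes B$ on the $n+m$ system qubits.

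Next I would invoke \Cref{prop:productbe} with the two $(n+m)$-qubit matrices $A \otimes I_m$ and $I_n \otimes B$ and their block-encodings $U_A \otimes I_m$ and $I_n \otimes U_B$. The proposition gives that $(I_b \otimes (U_A \otimes I_m))(I_a \otimes (I_n \otimes U_B))$ is an $(\alpha\beta, a+b, \alpha\epsilon_B + \beta\epsilon_A)$-block-encoding of $(A \otimes I_m)(I_n \otimes B) = A \otimes B$, which is exactly the claimed statement. The error bound $\alpha\epsilon_B + \beta\epsilon_A$ comes straight out of \Cref{prop:productbe} (with the roles: the ``$A$'' of the proposition is $A\otimes I_m$ with scaling $\alpha$ and error $\epsilon_A$, the ``$B$'' is $I_n\otimes B$ with scaling $\beta$ and error $\epsilon_B$).

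The only genuinely fiddly point — and the one I'd be most careful about — is bookkeeping of which qubits the various identity factors act on, since the paper's convention (as emphasized right after \Cref{prop:productbe}) is that $I_a$, $I_b$ denote identities on \emph{ancilla} registers rather than the leftmost qubits, so the displayed operator $(I_b \otimes (U_A \otimes I_m))(I_a \otimes (I_n \otimes U_B))$ is schematic and a fully rigorous version needs the $\SWAP_{a,b}$ insertions exactly as in the product case. I would therefore either state explicitly that the circuit is understood in that convention, or write out the one-line identity $(\bra{0^{a+b}}\otimes I_{n+m})(\cdots)(\ket{0^{b+a}}\otimes I_{n+m}) = \big((\bra{0^a}\otimes I_{n+m})(U_A\otimes I_m)(\ket{0^a}\otimes I_{n+m})\big)\big((\bra{0^b}\otimes I_{n+m})(I_n\otimes U_B)(\ket{0^b}\otimes I_{n+m})\big)$ mirroring the computation already done for the product, and then apply the spectral-norm estimate. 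Everything else is routine: norms of tensor products with identity, the triangle inequality in the form packaged by \Cref{prop:productbe}, and the algebraic identity $A\otimes B = (A\otimes I_m)(I_n\otimes B)$.

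So concretely the proof is short:

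\begin{proof}
Write $A \otimes B = (A \otimes I_m)(I_n \otimes B)$. Since $U_A$ is an $(\alpha, a, \epsilon_A)$-block-encoding of $A$ and $\norm{X \otimes I_m} = \norm{X}$, the operation $U_A \otimes I_m$ is an $(\alpha, a, \epsilon_A)$-block-encoding of the $(n+m)$-qubit matrix $A \otimes I_m$, where the encoding ancillas are the $a$ qubits of $U_A$. Similarly, $I_n \otimes U_B$ is a $(\beta, b, \epsilon_B)$-block-encoding of the $(n+m)$-qubit matrix $I_n \otimes B$. Applying \Cref{prop:productbe} to these two $(n+m)$-qubit matrices and their block-encodings, the operation obtained by composing them (with the swap gates on the ancilla registers as in the product case, here written $(I_b \otimes (U_A \otimes I_m))(I_a \otimes (I_n \otimes U_B))$) is an $(\alpha\beta, a+b, \alpha\epsilon_B + \beta\epsilon_A)$-block-encoding of $(A \otimes I_m)(I_n \otimes B) = A \otimes B$.
\end{proof}
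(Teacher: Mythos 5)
Your proof is correct and follows essentially the same route as the paper: both establish that $U_A \otimes I_m$ and $I_n \otimes U_B$ are block-encodings of $A \otimes I_m$ and $I_n \otimes B$ respectively (the paper by the same spectral-norm-of-tensor-with-identity computation you sketch), and then apply \Cref{prop:productbe} to the factorization $A \otimes B = (A \otimes I_m)(I_n \otimes B)$. Your extra remarks on the swap-gate bookkeeping match the discussion the paper places immediately after the proposition rather than inside its proof.
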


\begin{proof}
  The following shows that $U_A \otimes I_m$ is the $(\alpha, a, \epsilon_A)$-block-encoding of $A \otimes I_m$:
  \begin{align}
    &\Biggl\|
      A \otimes I_m - \alpha (\bra{0^a} \otimes I_n \otimes I_m)(U_A \otimes I_m)(\ket{0^a} \otimes I_n \otimes I_m)
    \Biggr\| \\
    &= \Biggl\|
      A \otimes I_m - \alpha (\bra{0^a} \otimes I_n)U_A(\ket{0^a}\otimes I_n)\otimes I_m
    \Biggr\| \le \epsilon_A.
  \end{align}
  Similarly, $I_n \otimes U_B$ is the $(\beta, b, \epsilon_B)$-block-encoding of $I_n \otimes B$.
  Thus, from $A \otimes B = (A \otimes I_m)(I_n \otimes B)$ and \Cref{prop:productbe},
  $(I_b \otimes (U_A \otimes I_m) )(I_a \otimes (I_n \otimes U_B) )$ is
  the $(\alpha\beta, a+b, \alpha\epsilon_B+\beta\epsilon_A)$-block-encoding of $A\otimes B$.
\end{proof}
The same discussion when there is no assumption on the norm of the block-encoded matrices ($\norm{A} \le \alpha$) was described in  \cite{CB2020}.
Similarly to the discussion in \Cref{subsubsec:product},
the meaning of the tensor products in the above proposition is different from the normal notation.
When we use the normal notation, we have to insert the swap gates before and after $U_A \otimes U_B$.
Let $\text{SWAP}_{b, n} = \prod_{i=1}^b \text{SWAP}_{i+n}^i$.
Then, $\text{SWAP}_{b, n} \ket{0^b} \otimes I_n = I_n \otimes \ket{0^b}$ holds. Thus,
\begin{align}
  &\left( \bra{0^{a+b}} \otimes I_{n+m} \right)
    \left( I_a \otimes \text{SWAP}_{b, n} \otimes I_m \right)\left(U_A \otimes U_B \right)\left( I_a \otimes \text{SWAP}_{b, n} \otimes I_m \right)
    \left( \ket{0^{a+b}} \otimes I_{n+m} \right) \notag \\
  &=
  \left( \bra{0^a} \otimes I_n \otimes \bra{0^b} \otimes I_m \right)
  \left(U_A \otimes U_B \right)
  \left( \ket{0^a} \otimes I_n \otimes \ket{0^b} \otimes I_m \right)  \notag \\
  &=
  ( \bra{0^a} \otimes I_n )U_A(\ket{0^a} \otimes I_n )
  \otimes
  ( \bra{0^b} \otimes I_m) U_B(\ket{0^b} \otimes I_m).
\end{align}
Therefore, we can see that $\left( I_a \otimes \text{SWAP}_{b, n} \otimes I_m \right)\left(U_A \otimes U_B \right)\left( I_a \otimes \text{SWAP}_{b, n} \otimes I_m \right)$
is the block-encoding of $A \otimes B$.
The circuit of the block-encoding for the tensor product is shown in \Cref{qc:BEtensorproduct}.

\subsubsection{Linear combination}
Suppose that the block-encodings of the matrices $A_0,A_1,\dots,A_{M-1}$ are given. Then, a block-encoding
of the linear combination of these matrices can be constructed \cite[Lemma 52]{GSLW2018} by a circuit similar to a well-known LCU framework \cite{CW2012,BCCKS2015}.
However, the method in \cite[Lemma 52]{GSLW2018} assumes
that the scaling factors (denoted by $\alpha_0, \alpha_1,\dots,\alpha_{M-1}$ in the following) of the block-encodings are equal.
For our purpose, in the following proposition, we describe a method in which such a restriction is removed by adjusting the state-preparation pair.
The circuit of the block-encoding for the linear combination is shown in \Cref{qc:BELC}.

\begin{prop}[Linear combination of block-encoded matrices]\label{prop:linearcombination}
  Let $A_j$ be an $n$-qubit matrix and let $A = \sum_{j=0}^{t-1} y_j A_j$, where $y_0, y_1, \dots, y_{t-1}\in \C$.
  If $U_j$ is an $(\alpha_j, a_j, \epsilon_j)$-block-encoding of  $A_j$ and
  $(P_L, P_R)$ is a $(\beta, b, 0)$-state-preparation pair of $(y_0\alpha_0, y_1\alpha_1, \dots, y_{y-1}\alpha_{t-1})$,
  then
  \begin{align}
    \widetilde{W} =
    \Biggl(P_L^\dag \otimes I_{a+n}\Biggr)
    \left( \sum_{j=0}^{t-1}\gaiseki{j}{j} \otimes (I_{a - a_j} \otimes U_j) + \sum_{j=t}^{2^b - 1} \gaiseki{j}{j}\otimes I_{a+n} \right)
    \Biggl(P_R \otimes I_{a+n}\Biggr)
  \end{align}
  is
  a $(\beta, a+b, \sum_{j=0}^{t-1}\abs{y_j}\epsilon_j )$-block-encoding of $A = \sum_{j=0}^{t-1} y_j A_j$,
  where $a = \max_j a_j$.
\end{prop}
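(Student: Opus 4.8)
The plan is to verify the block-encoding condition \eqref{eqdefi:be} directly by computing $(\bra{0^{a+b}}\otimes I_n)\widetilde{W}(\ket{0^{a+b}}\otimes I_n)$ and comparing it with $A/\beta$. First I would introduce the middle operator $\Lambda := \sum_{j=0}^{t-1}\gaiseki{j}{j}\otimes(I_{a-a_j}\otimes U_j) + \sum_{j=t}^{2^b-1}\gaiseki{j}{j}\otimes I_{a+n}$, so that $\widetilde{W} = (P_L^\dag\otimes I_{a+n})\,\Lambda\,(P_R\otimes I_{a+n})$. Since $P_L,P_R$ are unitary, $\widetilde{W}$ is unitary, so the only thing to check is the norm bound. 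Using the state-preparation-pair data, write $P_R\ket{0^b} = \sum_j d_j\ket{j}$ and $P_L\ket{0^b} = \sum_j c_j\ket{j}$; then $(\bra{0^b}P_L^\dag)\otimes I = \sum_j c_j^\ast\bra{j}\otimes I$ and similarly for $P_R$. Sandwiching $\ket{0^a}$ and $\bra{0^a}$ on the ancilla register as well, the cross terms $\bra{j}\Lambda\ket{k}$ for $j\neq k$ vanish because $\Lambda$ is block-diagonal in the $b$-qubit register, leaving
\begin{align}
(\bra{0^{a+b}}\otimes I_n)\widetilde{W}(\ket{0^{a+b}}\otimes I_n)
= \sum_{j=0}^{2^b-1} c_j^\ast d_j\,(\bra{0^a}\otimes I_n)\bigl[\Lambda_j\bigr](\ket{0^a}\otimes I_n),
\end{align}
where $\Lambda_j = I_{a-a_j}\otimes U_j$ for $j<t$ and $\Lambda_j = I_{a+n}$ for $j\ge t$.

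Next I would evaluate each term. For $j\ge t$ the factor is $c_j^\ast d_j$ times $I_n$, and by the state-preparation-pair definition $c_j^\ast d_j = 0$ for all such $j$, so these terms drop out entirely. For $j<t$, since $U_j$ is an $(\alpha_j,a_j,\epsilon_j)$-block-encoding of $A_j$ and $a_j \le a$, the earlier remark that $I_b\otimes U$ remains a block-encoding (applied with $b = a-a_j$) gives $(\bra{0^a}\otimes I_n)(I_{a-a_j}\otimes U_j)(\ket{0^a}\otimes I_n) = (\bra{0^{a_j}}\otimes I_n)U_j(\ket{0^{a_j}}\otimes I_n)$, which is $\epsilon_j/\alpha_j$-close (in spectral norm) to $A_j/\alpha_j$; call it $\tilde A_j/\alpha_j$ with $\|\tilde A_j - A_j\|\le\epsilon_j$. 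Hence
\begin{align}
\beta\cdot(\bra{0^{a+b}}\otimes I_n)\widetilde{W}(\ket{0^{a+b}}\otimes I_n)
= \sum_{j=0}^{t-1} \frac{\beta\,c_j^\ast d_j}{\alpha_j}\,\tilde A_j.
\end{align}
The state-preparation-pair property says $\beta c_j^\ast d_j$ is (exactly, since $\delta=0$) the $j$-th entry of the target vector $(y_0\alpha_0,\dots,y_{t-1}\alpha_{t-1})$, i.e.\ $\beta c_j^\ast d_j = y_j\alpha_j$. Substituting, the sum becomes $\sum_{j=0}^{t-1} y_j\tilde A_j$.

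Finally I would bound the error: $\bigl\|A - \beta(\bra{0^{a+b}}\otimes I_n)\widetilde{W}(\ket{0^{a+b}}\otimes I_n)\bigr\| = \bigl\|\sum_{j=0}^{t-1} y_j(A_j - \tilde A_j)\bigr\| \le \sum_{j=0}^{t-1}|y_j|\,\|A_j-\tilde A_j\| \le \sum_{j=0}^{t-1}|y_j|\epsilon_j$, using the triangle inequality and submultiplicativity of the spectral norm under scalar multiplication. Together with $\widetilde{W}$ being an $(a+b)$-qubit unitary, this establishes that $\widetilde{W}$ is a $(\beta, a+b, \sum_{j=0}^{t-1}|y_j|\epsilon_j)$-block-encoding of $A$. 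I expect the only delicate point to be bookkeeping the ancilla registers correctly — in particular making sure the $I_{a-a_j}$ padding interacts with $\ket{0^a}$ as claimed and that the $b$-qubit ``selector'' register is handled consistently with the definition of the state-preparation pair (the asymmetry between $c_j$ and $d_j$, and the role of the $\delta=0$ assumption); the norm estimate itself is routine once the algebra is set up.
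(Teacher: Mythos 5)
Your proposal is correct and follows essentially the same route as the paper: expand the sandwiched operator using the block-diagonality of the selector register, invoke $c_j^\ast d_j = 0$ for $j \ge t$ and $\beta c_j^\ast d_j = y_j\alpha_j$ from the state-preparation-pair definition, and finish with the triangle inequality. The paper's proof simply compresses the first few of these steps into a single displayed equality, so your version is just a more explicit write-up of the same argument.
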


\begin{proof}
  Let $V_j = I_{a - a_j} \otimes U_j$.
  Then, $V_j$ is the $(\alpha_j, a, \epsilon_j)$-block-encoding of $A_j$.
  According to the definition of block-encoding, we have
  \begin{align}
    \bigCnorm{
     A - \beta (\bra{0^b} \otimes \bra{0^a} \otimes I_n ) \widetilde{W} (\ket{0^b} \otimes \ket{0^a} \otimes I_n )
    }
    &=
    \bigCnorm{
      A - \sum_{j=0}^{t-1} y_j \alpha_j (\bra{0^a} \otimes I_n)V_j(\ket{0^a} \otimes I_n)
     } \notag \\
     &=
     \bigCnorm{
       \sum_{j=0}^{t-1} y_j \bigl( A_j - \alpha_j (\bra{0^a} \otimes I_n)V_j(\ket{0^a} \otimes I_n) \bigr)
      } \notag \\
    &\le
    \sum_{j=0}^{t-1} \abs{y_j}  \bigCnorm{ A_j - \alpha_j (\bra{0^a} \otimes I_n)V_j(\ket{0^a} \otimes I_n) }
      \notag \\
    &\le \sum_{j=0}^{t-1}\abs{y_j}\epsilon_j.
  \end{align}
\end{proof}

\begin{figure}[htbp]
  \begin{center}
  \[
    \Qcircuit @C=1em @R=1em {
    \lstick{\ket{0}}   & \multigate{2}{P_R} & \ctrl{1} & \ctrl{1}  & \ctrl{1}  &\qw  & \cdots & &\qw & \multigate{2}{P_L^\dag} & \qw \\
    \lstick{\ket{0}}   & \ghost{P_R}        & \ctrl{1} & \ctrl{1}  & \ctrlo{1}&\qw & \cdots & &\qw & \ghost{P_L^\dag}         & \qw \\
    \lstick{\ket{0}}   & \ghost{P_R}        & \ctrl{1} & \ctrlo{1} & \ctrl{1}&\qw  & \cdots & &\qw & \ghost{P_L^\dag}         & \qw \\
    \lstick{\ket{0^a}} & \qw                & \multigate{1}{U_0} & \multigate{1}{U_1} & \multigate{1}{U_2}&\qw & \cdots & & \qw & \qw & \qw  \\
    \lstick{n \ \Bigl\{ }         & \qw                & \ghost{U_0}        & \ghost{U_1}        & \ghost{U_2}& \qw       & \cdots & & \qw & \qw & \qw
    }
  \]
  \end{center}
\caption{
  $(\beta, a+b, \sum_{j=0}^{t-1}\abs{y_j}\epsilon_j)$-block-encoding of $n$-qubit matrix $A = \sum_{j=0}^{t-1} y_j A_j$,
  where
  $(P_L,P_R)$ is a $(\beta, b, 0)$-state-preparation pair for $(y_0\alpha_0, \dots, y_{t-1}\alpha_{t-1})$,
  where $U_j$ is the $(\alpha_j, a, \epsilon_j)$-block-encoding of $A_j$.
  This circuit is related to \Cref{prop:linearcombination}.
}
\label{qc:BELC}
\end{figure}
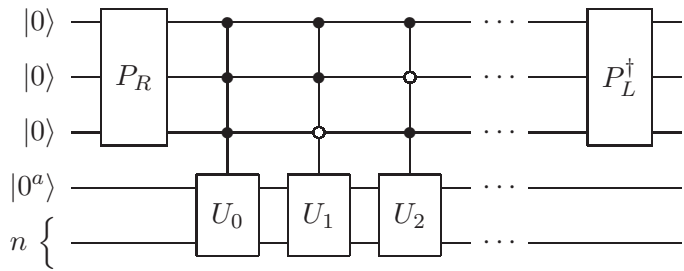

\subsubsection{Linear combination of tensor products}\label{subsubsec:BELCTensor}

We describe the block-encoding of the linear combination of tensor products of matrices in the following.
As mentioned in the beginning of this subsection, we use the following to construct the block-encoding of the block-diagonal matrix in the later section.
Specifically, we consider the block-diagonal matrix from the approximation of matrix functions
and then we show that it is decomposed to the linear combination of the tensor products.
After that, using the following proposition, we obtain the block-encoding of the block-diagonal matrix.
The circuit of the block-encoding for the linear combination is shown in \Cref{qc:BELCTensor}.

\begin{prop}[Linear combination of tensor products of block-encoded matrices]\label{prop:linearcobination_tensorproduct}
  Let $A_j$ be an $n$-qubit matrix and let $B_j$ be an $m$-qubit matrix.
  Let $C = \sum_{j=0}^{t-1} y_j (A_j \otimes B_j)$, where $y_0, y_1, \dots, y_{t-1}\in \C$.
  If
  $U_j$ is an $(\alpha_j, a_j, \epsilon_j)$-block-encoding of  $A_j$,
  $V_j$ is a $(\beta_j, b_j, \delta_j)$-block-encoding of  $B_j$,
  and
  $(P_L, P_R)$ is an $(\gamma, c, 0)$-state-preparation pair of $(y_0\alpha_0\beta_0, y_1\alpha_1\beta_1, \dots, y_{y-1}\alpha_{t-1}\beta_{t-1})$,
  then
  \begin{align}
    \widetilde{W}
    &= \Biggl(P_L^\dag \otimes I_a \otimes \SWAP_{b, n} \otimes I_m \Biggr) \notag \\
    &\qquad \left( \sum_{j=0}^{t-1}\gaiseki{j}{j} \otimes (I_{a - a_j} \otimes U_j) \otimes (I_{b - b_j} \otimes V_j) + \sum_{j=t}^{2^b - 1} \gaiseki{j}{j} \otimes I_{a+n+m} \right) \notag \\
    &\qquad\quad \Biggl(P_R \otimes I_a \otimes \SWAP_{b, n} \otimes I_m \Biggr)
  \end{align}
  is
  a $(\gamma, a+b+c, \sum_{j=0}^{t-1}\abs{y_j}(\alpha_j\delta_j + \beta_j\epsilon_j) )$-block-encoding of $C = \sum_{j=0}^{t-1} y_j (A_j \otimes B_j)$,
  where $a = \max_j a_j, b = \max_j b_j$, and $\SWAP_{b, n} = \prod_i^b \SWAP_{i+n}^i$.
\end{prop}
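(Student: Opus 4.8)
The plan is to reduce this to the two already-established building blocks: \Cref{prop:linearcobination_tensorproduct}'s little brother \Cref{prop:tensorproductbe} (tensor product of two block-encodings) and \Cref{prop:linearcombination} (linear combination of block-encodings with unequal scaling factors). First I would observe that for each fixed $j$, the tensor product $A_j \otimes B_j$ has a block-encoding obtained by the construction of \Cref{prop:tensorproductbe}: padding $U_j$ up to $a = \max_j a_j$ ancillas and $V_j$ up to $b = \max_j b_j$ ancillas (using the trivial ``$I_b \otimes U$ is a block-encoding'' padding remark from \Cref{subsec:defofblockencoding}), the operation
\begin{align}
  W_j = \Bigl( I_{a-a_j} \otimes U_j \otimes I_m \Bigr)\Bigl( \SWAP_{b,n} \otimes I_m \Bigr)\Bigl( I_{b-b_j} \otimes V_j \otimes \text{(appropriate } I) \Bigr)
\end{align}
— i.e., the circuit of \Cref{qc:BEtensorproduct} applied to the padded encodings — is an $(\alpha_j\beta_j,\, a+b,\, \alpha_j\delta_j + \beta_j\epsilon_j)$-block-encoding of $A_j \otimes B_j$, where the error estimate is exactly the one from \Cref{prop:productbe}/\Cref{prop:tensorproductbe} with $\epsilon_A = \epsilon_j$, $\epsilon_B = \delta_j$, $\alpha = \alpha_j$, $\beta = \beta_j$.

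Then I would apply \Cref{prop:linearcombination} with the matrices $A_j \otimes B_j$ (now $(n+m)$-qubit matrices), their block-encodings $W_j$ with uniform ancilla count $a+b$, scaling factors $\alpha_j\beta_j$, errors $\alpha_j\delta_j + \beta_j\epsilon_j$, and the given $(\gamma, c, 0)$-state-preparation pair $(P_L, P_R)$ for $(y_0\alpha_0\beta_0, \dots, y_{t-1}\alpha_{t-1}\beta_{t-1})$. \Cref{prop:linearcombination} then immediately yields a $(\gamma,\, (a+b)+c,\, \sum_{j=0}^{t-1}\abs{y_j}(\alpha_j\delta_j+\beta_j\epsilon_j))$-block-encoding of $C = \sum_j y_j (A_j \otimes B_j)$, which is exactly the claimed parameters. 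The main point is then to check that the explicit operator $\widetilde{W}$ written in the statement is precisely the composition produced by nesting these two constructions — i.e., that conjugating $\sum_j \gaiseki{j}{j} \otimes (I_{a-a_j}\otimes U_j)\otimes(I_{b-b_j}\otimes V_j)$ by $P_{L/R}^{(\dag)} \otimes I_a \otimes \SWAP_{b,n} \otimes I_m$ is the same as first forming each $W_j$ (with its internal $\SWAP_{b,n}$) and then running it through the LCU circuit of \Cref{prop:linearcombination}; the $\SWAP_{b,n}$ can be pulled out of the controlled sum because it is independent of $j$ and commutes with the control register.

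The routine part is the bookkeeping of which identity factor sits on which register and verifying the $\SWAP_{b,n}$ re-ordering; I expect the only genuinely delicate step to be confirming that the single ``global'' $\SWAP_{b,n}$ placed outside the controlled-$j$ sum correctly reproduces, for every branch $j$, the per-branch swap that \Cref{prop:tensorproductbe} requires to turn the ``ancilla-tensor'' convention into the ``spatial-tensor'' convention — and that on the complement branches $j \ge t$, where the operation is $I_{a+n+m}$, the outer conjugation by $\SWAP_{b,n}$ cancels against itself so that those branches still act as the identity on the encoded space. Once that is in hand, the norm bound drops out by the triangle inequality exactly as in the proof of \Cref{prop:linearcombination}, and I would simply cite \Cref{prop:tensorproductbe,prop:linearcombination} rather than reproduce their proofs.
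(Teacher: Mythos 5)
Your proposal is correct and follows essentially the same route as the paper: the paper likewise defines, for each $j$, the conjugated padded operator $U_{A_j\otimes B_j} = (I_a\otimes\SWAP_{b,n}\otimes I_m)(I_{a-a_j}\otimes U_j\otimes I_{b-b_j}\otimes V_j)(I_a\otimes\SWAP_{b,n}\otimes I_m)$, invokes \Cref{prop:tensorproductbe} to get an $(\alpha_j\beta_j, a+b, \alpha_j\delta_j+\beta_j\epsilon_j)$-block-encoding of $A_j\otimes B_j$, identifies $\widetilde{W}$ with the LCU conjugation of the resulting controlled sum (pulling the $j$-independent swaps through the control register, exactly the point you flag), and concludes by \Cref{prop:linearcombination}.
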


\begin{proof}
  We write
  \begin{align}
    W &=
    \sum_{j=0}^{t-1}\gaiseki{j}{j} \otimes
    \underbrace{
    \Bigl(I_a \otimes \SWAP_{b, n} \otimes I_m \Bigr)\Bigl(I_{a - a_j} \otimes U_j \otimes I_{b - b_j} \otimes V_j \Bigr) \Bigl(I_a \otimes \SWAP_{b, n} \otimes I_m \Bigr)
     }_{=:U_{A_j \otimes B_j}}
      \notag \\
    &\qquad + \sum_{j=t}^{2^b - 1} \gaiseki{j}{j} \otimes I_{a+n+m}.
  \end{align}
  Then, we have $\widetilde{W} = ( P_L^\dag \otimes I_{a+b+n+m} ) W  (P_R \otimes I_{a+b+n+m} )$.
  From \Cref{prop:tensorproductbe},
  $U_{A_j \otimes B_j} = \Bigl(I_a \otimes \SWAP_{b, n} \otimes I_m \Bigr)\Bigl(I_{a - a_j} \otimes U_j \otimes I_{b - b_j} \otimes V_j \Bigr) \Bigl(I_a \otimes \SWAP_{b, n} \otimes I_m \Bigr)$
  is a $(\alpha_j\beta_j, a+b, \alpha_j\delta_j + \beta_j\epsilon_j)$-block-encoding of $A_j \otimes B_j$.
  Thus, the proposition follows from \Cref{prop:linearcombination}.
\end{proof}

\begin{figure}[htbp]
    \[
      \Qcircuit @C=0.7em @R=1em {
      \lstick{\ket{0}}   & \multigate{2}{P_R} & \ctrl{1}           & \ctrl{1}           & \ctrl{1}           & \ctrl{1}           & \ctrl{1}           & \ctrl{1}           & \qw & \cdots & & \qw & \multigate{2}{P_L^\dag} & \qw \\
      \lstick{\ket{0}}   & \ghost{P_R}        & \ctrl{1}           & \ctrl{1}           & \ctrl{1}           & \ctrl{1}           & \ctrlo{1}          & \ctrlo{1}          & \qw & \cdots & & \qw & \ghost{P_L^\dag}        & \qw \\
      \lstick{\ket{0}}   & \ghost{P_R}        & \ctrl{1}           & \ctrl{3}           & \ctrlo{1}          & \ctrlo{3}          & \ctrl{1}           & \ctrl{3}           & \qw & \cdots & & \qw & \ghost{P_L^\dag}        & \qw \\
      \lstick{\ket{0^a}} & \qw                & \multigate{1}{U_0} & \qw                & \multigate{1}{U_1} & \qw                & \multigate{1}{U_2} & \qw                & \qw & \cdots & & \qw & \qw        & \qw  \\
      \lstick{\ket{0^b}} & \qswap             & \ghost{U_0}        & \qw                & \ghost{U_1}        & \qw                & \ghost{U_2}        & \qw                & \qw & \cdots & & \qw & \qswap     & \qw  \\
      \lstick{n \ \bigl\{} & \qswap\qwx         & \qw                & \multigate{1}{V_0} & \qw                & \multigate{1}{V_1} & \qw                & \multigate{1}{V_2} & \qw & \cdots & & \qw & \qswap\qwx & \qw  \\
      \lstick{m \ \bigl\{} & \qw                & \qw                & \ghost{V_0}        & \qw                & \ghost{V_1}        & \qw                & \ghost{V_2}        & \qw & \cdots & & \qw & \qw        & \qw
      }
    \]
    \caption{
      $(\gamma, a+b+c, \sum_{j=0}^{t-1}\abs{y_j}(\alpha_j\delta_j + \beta_j\epsilon_j) )$-block-encoding of $C = \sum_{j=0}^{t-1} y_j (A_j \otimes B_j)$,
      where
      $U_j$ is an $(\alpha_j, a, \epsilon_j)$-block-encoding of an $n$-qubit matrix $A_j$,
      $V_j$ is an $(\beta_j, b, \delta_j)$-block-encoding of an $m$-qubit matrix $B_j$,
      and
      $(P_L, P_R)$ is an $(\gamma, c, 0)$-state-preparation pair of $(y_0\alpha_0\beta_0, y_1\alpha_1\beta_1, \dots, y_{y-1}\alpha_{t-1}\beta_{t-1})$.
      This circuit is related to \Cref{prop:linearcobination_tensorproduct}.
    }
    \label{qc:BELCTensor}
\end{figure}
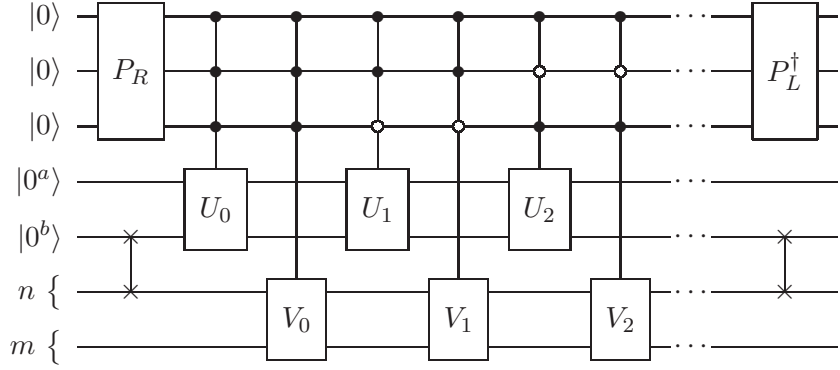

\subsection{Inverse}\label{subsec:inverse}
If we construct the block-encoding of a certain matrix, then we can construct a block-encoding of the inverse of the matrix using QSVT \cite{GSLW2018}. 
To invert the block-diagonal matrix, we use QSVT. 
In this subsection, we briefly describe the result of \cite{GSLW2018}. 
Here, we assume that $A$ is Hermitian, but such a construction can be removed by extending the matrix, as described in \cite{HHL2009}. 
In the next subsection, we describe this extension in detail. 
We first describe the proposition for the block-encoding of a given polynomial.

\begin{prop}[{\cite[Theorem 56]{GSLW2018}}]\label{prop:becircuit}
  Suppose that $U_A$ is an $(\alpha, a, \epsilon)$-block-encoding of an $n$-qubit Hermitian matrix $A$. 
  If $\delta > 0$ and $P \in \R[x]$ is a degree-$d$ polynomial satisfying that $\abs{P(x)} \le 1$ for all $x \in [-1, 1]$, 
  then, there is a quantum circuit $U'$, which is an $(1, a+2, 4d\sqrt{\frac{\epsilon}{\alpha} } + \delta)$ 
  -block-encoding of $\frac{1}{2}P(\frac{A}{\alpha})$, 
  and consists of $\order(d)$ uses of $U_A$ and $\order((a+1)d)$ other one- and two-qubit gates. 
  Moreover, we can compute a  description of such a circuit with a classical computer in $\poly(d, \log\frac{1}{\delta})$ time. 
\end{prop}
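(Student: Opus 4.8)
\emph{Reduction to the exact case and to definite parity.}
The plan is to realise $U'$ by quantum singular value transformation (QSVT), after splitting the general bounded polynomial into two definite-parity pieces. Write $\tilde A := (\bra{0^a}\otimes I_n)U_A(\ket{0^a}\otimes I_n)$ for the actual top-left block of $U_A$; by \eqref{eqdefi:be}, $\norm{\tilde A - A/\alpha} \le \epsilon/\alpha$. Any circuit obtained from $U_A$ by qubitization realises a polynomial in $\tilde A$ \emph{exactly}, so it suffices to (i) construct, assuming $\epsilon = 0$, a $(1, a+2, \delta)$-block-encoding of $\tfrac{1}{2}P(\tilde A)$ from $\order(d)$ uses of $U_A$ and $\order((a+1)d)$ other gates, and then (ii) bound $\norm{\tfrac{1}{2}P(\tilde A) - \tfrac{1}{2}P(A/\alpha)}$. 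For (i), decompose $P = P_{\mathrm e} + P_{\mathrm o}$ into its even and odd parts; each is a real polynomial of degree at most $d$ with $\abs{P_{\mathrm e}(x)}, \abs{P_{\mathrm o}(x)} \le 1$ on $[-1,1]$ by the triangle inequality. I would build a block-encoding of each of $P_{\mathrm e}(\tilde A)$ and $P_{\mathrm o}(\tilde A)$ and combine them with one extra ancilla through the linear-combination circuit of \Cref{prop:linearcombination} applied with $y_0 = y_1 = \tfrac{1}{2}$ and a $(1,1,0)$-state-preparation pair $V_L = V_R$ mapping $\ket{0}$ to $\tfrac{1}{\sqrt 2}(\ket{0}+\ket{1})$; the output block is $\tfrac{1}{2}(P_{\mathrm e}+P_{\mathrm o})(\tilde A) = \tfrac{1}{2}P(\tilde A)$. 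This produces the scaling factor $\tfrac{1}{2}$ and accounts for one of the two extra ancilla qubits.

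\emph{Definite-parity QSVT and resource counts.}
Fix a real degree-$d$ polynomial $Q \in \{P_{\mathrm e}, P_{\mathrm o}\}$ with $\abs{Q} \le 1$ on $[-1,1]$. The quantum signal processing theorem supplies a phase vector $\Phi = (\phi_0,\dots,\phi_d) \in \R^{d+1}$ such that the product $\e^{\im\phi_0\sigma_z}\prod_{k=1}^{d}\bigl(\e^{\im\theta\sigma_x}\e^{\im\phi_k\sigma_z}\bigr)$ with $\theta = \arccos x$ has a top-left entry whose real part equals $Q(x)$. Qubitization lifts this scalar identity to the operator level: since $A$ is Hermitian, the singular values of $\tilde A$ are the absolute values of its eigenvalues and the singular-value transform by a definite-parity polynomial coincides with the ordinary matrix function, so interleaving $d$ alternating applications of $U_A$ and $U_A^\dag$ (controlled variants counted as $U_A$) with the projector-controlled phase rotations $\e^{\im\phi_k(2\Pi - I)}$, $\Pi = \ket{0^a}\bra{0^a}\otimes I_n$, yields a $(1, a+1, 0)$-block-encoding of $Q(\tilde A)$. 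Each projector-controlled rotation is a multiply-controlled single-qubit gate on the $a$ block-encoding ancillas together with one fresh ancilla, implementable with $\order(a)$ two-qubit gates; hence the definite-parity circuit uses $\order(d)$ copies of $U_A$ and $\order((a+1)d)$ other gates, and the combined LCU circuit inherits the same bounds up to constants. The vectors $\Phi$ for $P_{\mathrm e}$ and $P_{\mathrm o}$ are produced by the classical angle-finding routines already cited in the paper in $\poly(d,\log\frac{1}{\delta})$ time, the truncation accuracy of the computed phases contributing the additive error $\delta$.

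\emph{Error analysis.}
Combining the two parities with \Cref{prop:linearcombination} only rescales and sums the errors, so step (i) yields error $\delta$ on the block-encoding of $\tfrac{1}{2}P(\tilde A)$. For step (ii) I would invoke the robustness estimate of QSVT: for a real degree-$d$ polynomial bounded by $1$ on $[-1,1]$ and operators of norm at most $1$, $\norm{P(\tilde A) - P(A/\alpha)} \le 4d\sqrt{\norm{\tilde A - A/\alpha}} \le 4d\sqrt{\epsilon/\alpha}$. The scalar version follows by writing $Q(\cos\theta)$ as a degree-$d$ trigonometric polynomial bounded by $1$, so Bernstein's inequality makes it $d$-Lipschitz in $\theta$, and combining this with the Hölder-type bound $\abs{\arccos x - \arccos y} \le \pi\sqrt{\abs{x-y}}$ on $[-1,1]$, giving $\abs{P(x) - P(y)} = \order(d\sqrt{\abs{x-y}})$; the operator statement is obtained by propagating an $\epsilon/\alpha$-perturbation of $\tilde A$ through the $d$ layers of the qubitized circuit, exactly as in \cite{GSLW2018}. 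Adding the two contributions gives total error at most $4d\sqrt{\epsilon/\alpha} + \delta$ for a $(1, a+2, \cdot)$-block-encoding of $\tfrac{1}{2}P(A/\alpha)$, as claimed.

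\emph{Main obstacle.}
The delicate point is the $\sqrt{\epsilon/\alpha}$ rate in the robustness estimate: the naive route through Markov's inequality $\norm{P'}_{[-1,1]} \le d^2$ only gives $\order(d^2\epsilon/\alpha)$, and improving the exponent forces one either to run the trigonometric-Lipschitz argument above carefully at the level of eigenvalues without perturbing eigenvectors, or to telescope over the $d$ layers of the circuit as in \cite{GSLW2018}. The remaining ingredients — the parity split, existence of the QSP phase vector, the qubitization lift, the ancilla and gate bookkeeping, and the classical angle finding — are routine.
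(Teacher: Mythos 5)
The paper does not actually prove this proposition: it is imported verbatim as \cite[Theorem 56]{GSLW2018}, so there is no internal proof to compare against. Your reconstruction follows the structure of the cited source's own argument — parity split, quantum signal processing plus qubitization for each definite-parity piece, a two-term LCU supplying the factor $\tfrac{1}{2}$ and the second extra ancilla, and the $4d\sqrt{\epsilon/\alpha}$ robustness lemma — and is essentially correct. One imprecision worth fixing: $\tilde{A} = (\bra{0^a}\otimes I_n)U_A(\ket{0^a}\otimes I_n)$ need not be Hermitian even though $A$ is, so its singular values are not the absolute values of its eigenvalues; the clean statement is that the circuit realizes the \emph{singular value} transform of $\tilde{A}$ exactly, the robustness lemma bounds its distance from the singular value transform of $A/\alpha$, and only for the Hermitian matrix $A/\alpha$ does the definite-parity singular value transform coincide with the ordinary matrix function $P(A/\alpha)$.
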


This proposition indicates that the complexity and the error increase as the degree increases. 
We can construct the block-encoding of the inverse using a polynomial that approximates $1/x$. 
In the following proposition, we show that there exists such a polynomial on a certain range. 

\begin{prop}[Polynomial approximation of $\frac{1}{x}$ {\cite[Corollary 69]{GSLW2018}}]\label{cor69inGSLW2018}
  Let $\delta, \sigma \in (0, \frac{1}{2}]$, 
  then there is an odd polynomial $P \in \R[x]$ of degree $O(\frac{1}{\sigma}\log\frac{1}{\delta})$ that is 
  $\delta$-approximating $f(x) = \frac{3}{4}\frac{\sigma}{x}$ on the domain $\calI = [-1, -\sigma] \cup [\sigma, 1]$, 
  moreover, it is bounded 1 in absolute value. 
\end{prop}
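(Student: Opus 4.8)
The plan is to build the desired odd polynomial by composing a known bounded approximation of $1/x$ away from zero with a smooth cutoff near the origin, following the standard recipe behind \cite[Corollary 69]{GSLW2018}. First I would recall that on the interval $[\sigma,1]$ the function $x\mapsto \tfrac{1}{x}$ is analytic and well-behaved, and that there is a classical result (via, e.g., a truncated Chebyshev expansion of $1/x$ on $[\sigma,1]$, or the integral representation $\tfrac1x=\int_0^\infty e^{-xt}\,\dz t$ discretized and the exponentials approximated by polynomials) giving a polynomial $q$ of degree $O(\tfrac{1}{\sigma}\log\tfrac1\delta)$ with $|q(x)-\tfrac1x|\le \delta'$ on $[\sigma,1]$ for a suitable $\delta'$. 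To get an \emph{odd} polynomial on the symmetric domain $\calI=[-1,-\sigma]\cup[\sigma,1]$ I would antisymmetrize, i.e. replace $q(x)$ by $\tfrac12(q(x)-q(-x))$, which is automatically odd and still approximates $1/x$ on $\calI$ since $1/x$ is itself odd.

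The second ingredient is the scaling by $\tfrac34\sigma$ and the absolute-value bound. On $\calI$ we have $|\tfrac34\tfrac{\sigma}{x}|\le \tfrac34<1$ with a margin, so after multiplying the approximating polynomial by $\tfrac34\sigma$ the values on $\calI$ are bounded by roughly $\tfrac34+\tfrac34\sigma\delta'\le \tfrac34+\text{small}$. The real work is controlling the polynomial \emph{outside} $\calI$, in particular on $(-\sigma,\sigma)$ and near $\pm\sigma$, where the raw approximation of $1/x$ blows up. The key step here is to multiply by an even polynomial ``rectangle'' or smooth-step function $r(x)$ that is $\approx 1$ on $[\sigma,1]$ (so the product still approximates $\tfrac34\sigma/x$ there), is bounded by $1$ in absolute value on all of $[-1,1]$, and decays to near $0$ on a neighborhood of the origin fast enough that the product $\tfrac34\sigma\cdot q(x)\cdot r(x)$ stays bounded by $1$ on $[-1,1]$ despite the $1/x$ growth of $q$ near $0$. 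Such polynomial approximations of the sign/rectangle functions with degree $O(\tfrac1\sigma\log\tfrac1\delta)$ are exactly what is provided by the error-function constructions in \cite{GSLW2018}; I would invoke those and check that the degree of the product is still $O(\tfrac1\sigma\log\tfrac1\delta)$ and that $r$ even times $q$ odd gives an odd polynomial.

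Finally I would collect the error terms: on $\calI$, $|\tfrac34\sigma\, q(x)r(x)-\tfrac34\tfrac{\sigma}{x}|$ is bounded by $|\tfrac34\sigma q(x)||r(x)-1| + |\tfrac34\sigma||q(x)-\tfrac1x|$, and by choosing the internal tolerances of the two building blocks to be $\Theta(\delta)$ each, the total is at most $\delta$; the absolute-value-$\le 1$ bound on $[-1,1]$ follows by splitting into $x\in\calI$ (where $q$ is close to $1/x$ and the product is $\le \tfrac34+o(1)$) and $x\notin\calI$ (where $r$ is small enough to kill the growth). I expect the main obstacle to be the bookkeeping that makes the \emph{same} degree bound $O(\tfrac1\sigma\log\tfrac1\delta)$ work simultaneously for the approximation quality on $\calI$, the global boundedness, and the parity constraint — in particular verifying that the rectangle polynomial can be made to transition from $\approx 0$ to $\approx 1$ within an $O(\sigma)$-width window while staying bounded, which is precisely where the $\tfrac1\sigma$ factor in the degree comes from. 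Since \cite[Corollary 69]{GSLW2018} is cited as a black box, the cleanest route is simply to quote it; if a self-contained argument is wanted, the composition-of-cutoff-with-$1/x$-approximation sketch above is the way I would flesh it out.
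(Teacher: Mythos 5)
Your bottom-line recommendation --- quote \cite[Corollary~69]{GSLW2018} as a black box --- is exactly what the paper does: it gives no proof of this proposition and simply points to \cite{GSLW2018} for the construction, so on that level the approaches coincide. Your supplementary sketch is also broadly faithful to the actual construction behind the cited corollary, which indeed multiplies a local polynomial approximation of $1/x$ on the region away from the origin by a polynomial rectangle function (built from error-function approximations) to enforce global boundedness, with both ingredients of degree $O(\frac{1}{\sigma}\log\frac{1}{\delta})$.

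One step of the sketch would need repair if fleshed out: the antisymmetrization $q(x)\mapsto \frac{1}{2}(q(x)-q(-x))$ only preserves the approximation quality on $[\sigma,1]$ if $q$ is \emph{already} controlled on $[-1,-\sigma]$, since for $x\in[\sigma,1]$ the error of the symmetrized polynomial involves $q(-x)+\frac{1}{x}$, i.e.\ the error of $q$ at $-x$. If $q$ is constructed only on $[\sigma,1]$ this term is uncontrolled. The standard fix is to make the approximant odd by construction, e.g.\ take $x\,p(x^2)$ with $p$ a polynomial approximation of $1/y$ on $[\sigma^2,1]$, or use the manifestly odd family $\frac{1-(1-x^2)^b}{x}$ as in \cite{CKS2017,GSLW2018}; either route keeps the degree at $O(\frac{1}{\sigma}\log\frac{1}{\delta})$. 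With that substitution your outline matches the cited proof.
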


For the construction of the odd polynomial, see \cite{GSLW2018}. 
We say that $P$ is the $\delta$-approximation of $f(x)$ on the domain $\calI$ if 
$\abs{f(x) - P(x)} \le \delta$ holds for all $x \in \calI$. 
We summarize the above propositions in following to explicitly state on the block-encoding of the inverse of a Hermitian matrix. 

\begin{prop}[Block-encoding of the inverse]\label{prop:inversebe}
  Suppose that $U_A$ is an $(\alpha, a, \epsilon)$-block-encoding of an $n$-qubit Hermitian matrix $A$ 
  that its eigenvalues is on the range $\calI = [-\alpha, -1/\beta] \cup [1/\beta, \alpha]$. 
  Let $\delta \in (0, \frac{3}{4}]$. 
  Then, there is a quantum circuit $\tilde{U}$, which is $(\tilde{\beta}, a+2, \tilde{\epsilon})$-block-encoding of $A^{-1}$, 
  where 
  \begin{align}
    \tilde{\beta} = \frac{16}{3} \beta, \qquad 
    \tilde{\epsilon} = \left(4d\sqrt{\frac{\epsilon}{\alpha}} + \delta\right)\tilde{\beta}, 
  \end{align}
  and consists of $O(d)$ uses of $U_A$ and $O((a+1)d)$ other one- and two-qubit gates, 
  where 
  \begin{align}
    d = O\left(\alpha\beta \log\frac{1}{\delta} \right ).     
  \end{align}
  Moreover, we can compute a  description of such a circuit with a classical computer in $\poly(d, \log\frac{1}{\delta})$ time.   
\end{prop}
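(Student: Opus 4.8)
The plan is to combine Proposition~\ref{cor69inGSLW2018} and Proposition~\ref{prop:becircuit} in the obvious way, taking care of the rescaling that relates $1/x$ on $[-1,1]$ (the normalized picture required by QSVT) to $A^{-1}$ on the spectral range $\calI = [-\alpha,-1/\beta]\cup[1/\beta,\alpha]$. First I would normalize: since $U_A$ is an $(\alpha,a,\epsilon)$-block-encoding of $A$, it is the block-encoding that QSVT acts on, and the relevant normalized matrix is $A/\alpha$, whose eigenvalues lie in $[-1,-1/(\alpha\beta)]\cup[1/(\alpha\beta),1]$. So I would apply Proposition~\ref{cor69inGSLW2018} with $\sigma = 1/(\alpha\beta)$ (note $\sigma\in(0,1/2]$ since we may assume $\alpha\beta\ge 2$; if $\alpha\beta<2$ the inverse is trivially well-conditioned and a constant-degree polynomial suffices, so this is not a real restriction), obtaining an odd polynomial $P\in\R[x]$ of degree $d = O\bigl(\alpha\beta\log\frac1\delta\bigr)$ with $|P(x)|\le 1$ on $[-1,1]$ and $\bigl|P(x) - \tfrac34\sigma/x\bigr|\le\delta$ on $\calI/\alpha$.

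Next I would feed this $P$ into Proposition~\ref{prop:becircuit}: that yields a quantum circuit $U'$ which is a $\bigl(1,\,a+2,\,4d\sqrt{\epsilon/\alpha}+\delta\bigr)$-block-encoding of $\tfrac12 P(A/\alpha)$, using $O(d)$ applications of $U_A$ and $O((a+1)d)$ other one- and two-qubit gates, with a classically computable description in $\poly(d,\log\frac1\delta)$ time. The scalar bookkeeping is then: $\tfrac12 P(A/\alpha) \approx \tfrac12\cdot\tfrac34\sigma\cdot(A/\alpha)^{-1} = \tfrac38\cdot\tfrac1{\alpha\beta}\cdot\alpha A^{-1} = \tfrac{3}{8\beta}A^{-1}$. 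Hence $U'$ is, up to the approximation error, a block-encoding of $\tfrac{3}{8\beta}A^{-1}$, i.e. a $(\tilde\beta, a+2, \tilde\epsilon)$-block-encoding of $A^{-1}$ with $\tilde\beta = \tfrac{8\beta}{3}\cdot 2 = \tfrac{16}{3}\beta$ — the extra factor of $2$ coming from the $\tfrac12$ in Proposition~\ref{prop:becircuit}. (Recall from the noted properties of block-encodings that a $(1,a,\varepsilon')$-block-encoding of $cM$ is a $(1/c, a, \varepsilon'/c)$-block-encoding of $M$ for $c>0$; taking $M=A^{-1}$, $c = 3/(8\beta)$ gives scaling factor $8\beta/3$, and combined with the $1/2$ the total is $16\beta/3$.)

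For the error term I would propagate it through the same rescaling: if $U'$ $\varepsilon'$-block-encodes $\tfrac{3}{16\beta}A^{-1}$ with $\varepsilon' = 4d\sqrt{\epsilon/\alpha}+\delta$, then multiplying the extracted block by $\tilde\beta = 16\beta/3$ to recover $A^{-1}$ multiplies the error by the same factor, giving $\tilde\epsilon = \bigl(4d\sqrt{\epsilon/\alpha}+\delta\bigr)\tilde\beta$, exactly as claimed. The gate counts and the classical precomputation time carry over verbatim from Proposition~\ref{prop:becircuit} with $d = O(\alpha\beta\log\frac1\delta)$, and the hypothesis $\delta\in(0,\tfrac34]$ is precisely what lets us invoke Proposition~\ref{cor69inGSLW2018} with its parameter in $(0,\tfrac12]$ after absorbing the constant $\tfrac34$ (one sets the approximation parameter there to $\tfrac43\delta$ if one wants the final bound stated in terms of $\delta$ itself, or simply keeps $\delta$ and notes the constants absorb; either way the stated form follows). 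The only genuine subtlety — the step I would be most careful about — is making sure the normalization of the polynomial's argument ($A/\alpha$ versus $A$) and the domain of approximation ($\sigma = 1/(\alpha\beta)$ so that $\calI$ in Proposition~\ref{cor69inGSLW2018} rescales to the actual eigenvalue range) are matched consistently, since an off-by-$\alpha$ there would corrupt both $\tilde\beta$ and $d$; everything else is routine substitution.
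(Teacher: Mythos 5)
Your overall strategy --- rescale to $A/\alpha$, invoke Proposition~\ref{cor69inGSLW2018} to obtain an odd polynomial approximating $\tfrac{3}{4}\sigma/x$ on the rescaled spectral range, feed it into Proposition~\ref{prop:becircuit}, and undo the scalars --- is exactly the paper's. However, your scalar bookkeeping contains a genuine error: a factor of $2$ is counted twice. With your choice $\sigma = 1/(\alpha\beta)$ you correctly compute $\tfrac{1}{2}P(A/\alpha) \approx \tfrac{1}{2}\cdot\tfrac{3}{4}\sigma\,(A/\alpha)^{-1} = \tfrac{3}{8\beta}A^{-1}$; at that point the $\tfrac{1}{2}$ from Proposition~\ref{prop:becircuit} has already been absorbed into the constant $\tfrac{3}{8\beta}$. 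A $(1,a+2,\varepsilon')$-block-encoding of $\tfrac{3}{8\beta}A^{-1}$ is therefore an $(\tfrac{8}{3}\beta,\,a+2,\,\tfrac{8}{3}\beta\varepsilon')$-block-encoding of $A^{-1}$, not a $(\tfrac{16}{3}\beta,\dots)$ one. Your phrase ``combined with the $1/2$ the total is $16\beta/3$'' re-applies a factor already used, and the silent switch from ``$U'$ block-encodes $\tfrac{3}{8\beta}A^{-1}$'' to ``$U'$ $\varepsilon'$-block-encodes $\tfrac{3}{16\beta}A^{-1}$'' in your error paragraph is where the inconsistency surfaces. Note that a block-encoding with normalization $\tfrac{8}{3}\beta$ is \emph{not} automatically one with normalization $\tfrac{16}{3}\beta$ (changing the scaling factor changes the matrix being encoded), so you cannot simply relax to the stated constant after the fact.

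The paper avoids both this and your $\alpha\beta \ge 2$ case split in one move: it sets $\bar{\kappa} = 2\alpha\beta$ and applies Proposition~\ref{cor69inGSLW2018} with $\sigma = 1/\bar{\kappa}$. Since $\alpha\beta \ge 1$ always holds (the spectral range forces $1/\beta \le \alpha$), this guarantees $\sigma \le \tfrac{1}{2}$ with no side condition; the eigenvalues of $A/\alpha$ still lie in $[-1,-1/\bar{\kappa}]\cup[1/\bar{\kappa},1]$; the approximated function becomes $\tfrac{3}{4}\tfrac{1}{\bar{\kappa}x}$, so $\tfrac{1}{2}P(A/\alpha) \approx \tfrac{3}{16\beta}A^{-1}$ and $\tilde{\beta} = \tfrac{16}{3}\beta$ drops out exactly, with $d = O(\bar{\kappa}\log\tfrac{1}{\delta}) = O(\alpha\beta\log\tfrac{1}{\delta})$ unchanged. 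Finally, you should keep the two error sources separate --- the polynomial approximation error $\delta_1$ (which contributes $\tfrac{1}{2}\delta_1$ to the unnormalized error, via $\lVert \tfrac{1}{2}P(A/\alpha) - \tfrac{3}{16\beta}A^{-1}\rVert \le \tfrac{1}{2}\delta_1$) and the circuit-synthesis error $\delta_2$ from Proposition~\ref{prop:becircuit} --- and then choose $\delta_1,\delta_2$ as suitable fractions of $\delta$, as the paper does; your ``the constants absorb'' remark glosses over the fact that these are two independent parameters entering the final bound additively.
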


\begin{proof}
  We define $\bar{\kappa} = 2\alpha\beta$, such that $\bar{\kappa}$ satisfies $\frac{1}{\bar{\kappa}} \le \frac{1}{2}$. 
  Thus, from \cref{cor69inGSLW2018}, we can see that 
  there is an odd polynomial $P \in \R[x]$ of degree $d = O(\alpha\kappa \log\frac{1}{\delta_1})$ that is 
  a $\delta_1$-approximation of $f(x) = \frac{3}{4}\frac{1}{\bar{\kappa}x}$ on the domain $\calI = [-1,-1/\bar{\kappa}] \cup [1/\bar{\kappa}, 1]$. 
  Moreover, the absolute value of $P$ is bounded by 1. 
  Hence, it follows from \cref{prop:becircuit} that 
  we can construct a quantum circuit that is the 
  $(1,a+2, 4d\sqrt{\frac{\epsilon}{\alpha}}+\delta_2)$-block encoding of $\frac{1}{2}P(\frac{A}{\alpha})$ 
  and that consists of $O(d)$ uses of $U_A$ and $O((a+1)d)$ other one- and two-qubit gates. 
  Moreover, we can compute a  description of such a circuit with a classical computer in $\poly(d, \log(1/\delta_2))$ time.   
  By the definition of block-encoding, we have 
  \begin{align}
    \bigCnorm{ \frac{1}{2}P\l( \frac{A}{\alpha} \r) - (\bra{0^{a+2}} \otimes I_n )\tilde{U}(\ket{0^{a+2}} \otimes I_n) } \le 4d\sqrt{\frac{\epsilon}{\alpha}} + \delta_2. 
    \label{inq:inv2}
  \end{align}
  Note that the eigenvalues of the Hermitian matrix $\frac{1}{\alpha}A$ are in the domain $\calI$. 
  Therefore, 
  \begin{align}
    \bigCnorm{ P\left( \frac{A}{\alpha} \right) - f\left(\frac{A}{\alpha}\right)} = 
    \bigCnorm{ P\left( \frac{A}{\alpha} \right) - \frac{3}{4}\frac{1}{\bar{\kappa}} \left(\frac{A}{\alpha}\right)^{-1} } 
    = 
    \bigCnorm{ P\left( \frac{A}{\alpha} \right) - \frac{3}{8}\frac{1}{\beta} A^{-1} } 
    \le \delta_1. 
    \label{inq:inv1}W
  \end{align}
  Thus, $\tilde{U}$ is the $(1, a+2, 4d\sqrt{\epsilon/\alpha} + \delta_2 + \frac{1}{2}\delta_1)$-block-encoding of $\frac{3}{16}\beta^{-1} A^{-1} = \tilde{\beta}^{-1} A^{-1}$. 
  By setting $\delta = \delta_1 = \delta_2 < 1/2$ using $\delta \in (0, 1/2]$, we can see that 
  $\tilde{U}$ is the $(\tilde{\beta}, a+2, \tilde{\epsilon})$-block-encoding of $A^{-1}$. 
\end{proof}

Because we assumed that $\norm{A} \le \alpha$, the maximum absolute value of the eigenvalues is less than or equal to $\alpha$. 
If the value of $1/\beta$ equals to a minimum of eigenvalues in absolute value, then, $\alpha\beta \ge  \kappa$ and $\tilde{\beta} = (3/16)\kappa/\alpha$, 
where $\kappa$ is the condition number of $A$. 
This means that we need $\poly(\kappa)$ runtime to perform the inverse when we used QSVT.

\subsection{Extension}\label{subsec:be-extended}
When we construct the block-encoding of an inverse using QSVT, the matrix we want to invert must be Hermitian. 
This constraint can be removed by considering the extension of the matrix.  
Specifically, we consider an extended matrix defined as follows: 
\begin{align}
  \bar{A} := A \otimes \gaiseki{0}{1} + A^\dag \otimes \gaiseki{1}{0}. 
  \label{eq:extended_mat}
\end{align}
We can immediately see that this extended matrix is Hermitian. 
Additionally, the inverse of this extended matrix is $\bar{A}^{-1} = A^{-1} \otimes \gaiseki{1}{0} + (A^\dag)^{-1} \otimes \gaiseki{0}{1}$. 
The block-diagonal matrix that constructed from the approximation of the matrix function is not Hermitian. Therefore, to invert it using QSVT, we need to consider this extension. 
The eigenvalues of $\bar{A}$ are $\pm \sigma_j \ (j = 1,2,\dots,N)$, where $\sigma_j$ are the singular values of $A$. 
Hence, we have $\norm{\bar{A}} = \norm{A}$. 
In the following, we construct a block-encoding of $\bar{A}$ using the $\sigma_x$ operation and the block-encoding of $A$.  
The circuit is shown in \Cref{qc:extendedbe}.

\begin{prop}\label{prop:extendedbe}
  If $U$ is an $(\alpha,a,\epsilon)$-block-encoding of an $n$-qubit matrix $A$, 
  then $(U \otimes \gaiseki{0}{0} + U^\dag \otimes \gaiseki{1}{1})(I_{a+n} \otimes \sigma_x)$ is 
  an $(\alpha, a, 2\epsilon)$-block-encoding of an extended matrix $\bar{A} = A \otimes \gaiseki{0}{1} + A^\dag \otimes \gaiseki{1}{0}$. 
\end{prop}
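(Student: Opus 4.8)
The plan is to check directly that the claimed operator is unitary and that projecting its ancilla register onto $\ket{0^a}$ reproduces $\bar{A}/\alpha$ up to an error controlled by $\epsilon$. Write $W = (U \otimes \gaiseki{0}{0} + U^\dag \otimes \gaiseki{1}{1})(I_{a+n} \otimes \sigma_x)$, and fix the convention that the first $a$ qubits are the ancilla register of $U$, the next $n$ qubits are the system register on which $A$ acts, and the last qubit is the ``extension'' qubit appearing in $\bar{A}$. First I would observe that $U \otimes \gaiseki{0}{0} + U^\dag \otimes \gaiseki{1}{1}$ is unitary: it is block diagonal with respect to the computational basis of the last qubit, with unitary blocks $U$ and $U^\dag$. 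Since $I_{a+n}\otimes\sigma_x$ is unitary as well, $W$ is an $(a+n+1)$-qubit unitary, built from a controlled-$U$/controlled-$U^\dag$ and a single $\sigma_x$ gate, matching the circuit in \Cref{qc:extendedbe}.

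The core step is to evaluate $(\bra{0^a}\otimes I_{n+1})\,W\,(\ket{0^a}\otimes I_{n+1})$. I would push the trailing $I_{a+n}\otimes\sigma_x$ through $\ket{0^a}\otimes I_{n+1}$ to get $\ket{0^a}\otimes I_n \otimes\sigma_x$, then apply the controlled operator factor by factor, using $\gaiseki{0}{0}\sigma_x = \gaiseki{0}{1}$ and $\gaiseki{1}{1}\sigma_x = \gaiseki{1}{0}$. This turns the expression into $(\bra{0^a}\otimes I_n)U(\ket{0^a}\otimes I_n)\otimes\gaiseki{0}{1} + (\bra{0^a}\otimes I_n)U^\dag(\ket{0^a}\otimes I_n)\otimes\gaiseki{1}{0}$. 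Setting $\tilde{A} := \alpha(\bra{0^a}\otimes I_n)U(\ket{0^a}\otimes I_n)$, so that $\norm{A - \tilde{A}}\le\epsilon$ by hypothesis, and noting $(\bra{0^a}\otimes I_n)U^\dag(\ket{0^a}\otimes I_n) = (\tilde{A}/\alpha)^\dag$, I obtain $\alpha(\bra{0^a}\otimes I_{n+1})W(\ket{0^a}\otimes I_{n+1}) = \tilde{A}\otimes\gaiseki{0}{1} + \tilde{A}^\dag\otimes\gaiseki{1}{0}$, which is precisely the extension of $\tilde{A}$ in the sense of \eqref{eq:extended_mat}.

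It then remains to bound $\norm{\bar{A} - (\tilde{A}\otimes\gaiseki{0}{1} + \tilde{A}^\dag\otimes\gaiseki{1}{0})}$, that is, the spectral norm of $(A-\tilde{A})\otimes\gaiseki{0}{1} + (A-\tilde{A})^\dag\otimes\gaiseki{1}{0}$. By the triangle inequality this is at most $\norm{(A-\tilde{A})\otimes\gaiseki{0}{1}} + \norm{(A-\tilde{A})^\dag\otimes\gaiseki{1}{0}} = 2\norm{A-\tilde{A}}\le 2\epsilon$, which yields the asserted $(\alpha, a, 2\epsilon)$-block-encoding. I would remark in passing that, since the extension of a matrix has the same spectral norm as the matrix (as already noted for $\bar{A}$ itself), the sharper bound $\epsilon$ in fact holds, but $2\epsilon$ is enough for the later applications.

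I do not anticipate a genuine obstacle: the argument is a direct computation plus one triangle inequality. The only point requiring care is the bookkeeping of tensor-factor orderings — keeping straight which register is the ancilla of $U$, which is the $n$-qubit system, and which is the extra extension qubit — and correctly commuting $\ket{0^a}$, $I_{a+n}$, and $\sigma_x$ past one another before invoking the two Pauli identities above.
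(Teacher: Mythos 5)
Your proof is correct and follows essentially the same route as the paper's: apply the identities $\gaiseki{0}{0}\sigma_x=\gaiseki{0}{1}$, $\gaiseki{1}{1}\sigma_x=\gaiseki{1}{0}$, project the ancilla register onto $\ket{0^a}$ to obtain $\tilde{A}\otimes\gaiseki{0}{1}+\tilde{A}^\dag\otimes\gaiseki{1}{0}$, and bound the error by the triangle inequality. Your side remark that the bound improves to $\epsilon$ (since the error matrix is itself an extension of $A-\tilde{A}$ and hence has the same spectral norm) is a valid sharpening the paper does not state, but it changes nothing downstream.
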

\begin{proof}
  We have
  \begin{align}
    & \bigCnorm{
      \bar{A} - \alpha \Bigl(\bra{0^a} \otimes I_{n+1} \Bigr) \Bigl(U \otimes \gaiseki{0}{0} + U^\dag \otimes \gaiseki{1}{1}\Bigr)\Bigl(I_{a+n} \otimes \sigma_x\Bigr) \Bigl(\ket{0^a} \otimes I_{n+1} \Bigr)
    } \notag \\
    &=\bigCnorm{
      \bar{A} - \alpha \Bigl(\bra{0^a} \otimes I_{n+1} \Bigr) (U \otimes \gaiseki{0}{1} + U^\dag \otimes \gaiseki{1}{0}) \Bigl(\ket{0^a} \otimes I_{n+1} \Bigr)
    } \notag \\
    &=\bigCnorm{
      \bar{A} - \alpha
      \Bigl(
      (\bra{0^a} \otimes I_n)U     (\ket{0^a} \otimes I_n) \otimes \gaiseki{0}{1} +
      (\bra{0^a} \otimes I_n)U^\dag(\ket{0^a} \otimes I_n) \otimes \gaiseki{1}{0}
      \Bigr)
    } \notag \\
    &=\bigCnorm{
      \biggr(A - \alpha (\bra{0^a} \otimes I_n)U(\ket{0^a} \otimes I_n) \biggl) \otimes \gaiseki{0}{1} +
      \biggr(A^\dag - \alpha ( \bra{0^a}\otimes I_n)U^\dag(\ket{0^a} \otimes I_n) \biggl) \otimes \gaiseki{1}{0}
    } \notag \\
    &\le 2\epsilon.
  \end{align}
\end{proof}

The block-encoding of the inverse $\bar{A}^{-1}$ is constructed using the block-encoding of the extended matrix and QSVT. 
We represent the block-encoding of the original inverse $A^{-1}$ by 
inserting $\sigma_x$ operation after the block-encoding of $\bar{A}^{-1}$. 
The corresponding circuit is shown in \Cref{qc:return-inv}.

\begin{prop}\label{prop:return-inv}
  If $\tilde{U}$ is a $(\beta, a, \epsilon)$-block-encoding of $\bar{A}^{-1} = A^{-1} \otimes \gaiseki{1}{0} + (A^\dag)^{-1} \otimes \gaiseki{0}{1}$, 
  then $(I_a \otimes \SWAP_1^{n+1}) (I_{a+n} \otimes \sigma_x)\tilde{U} (I_a \otimes \SWAP_1^{n+1})$ is a $(\beta, a+1, \epsilon)$-block-encoding of $A^{-1}$.   
\end{prop}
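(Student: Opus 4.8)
The plan is to verify the block-encoding condition \eqref{eqdefi:be} for the circuit $W := (I_a \otimes \SWAP_1^{n+1})(I_{a+n}\otimes\sigma_x)\tilde U(I_a\otimes\SWAP_1^{n+1})$ directly, by projecting onto the $\ket{0^{a+1}}$ subspace of the $a+1$ ancilla qubits and showing the remaining operator equals (up to $\epsilon$) the scaled matrix $\beta^{-1}A^{-1}$ acting on the $n$ system qubits. Here one ancilla qubit is the extra ``extension'' qubit (indexed $n+1$ in the natural ordering, which $\SWAP_1^{n+1}$ moves to the front), and the other $a$ ancilla qubits are those of $\tilde U$. The key algebraic fact I would use is that, by hypothesis, $(\bra{0^a}\otimes I_{n+1})\tilde U(\ket{0^a}\otimes I_{n+1})$ is $\epsilon$-close to $\beta^{-1}\bar A^{-1} = \beta^{-1}\bigl(A^{-1}\otimes\gaiseki{1}{0} + (A^\dag)^{-1}\otimes\gaiseki{0}{1}\bigr)$, where the qubit carrying $\gaiseki{\cdot}{\cdot}$ is the extension qubit.

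First I would push the outer $\SWAP_1^{n+1}$ operations through: conjugating by $I_a\otimes\SWAP_1^{n+1}$ amounts to relabelling so that the extension qubit sits in the first position among the $n+1$ non-$\tilde U$-ancilla registers; this lets me treat the circuit as ``prepare the extension qubit in $\ket{0}$, apply $\tilde U$, apply $\sigma_x$ on the extension qubit, read out the extension qubit in $\ket{0}$.'' Next I would compute
\begin{align}
  (\bra{0}\otimes I_n)\,\sigma_x\,\Bigl(\beta^{-1}\bar A^{-1}\Bigr)\,(\ket{0}\otimes I_n)
  = (\bra{1}\otimes I_n)\,\beta^{-1}\bar A^{-1}\,(\ket{0}\otimes I_n)
  = \beta^{-1} A^{-1},
\end{align}
using $\sigma_x\ket{0}=\ket{1}$ and the explicit form of $\bar A^{-1}$ (the $\gaiseki{1}{0}$ term survives, the $\gaiseki{0}{1}$ term is killed). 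Combining this with the $\epsilon$-approximation of $\tilde U$ and the fact that sandwiching by $\bra{0}$, $\ket{0}$ and by the unitary $\sigma_x$ does not increase the spectral norm, I would conclude
\begin{align}
  \bigCnorm{ A^{-1} - \beta\,(\bra{0^{a+1}}\otimes I_n)\,W\,(\ket{0^{a+1}}\otimes I_n) } \le \epsilon,
\end{align}
which is exactly the claim, with ancilla count $a+1$ and scaling factor $\beta$ unchanged.

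The only mildly delicate point — and the main obstacle — is bookkeeping the qubit registers correctly: making sure the $\SWAP_1^{n+1}$ gates really do bring the extension qubit into the position where the $\gaiseki{\cdot}{\cdot}$ factors in $\bar A^{-1}$ act, and that $\sigma_x$ in $I_{a+n}\otimes\sigma_x$ acts on that same qubit and not on a system qubit. Once the indexing convention from \Cref{subsec:be-extended} (the extension qubit is appended as the last system qubit, so it is qubit $n+1$ before the leading $a$ ancillas are counted, hence the $\SWAP_1^{n+1}$) is pinned down, the rest is the short norm estimate above; no new inequalities beyond submultiplicativity of the spectral norm under isometric compressions are needed.
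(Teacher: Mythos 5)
Your proposal is correct and follows essentially the same route as the paper's proof: both rest on the identity $A^{-1} = (I_n\otimes\bra{1})\bar{A}^{-1}(I_n\otimes\ket{0})$ (with the $\sigma_x$ turning the $\bra{0}$ readout on the extension qubit into $\bra{1}$, so that the $\gaiseki{1}{0}$ term of $\bar{A}^{-1}$ survives), together with the fact that compressing by these isometries does not increase the spectral norm of the block-encoding error. The only cosmetic difference is that you relabel the extension qubit to the front via the swaps, whereas the paper keeps it in the last position and factors the projectors accordingly.
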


\begin{proof}
  We can see that $A^{-1} = (I_n \otimes \bra{1})\bar{A}^{-1} (I_n \otimes \ket{0}) $. 
  In addition, $\bra{0^a} \otimes I_n \otimes \bra{1} = (I_n \otimes \bra{1})(\bra{0^a} \otimes I_{n+1})$ 
  and $\ket{0^a} \otimes I_n \otimes \ket{0} = (\ket{0^a} \otimes I_{n+1})(I_n \otimes \ket{0})$ hold. 
  Thus, 
  \begin{align}
    & \bigCnorm{
      A^{-1} - \beta \bigl(\bra{0^a} \otimes \bra{0} \otimes I_n  \bigr) \bigl(I_a \otimes \SWAP_1^{n+1} \bigr) (I_{a+n} \otimes \sigma_x)\tilde{U} \bigl(I_a \otimes \SWAP_1^{n+1} \bigr) \bigl(\ket{0^a} \otimes \ket{0} \otimes I_n \bigr)
    } \notag \\
    & \bigCnorm{
      A^{-1} - \beta \bigl(\bra{0^a} \otimes I_n \otimes \bra{0} \bigr) (I_{a+n} \otimes \sigma_x)\tilde{U} \bigl(\ket{0^a} \otimes I_n \otimes \ket{0} \bigr)
    } \notag \\
    &=\bigCnorm{
      (I_n \otimes \bra{1})\bar{A}^{-1} (I_n \otimes \ket{0})
      -
      \beta \bigl(\bra{0^a} \otimes I_n \otimes \bra{1} \bigr) \tilde{U} \bigl(\ket{0^a} \otimes I_n \otimes \ket{0} \bigr)
    } \notag \\
    &=\bigCnorm{
      (I_n \otimes \bra{1})
      \biggl(
        \bar{A}^{-1} - \beta (\bra{0^a} \otimes I_{n+1}) \tilde{U} (\ket{0^a} \otimes I_{n+1})
      \biggr)
      (I_n \otimes \ket{0})
    } \notag \\ 
    &\le \epsilon. 
  \end{align}
\end{proof}

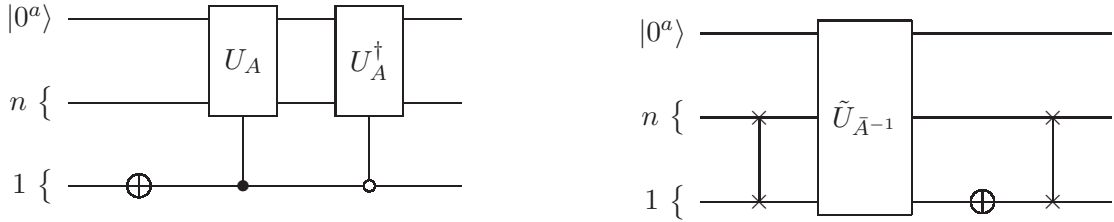
\begin{figure}[htbp]
  \begin{tabular}{cc}
    \begin{minipage}{0.45\hsize}
      \centering
      \[
        \Qcircuit @C=2em @R=2em {
        \lstick{\ket{0^a}} & \qw      & \multigate{1}{U_A} & \multigate{1}{U_A^\dag} & \qw \\
        \lstick{n \ \bigl\{} & \qw      & \ghost{U_A}     & \ghost{U_A^\dag}     & \qw \\
        \lstick{1 \ \bigl\{} & \targ & \ctrl{-1}     & \ctrlo{-1}          & \qw \\
        }
      \]
      \subcaption{
        \ $(\alpha, a, 2\epsilon)$-block-encoding of $n+1$-qubit matrix $\bar{A} = A \otimes \gaiseki{0}{1} + A^\dag \otimes \gaiseki{1}{0}$.
      }
      \label{qc:extendedbe}
    \end{minipage}
    \hspace{1mm}
    \begin{minipage}{0.5\hsize}
      \centering
        \[
          \Qcircuit @C=2em @R=2em {
          \lstick{\ket{0^a}} & \qw                      & \multigate{2}{\tilde{U}_{\bar{A}^{-1}}} & \qw      & \qw        & \qw \\
          \lstick{n \ \bigl\{} & \qswap                   & \ghost{\tilde{U}_{\bar{A}^{-1}}}        & \qw      & \qswap     & \qw \\
          \lstick{1 \ \bigl\{} & \qswap\qwx               & \ghost{\tilde{U}_{\bar{A}^{-1}}}        & \targ & \qswap\qwx & \qw \\
          }
        \]
        \subcaption{
          $(\beta, a+1, \epsilon)$-block-encoding of $n$-qubit matrix $A^{-1}$
        }
        \label{qc:return-inv}
    \end{minipage}
  \end{tabular}
  \caption{
    Block-encodings for $\bar{A}$ and $A^{-1}$,
    where $U_A$ is an $(\alpha, a, \epsilon)$-block-encoding of $n$-qubit matrix $A$ and
    $\tilde{U}_{\bar{A}^{-1} }$ is a $(\beta, a, \epsilon)$-block-encoding of $n+1$-qubit matrix $\bar{A}^{-1}$. 
    The circuits (a) and (b) is related to \Cref{prop:extendedbe,prop:return-inv}, respectively. 
  }
\end{figure}

\subsection{Diagonal matrix}
We now describe a naive construction of the block-encoding of diagonal matrices. 
This is used to construct the block-encoding of the block-diagonal matrix in the later section.   
The circuit of the block-encoding of diagonal matrices is shown in \Cref{qc:BEdiag}.

\begin{prop}[Block-encoding of diagonal matrix]\label{lem:diagbe}
  Let $D$ be an $m$-qubit diagonal matrix $D = \diag(d_0, d_1, \dots, d_{M-1}) $ and let $d_{\max} = \max_k \abs{d_k}$.  
  Suppose that $\cos(\phi_k/2) = \abs{d_k}/d_{\max}$ and $\varphi_k/2 = \arg(d_k)$.
  Then, $U_D = \sum_k (R_z(\varphi_k)R_y(\phi_k)) \otimes \gaiseki{k}{k}$ is 
  a $(d_{\max}, 1, 0)$-block-encoding of the diagonal matrix $D$, 
  where $R_y(\phi) = \e^{\im(\phi/2)\sigma_y}$ and $R_z(\varphi) = \e^{\im(\varphi/2)\sigma_z}$, respectively. 
\end{prop}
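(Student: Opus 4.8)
The plan is to verify the two defining requirements of a block-encoding directly, since the claimed error is exactly $0$ and no approximation is involved. First I would check that $U_D$ is a genuine $(1+m)$-qubit unitary: each factor $R_z(\varphi_k)R_y(\phi_k)$ is a product of single-qubit unitaries, and the $\gaiseki{k}{k}$ are orthogonal rank-one projectors with $\sum_k \gaiseki{k}{k} = I_m$, so $U_D$ is a controlled rotation; computing $U_D^\dag U_D = \sum_k \bigl(R_y(\phi_k)^\dag R_z(\varphi_k)^\dag R_z(\varphi_k) R_y(\phi_k)\bigr) \otimes \gaiseki{k}{k} = I_1 \otimes I_m$ confirms unitarity. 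One also notes that $\phi_k$ is well defined since $\abs{d_k}/d_{\max} \in [0,1]$, so $\arccos$ applies.

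The core of the argument is to evaluate the top-left block $(\bra{0} \otimes I_m) U_D (\ket{0} \otimes I_m)$, where the single ancilla qubit is the left tensor factor. Because $U_D$ is block-diagonal in the computational basis of the $m$-qubit register, this equals $\sum_k \bigl(\bra{0} R_z(\varphi_k) R_y(\phi_k) \ket{0}\bigr)\, \gaiseki{k}{k}$, reducing everything to the scalar $\bra{0} R_z(\varphi_k) R_y(\phi_k) \ket{0}$. Writing the $2\times 2$ matrices in the conventions fixed in the statement, $R_y(\phi) = \e^{\im(\phi/2)\sigma_y}$ sends $\ket{0} \mapsto \cos(\phi/2)\ket{0} - \sin(\phi/2)\ket{1}$, and $R_z(\varphi) = \e^{\im(\varphi/2)\sigma_z}$ multiplies $\ket{0}$ by $\e^{\im\varphi/2}$; hence $\bra{0} R_z(\varphi_k) R_y(\phi_k)\ket{0} = \cos(\phi_k/2)\,\e^{\im\varphi_k/2}$. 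Substituting the hypotheses $\cos(\phi_k/2) = \abs{d_k}/d_{\max}$ and $\varphi_k/2 = \arg(d_k)$ yields $\frac{\abs{d_k}}{d_{\max}}\,\e^{\im \arg(d_k)} = \frac{d_k}{d_{\max}}$.

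Assembling the pieces, $(\bra{0} \otimes I_m) U_D (\ket{0} \otimes I_m) = \frac{1}{d_{\max}} \sum_k d_k \gaiseki{k}{k} = \frac{1}{d_{\max}} D$, so $\norm{D - d_{\max}(\bra{0} \otimes I_m) U_D (\ket{0} \otimes I_m)} = 0$; moreover $\norm{D} = d_{\max}$, which is consistent with the standing assumption $\norm{D} \le d_{\max}$. This establishes that $U_D$ is a $(d_{\max}, 1, 0)$-block-encoding of $D$.

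There is essentially no hard step: the whole argument is a one-line matrix identity together with the block-diagonal structure of $U_D$. The only place requiring care is the sign and phase bookkeeping induced by the nonstandard convention $R_y(\phi) = \e^{+\im(\phi/2)\sigma_y}$, $R_z(\varphi) = \e^{+\im(\varphi/2)\sigma_z}$ (rather than the more common $\e^{-\im\theta X/2}$ form), and keeping straight which tensor slot carries the ancilla versus the $m$-qubit register; writing out the explicit $2\times 2$ matrices of $R_y$ and $R_z$ removes both ambiguities.
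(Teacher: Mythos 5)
Your proposal is correct and follows essentially the same route as the paper's proof: reduce the top-left block to the scalars $\bra{0}R_z(\varphi_k)R_y(\phi_k)\ket{0} = d_k/d_{\max}$ using the block-diagonal structure over $\gaiseki{k}{k}$, and conclude the error is zero. You merely spell out the $2\times 2$ computation and the unitarity check that the paper leaves implicit with ``by definitions.''
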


\begin{proof}
  By definitions, $\bra{0}(R_z(\varphi_k)R_y(\phi_k))\ket{0} = \frac{d_k}{d_{\max}}$. 
  Thus, from the definition of the block-encoding, it follows that 
  \begin{align}
    \bigCnorm{D - d_{\max} (\bra{0} \otimes I_m)U_D(\ket{0} \otimes I_m)}
    &=
    \bigCnorm{D - d_{\max} (\bra{0} \otimes I_m) \sum_k (R_z(\varphi_k)R_y(\phi_k)) \otimes \gaiseki{k}{k} (\ket{0} \otimes I_m)} \notag \\
    &=
    \bigCnorm{D - d_{\max} \sum_k \bra{0} R_z(\varphi_k)R_y(\phi_k)\ket{0} \otimes \gaiseki{k}{k} } \notag \\
    &= 0. 
  \end{align}
\end{proof}

\begin{figure}[htbp]
  \begin{center}
  \[
    \Qcircuit @C=1em @R=1em {
    \lstick{\ket{0}} & \gate{R_z(\varphi_0)} & \gate{R_y(\phi_0)} & \gate{R_z(\varphi_1)} & \gate{R_y(\phi_1)} & \qw & \cdots & & \gate{R_z(\varphi_{M-1})} & \gate{R_z(\phi_{M-1})} & \qw \\
    \lstick{}        & \ctrl{-1}             & \ctrl{-1}             & \ctrl{-1}             & \ctrl{-1}             & \qw & \cdots & & \ctrlo{-1}                & \ctrlo{-1}                & \qw \\
    \lstick{}        & \ctrl{-1}             & \ctrl{-1}             & \ctrl{-1}             & \ctrl{-1}             & \qw & \cdots & & \ctrlo{-1}                & \ctrlo{-1}                & \qw \\
    \lstick{ {\raisebox{2.4em}{$m$ \ } } }        & \ctrl{-1}             & \ctrl{-1}             & \ctrlo{-1}            & \ctrlo{-1}            & \qw & \cdots & & \ctrlo{-1}                & \ctrlo{-1}                & \qw     \gategroup{2}{1}{4}{1}{.8em}{\{}
    }
  \]   
  \end{center}   
\caption{
  $(d_{\max}, 1, 0)$-block-encoding $U_D = \sum_k (R_z(\varphi_k)R_y(\phi_k)) \otimes \gaiseki{k}{k}$ of the diagonal matrix $D = \diag(d_0, d_1, \dots, d_{M-1})$, 
  where $d_{\max} = \max_k \abs{d_k}$, $\phi_k = 2\arccos\left(\abs{d_k}/d_{\max} \right)$, and $\varphi_k = 2\arg(d_k)$. 
  This circuit is related to \Cref{lem:diagbe}. 
}
\label{qc:BEdiag}
\end{figure}

\section{Block-Encoding of Linear Combination of Inverse Matrices}\label{sec:MainAlgorithm}
In this section, we describe the construction of the block-encoding of a linear combination of inverses. 
Specifically, we construct the block-encoding of a matrix $F$ of the form 
\begin{align}
  F =  \sum_{k=0}^{M-1} w_k A_k^{-1}, 
  \label{eq:F}
\end{align}
under the assumption that the block-encoding $U_\sfA$ of a block-diagonal matrix $\sfA = \diag(A_0, A_1, \dots, A_{M-1})$ and 
the state-preparation pair $(P_L, P_R)$ for $\vec{w} = (w_0,w_1,\dots,w_{M-1})$ are given. 
The reason for considering this is that the matrix function, 
which is represented as a contour integral, can be approximated by 
a matrix in the form of \eqref{eq:F} using an appropriate quadrature. 
We use here the proposition on the inverse in the previou section. 
In general, the matrices $A_k$ obtained from the approximation of the matrix functions is not Hermitian. 
Therefore, we consider including the extention of matrices. 
By using the derivations in this section, we can replace the problem of constructing the block-encoding of the approximation 
with the problem of constructing a block-encoding of $\sfA$ and the state-preparation pair for $\vec{w}$.

%

In the following lemma, we describe how we can construct a block-encoding of a linear combination of the form $F = \sum_k w_k A_k^{-1}$ 
by combining the block-encoding of $\sfA^{-1}$ and the state-preparation pair for $\vec{w} = (w_0,w_1,\dots,w_{M-1})$. 

\begin{lemma}\label{thm:be-lc-invs}
  Suppose that $A_0,A_1,\dots,A_{M-1}$ are $n$-qubit matrices.  
  Let $\sfA = \diag(A_0, A_1, \dots, A_{M-1})$ be an $n+m$-qubit block-diagonal matrix and let $\beta$ be a real number such that $\norm{\sfA^{-1}} \le \beta $. 
  If $U_{\sfA^{-1}}$ is a $(\tilde{\beta}, \tilde{a}, \tilde{\epsilon} )$-block-encoding of an $n+m$-qubit matrix $\sfA^{-1}$ 
  and $(P_L, P_R)$ is a $(\mu, m, \delta)$-state-preparation pair for $\vec{w} = (w_0,w_1,\dots,w_{M-1})$, 
  then 
  \begin{align}
   U_F := (I_{\tilde{a}} \otimes P_L^\dag \otimes I_n)U_{\sfA^{-1}}(I_{\tilde{a}} \otimes P_R \otimes I_n)
  \end{align}
  is a $(\tilde{\beta}\mu, \tilde{a}+m, \beta \delta  + \mu\tilde{\epsilon} )$-block-encoding of $F = \sum_{j=0}^{M-1} w_j A_j^{-1}$. 
\end{lemma}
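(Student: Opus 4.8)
The statement relates a block-encoding of $\sfA^{-1}$ and a state-preparation pair for $\vec w$ to a block-encoding of $F=\sum_j w_j A_j^{-1}$. The key structural observation is that $F$ is essentially obtained from $\sfA^{-1}$ by ``sandwiching'' with the state-preparation-pair unitaries on the $m$-qubit register that indexes the diagonal blocks. Concretely, if $\sfA^{-1}=\diag(A_0^{-1},\dots,A_{M-1}^{-1})=\sum_{k}\gaiseki{k}{k}\otimes A_k^{-1}$ (acting on the $m$-qubit index register tensored with the $n$-qubit register), then $(\bra{0^m}\otimes I_n)(P_L^\dag\otimes I_n)\sfA^{-1}(P_R\otimes I_n)(\ket{0^m}\otimes I_n)$ picks out $\sum_k c_k^\ast d_k A_k^{-1}$ where $V_L\ket{0^m}=\sum_j c_j\ket j$, $V_R\ket{0^m}=\sum_j d_j\ket j$. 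Multiplying by $\mu$ and invoking the state-preparation-pair condition $\sum_j|\mu c_j^\ast d_j - w_j|\le\delta$ turns this into $F$ up to controlled error. So the proof is really a computation of $(\bra{0^{\tilde a+m}}\otimes I_n)U_F(\ket{0^{\tilde a+m}}\otimes I_n)$ followed by a triangle-inequality error split.

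\textbf{Step-by-step.} First I would write $F' := \sum_{k=0}^{M-1} (\mu c_k^\ast d_k) A_k^{-1}$ as an intermediate ``idealized'' target, noting that $F'$ is exactly $\mu$ times the $m$-index-register compression of $\sfA^{-1}$ after the $P_R,P_L^\dag$ sandwich (using that $c_k^\ast d_k=0$ for $k\ge M$ so the sum effectively runs only over the $M=2^m$ valid indices, consistent with $t\le M$). Second, using the definition of block-encoding for $U_{\sfA^{-1}}$, I would estimate
\begin{align}
  \bigCnorm{ \sfA^{-1} - \tilde\beta(\bra{0^{\tilde a}}\otimes I_{m+n})U_{\sfA^{-1}}(\ket{0^{\tilde a}}\otimes I_{m+n}) } \le \tilde\epsilon,
\end{align}
and then conjugate this by $(P_L^\dag\otimes I_n)$ on the left and $(P_R\otimes I_n)$ on the right, followed by compression with $\bra{0^m}\otimes I_n$ and $\ket{0^m}\otimes I_n$; since $P_L^\dag,P_R$ are unitary the spectral norm is preserved under the conjugation before compression, and compression can only decrease the norm, so this contributes at most $\mu\tilde\epsilon$ after scaling by $\mu$. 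Third, I would bound $\bigCnorm{F - F'} = \bigCnorm{\sum_k (w_k - \mu c_k^\ast d_k)A_k^{-1}}$ by $\sum_k |w_k - \mu c_k^\ast d_k|\,\norm{A_k^{-1}} \le \beta\sum_k|w_k-\mu c_k^\ast d_k| \le \beta\delta$, using $\norm{A_k^{-1}}\le\norm{\sfA^{-1}}\le\beta$ (each $A_k^{-1}$ is a diagonal block of $\sfA^{-1}$). Finally, combining the two bounds by the triangle inequality gives the claimed error $\beta\delta + \mu\tilde\epsilon$, while the scaling factor $\tilde\beta\mu$ and ancilla count $\tilde a+m$ are read off directly from the construction.

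\textbf{Main obstacle.} The only genuinely delicate point is the bookkeeping of which register is which: in $U_F = (I_{\tilde a}\otimes P_L^\dag\otimes I_n)U_{\sfA^{-1}}(I_{\tilde a}\otimes P_R\otimes I_n)$ the operator $P_L^\dag,P_R$ act on the $m$-qubit block-index register sitting \emph{between} the $\tilde a$ ancilla of $U_{\sfA^{-1}}$ and the $n$-qubit data register, whereas $U_{\sfA^{-1}}$ ``sees'' $\sfA^{-1}$ as an $(n+m)$-qubit matrix. I would need to be careful that $(\bra{0^{\tilde a}}\otimes I_{m+n})U_{\sfA^{-1}}(\ket{0^{\tilde a}}\otimes I_{m+n})$ acts on the same $(m+n)$-qubit space that $P_L^\dag\otimes I_n$ and $P_R\otimes I_n$ act on, so that the conjugation identity
\begin{align}
  (\bra{0^{\tilde a}}\otimes I_{m+n})U_F(\ket{0^{\tilde a}}\otimes I_{m+n}) = (P_L^\dag\otimes I_n)\Bigl[(\bra{0^{\tilde a}}\otimes I_{m+n})U_{\sfA^{-1}}(\ket{0^{\tilde a}}\otimes I_{m+n})\Bigr](P_R\otimes I_n)
\end{align}
holds; this is where the ordering convention for tensor factors in the paper's block-encoding calculus (the same convention exploited in \Cref{prop:productbe}) has to be applied consistently. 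Once that is pinned down, the rest is a routine triangle-inequality argument.
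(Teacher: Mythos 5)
Your proposal is correct and follows essentially the same route as the paper's proof: the same compression identity moving $P_L^\dag, P_R$ past the $\tilde a$-ancilla projection, the same intermediate target $\sum_k \mu c_k^\ast d_k A_k^{-1}$, and the same triangle-inequality split into a $\beta\delta$ term (via $\norm{A_k^{-1}}\le\norm{\sfA^{-1}}\le\beta$) and a $\mu\tilde\epsilon$ term (via the unit vectors $\ket{c},\ket{d}$ not increasing the spectral norm). No gaps.
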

\begin{proof}
  For any $m$-qubit matrix $P$, we have 
  \begin{align}
    \Bigl( \bra{0^{\tilde{a}}} \otimes \bra{0^m} \otimes I_n \Bigr) \Bigl( I_{\tilde{a}} \otimes P \otimes I_n \Bigr)
    &= \Bigl( \bra{0^m} \otimes I_n \Bigr) \Bigl( \bra{0^{\tilde{a}}} \otimes I_{n+m} \Bigr) \Bigl( I_{\tilde{a}} \otimes P \otimes I_n \Bigr) \notag \\
    &= \Bigl( \bra{0^m}  \otimes I_n \Bigr) \Bigl(P \otimes I_n\Bigr) \Bigl( \bra{0^{\tilde{a}}} \otimes I_{n+m} \Bigr) \notag \\
    &= \Bigl( \bra{0^m}P \otimes I_n ) \Bigl( \bra{0^{\tilde{a}}} \otimes I_{n+m} \Bigr). 
  \end{align}
  We write $P_L\ket{0^m} = \ket{c} = \sum_{k=0}^{M-1}c_k\ket{k}, P_R\ket{0^m} = \ket{d} = \sum_{k=0}^{M-1}d_k\ket{k}$. 
  From the above, 
  \begin{align*}
    &  \Bigl( \bra{0^{a+m}} \otimes I_n \Bigr) U_F \Bigl( \ket{0^{a+m}}  \otimes I_n \Bigr)
       \\
    &= \Bigl( \bra{0^{\tilde{a}}} \otimes \bra{0^m} \otimes I_n \Bigr)
       (I_{\tilde{a}} \otimes P_L^\dag \otimes I_n)
       U_{\sfA^{-1}}
       (I_{\tilde{a}} \otimes P_R \otimes I_n)
       \Bigl( \ket{0^{\tilde{a}}} \otimes \ket{0^m} \otimes I_n \Bigr)
       \\
    &= \Bigl( (\bra{0^m}P_L^\dag ) \otimes I_n \Bigr)
       \Bigl(  \bra{0^{\tilde{a}}} \otimes I_{n+m} \Bigr)
       U_{\sfA^{-1}}
       \Bigl( \ket{0^{\tilde{a}}} \otimes I_{n+m} \Bigr)
       \Bigl( (P_R\ket{0^m}) \otimes I_n \Bigr)
      \\
    &= \Bigl(  \bra{c} \otimes I_n \Bigr) \Bigl( \bra{0^{\tilde{a}}} \otimes I_{n+m} \Bigr) U_{\sfA^{-1}} \Bigl( \ket{0^{\tilde{a}}} \otimes I_{n+m} \Bigr) \Bigl( \ket{d} \otimes I_n \Bigr) 
  \end{align*}
  holds. 
  By the triangle inequality, we have 
  \begin{align}
    & \bigDnorm{ F - \tilde{\beta}\mu \Bigl( \bra{c} \otimes I_n \Bigr) \Bigl( \bra{0^{\tilde{a}}} \otimes I_{n+m} \Bigr) U_{\sfA^{-1}} \Bigl( \ket{0^{\tilde{a}}} \otimes I_{n+m} \Bigr) \Bigl( \ket{d} \otimes I_n \Bigr) }
    \notag \\
    &\le \bigDnorm{
      F - \mu \Bigl( \bra{c} \otimes I_n \Bigr) \sfA^{-1} \Bigl( \ket{d} \otimes I_n \Bigr)
    }
    \notag \\
    & \qquad\quad + \mu \Biggr\lVert
      \Bigl( \bra{c}  \otimes I_n \Bigr) \sfA^{-1} \Bigl( \ket{d} \otimes I_n \Bigr) -
      \tilde{\beta}\Bigl( \bra{c} \otimes I_n \Bigr) \Bigl( \bra{0^{\tilde{a}}} \otimes I_{n+m} \Bigr) U_{\sfA^{-1}} \Bigl( \ket{0^{\tilde{a}}} \otimes I_{n+m} \Bigr) \Bigl( \ket{d} \otimes I_n \Bigr)
    \Biggr\rVert
    \notag \\
    & = \bigDnorm{
      F - \mu \Bigl( \bra{c} \otimes I_n \Bigr) \sfA^{-1} \Bigl( \ket{d} \otimes I_n \Bigr)
    }
    \label{eq:at-thm:be-lc-invs-1} \\
    & \qquad\quad + \mu \Biggr\lVert
      \Bigl( \bra{c}  \otimes I_n \Bigr)
      \Bigl\{ \sfA^{-1} - \tilde{\beta} \Bigl( \bra{0^{\tilde{a}}} \otimes I_{n+m} \Bigr) U_{\sfA^{-1}} \Bigl( \ket{0^{\tilde{a}}} \otimes I_{n+m} \Bigr)
      \Bigr\}
      \Bigl( \ket{d} \otimes I_n \Bigr)
    \Biggr\rVert. 
    \label{eq:at-thm:be-lc-invs-2}
  \end{align}
  We focus on the first part of the above, \eqref{eq:at-thm:be-lc-invs-1}. 
  As $\sfA^{-1} = \sum_{j=0}^{M-1} \gaiseki{j}{j} \otimes A_k^{-1}$, we have that    
  \begin{align*}
   \bigDnorm{
      F - \mu \Bigl( \bra{c} \otimes I_n \Bigr) \sfA^{-1} \Bigl( \ket{d} \otimes I_n \Bigr)
    }
  &=\bigDnorm{
      F - \mu \Bigl( \bra{c} \otimes I_n \Bigr)
      \l( \sum_{k=0}^{M-1} \gaiseki{k}{k}\otimes A_k^{-1} \r)
      \Bigl( \ket{d} \otimes I_n \Bigr)
    } \\
    &=
    \bigDnorm{
        \sum_{k=0}^{M-1}w_kA_k^{-1}  - \mu  \sum_{k=0}^{M-1} c_k^\ast d_k A_k^{-1}
      } \\
    &=
    \bigDnorm{
        \sum_{k=0}^{M-1} (w_k - \mu c_k^\ast d_k) A_k^{-1}
      } \\
    &\le \sum_{k=0}^{M-1} \abs{w_k - \mu c_k^\ast d_k} \norm{A_k^{-1}} \\
    &\le \delta\beta. 
  \end{align*}
  The second part is bounded by $\mu\tilde{\epsilon}$ because $U_{\sfA^{-1}}$ is the $(\tilde{\beta}, \tilde{a}, \tilde{\epsilon} )$-block-encoding of $\sfA^{-1}$.    
\end{proof}

Using the above lemma, 
we now describe the block-encoding of the linear combination of inverses \eqref{eq:F} in the case where the block-encoding of $\sfA$ and the state-preparation pair for $\vec{w}$ are given. 
This is used as a component in the later section to construct the block-encoding of matrix functions. 

\begin{prop}[Block-encoding of linear combination of inverses]\label{thm:be-F}
  Suppose that $A_0,A_1,\dots A_{M-1}$ are $n$-qubit matrices.  
  Let $\alpha$ and $\beta$ be real numbers such that $\max_k \{\norm{A_k}\} \le \alpha$ and $\max_k \{\norm{A_k^{-1}}\} \le \beta$. 
  Let $r > 0$ and $F = \sum_k w_k A_k^{-1}$. 
  If $U_\sfA$ is an $(\alpha, a, \epsilon)$-block-encoding of $\sfA = \diag(A_0,A_1,\dots,A_{M-1})$ 
  and $(P_L, P_R)$ is a $(\mu, m, \delta_{\text{sp}})$-state-preparation pair for $\vec{w} = (w_0,w_1,\dots,w_{M-1})$, 
  then there is a circuit $U_F$ that is the $(\tau, a+m+3, \eta) $-block-encoding of $rF$, where 
  \begin{align}
  \tau = r\tilde{\beta}\mu = \frac{16}{3}r\beta\mu, \qquad
  \eta = r(\beta\delta_{\text{sp}} + \mu\tilde{\epsilon})
  = r\beta\left(\delta_{\text{sp}} + \frac{16}{3} \mu \left(4d \sqrt{\frac{2\epsilon}{\alpha}} + \delta\right)\right), 
  \end{align}
  and $\tilde{\beta} = \frac{16}{3}\beta, \tilde{\epsilon} = \left(4d\sqrt{ \frac{2\epsilon}{\alpha} } + \delta \right) \tilde{\beta} \ (\delta \in (0, \frac{3}{4}])$. 
  Further, the circuit $U_F$ consists of $O(d)$ uses of $U_\sfA$, single uses of $P_L,P_R$, and $O((a+1)d)$ other one- and two-qubit gates, 
  where   
  \begin{align}
    d = O\left(\alpha\beta \log\left(\frac{1}{\delta}\right)\right). 
  \end{align}
  Moreover, we can compute a description of such a circuit with a classical computer in $\poly(d, \log(\frac{1}{\delta}))$ time.   
\end{prop}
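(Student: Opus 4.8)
The plan is to assemble the circuit $U_F$ from the building blocks already developed in \Cref{sec:tools,sec:MainAlgorithm}, tracking the block-encoding parameters at each stage. First I would take the given $(\alpha, a, \epsilon)$-block-encoding $U_\sfA$ of $\sfA = \diag(A_0,\dots,A_{M-1})$ and form the extended matrix $\bar{\sfA} = \sfA \otimes \gaiseki{0}{1} + \sfA^\dag \otimes \gaiseki{1}{0}$. By \Cref{prop:extendedbe}, the circuit $(U_\sfA \otimes \gaiseki{0}{0} + U_\sfA^\dag \otimes \gaiseki{1}{1})(I_{a+n+m} \otimes \sigma_x)$ is an $(\alpha, a, 2\epsilon)$-block-encoding of $\bar{\sfA}$. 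Since $\norm{\bar{\sfA}} = \norm{\sfA} \le \alpha$ and the nonzero eigenvalues of $\bar{\sfA}$ are $\pm\sigma_j$ where $\sigma_j$ are the singular values of $\sfA$, and $\norm{\sfA^{-1}} = \max_k \norm{A_k^{-1}} \le \beta$ gives $\sigma_j \ge 1/\beta$, the eigenvalues of $\bar{\sfA}$ lie in $[-\alpha, -1/\beta] \cup [1/\beta, \alpha]$. (Here I use that $\norm{\sfA^{-1}}$ equals the reciprocal of the smallest singular value of $\sfA$, and that $\sfA^{-1} = \diag(A_0^{-1},\dots,A_{M-1}^{-1})$ so its norm is $\max_k\norm{A_k^{-1}}$.)

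Next I would apply \Cref{prop:inversebe} to this block-encoding of $\bar{\sfA}$: with input parameters $(\alpha, a, 2\epsilon)$ and eigenvalue range controlled by $\beta$, we obtain a $(\tilde{\beta}, a+2, \tilde{\epsilon})$-block-encoding of $\bar{\sfA}^{-1}$, where $\tilde{\beta} = \frac{16}{3}\beta$ and $\tilde{\epsilon} = (4d\sqrt{2\epsilon/\alpha} + \delta)\tilde{\beta}$, using degree $d = O(\alpha\beta\log(1/\delta))$, with $O(d)$ uses of the block-encoding of $\bar{\sfA}$ — hence $O(d)$ uses of $U_\sfA$ — and $O((a+1)d)$ other gates. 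Then by \Cref{prop:return-inv}, inserting a $\sigma_x$ and two swaps converts this into a $(\tilde{\beta}, a+3, \tilde{\epsilon})$-block-encoding $U_{\sfA^{-1}}$ of $\sfA^{-1} = \diag(A_0^{-1},\dots,A_{M-1}^{-1})$ itself (the ancilla count goes from $a+2$ to $a+3$ because of the extra qubit from the extension). Here I would note that $\norm{\sfA^{-1}} \le \beta$, so $\beta$ is a valid norm bound for the inverse, as required to invoke the next lemma.

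Finally I would apply \Cref{thm:be-lc-invs} with the $(\tilde{\beta}, a+3, \tilde{\epsilon})$-block-encoding $U_{\sfA^{-1}}$ and the given $(\mu, m, \delta_{\text{sp}})$-state-preparation pair $(P_L, P_R)$ for $\vec{w}$: this yields $U_F := (I_{a+3} \otimes P_L^\dag \otimes I_n)U_{\sfA^{-1}}(I_{a+3} \otimes P_R \otimes I_n)$, which is a $(\tilde{\beta}\mu, a+m+3, \beta\delta_{\text{sp}} + \mu\tilde{\epsilon})$-block-encoding of $F = \sum_k w_k A_k^{-1}$. Scaling by $r > 0$ (using the elementary property that an $(\alpha',a',\epsilon')$-block-encoding of $A$ is a $(c\alpha', a', c\epsilon')$-block-encoding of $cA$), the same circuit $U_F$ is a $(\tau, a+m+3, \eta)$-block-encoding of $rF$ with $\tau = r\tilde{\beta}\mu = \frac{16}{3}r\beta\mu$ and $\eta = r(\beta\delta_{\text{sp}} + \mu\tilde{\epsilon}) = r\beta(\delta_{\text{sp}} + \frac{16}{3}\mu(4d\sqrt{2\epsilon/\alpha} + \delta))$. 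The gate count and the classical preprocessing time are inherited directly from \Cref{prop:inversebe}, plus the single uses of $P_L, P_R$ and $O(1)$ extra gates from the extension and unextension steps, which are absorbed into $O((a+1)d)$.

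The routine parts are the bookkeeping of ancilla-qubit counts and error-parameter propagation through the three composition steps. The one place needing genuine care — the main obstacle — is justifying the eigenvalue localization $[-\alpha, -1/\beta]\cup[1/\beta,\alpha]$ for $\bar{\sfA}$: one must argue that the hypothesis $\max_k\norm{A_k^{-1}} \le \beta$ indeed bounds $\norm{\sfA^{-1}}$ (immediate from the block-diagonal structure), hence bounds the smallest singular value of $\sfA$ from below by $1/\beta$, which in turn places all nonzero eigenvalues of the Hermitian extension $\bar{\sfA}$ outside $(-1/\beta, 1/\beta)$ — so that the polynomial approximation of $1/x$ from \Cref{cor69inGSLW2018} is valid on the relevant domain after rescaling by $\alpha$. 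The factor $2\epsilon$ inside the square root in $\tilde\epsilon$, rather than $\epsilon$, is exactly the error amplification incurred by the extension step in \Cref{prop:extendedbe}, and this must be fed correctly into \Cref{prop:inversebe}.
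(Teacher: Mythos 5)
Your proposal follows exactly the same route as the paper's proof: extend $\sfA$ to the Hermitian $\bar{\sfA}$ via \Cref{prop:extendedbe}, invert with \Cref{prop:inversebe}, undo the extension with \Cref{prop:return-inv}, combine with the state-preparation pair via \Cref{thm:be-lc-invs}, and rescale by $r$; the parameter bookkeeping, including the $2\epsilon$ inside the square root, matches. Your added justification of the eigenvalue localization $[-\alpha,-1/\beta]\cup[1/\beta,\alpha]$ is a point the paper merely asserts, so the proposal is correct and, if anything, slightly more explicit.
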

\begin{proof}
  Let $\bar{\sfA}$ be an extended matrix of $\sfA$, i.e., let $\bar{\sfA} = \sfA \otimes \gaiseki{0}{1} + \sfA^\dag \otimes \gaiseki{1}{0}$. 
  From \Cref{prop:extendedbe}, we can construct an $(\alpha, a, 2\epsilon)$-block-encoding $U_{\bar{\sfA}}$ of 
  the extended matrix $\bar{\sfA}$ through two uses of the operation $U_\sfA$. 
  The extended matrix $\bar{\sfA}$ is Hermitian and satisfies $\norm{\bar{\sfA}} = \norm{\sfA}$. 
  The eigenvalues of $\bar{\sfA}$ range over $\calI = [-\alpha, -1/\beta] \cup [1/\beta, \alpha]$. 
  Thus, from \Cref{prop:inversebe}, 
  a $(\tilde{\beta}, a+2, \tilde{\epsilon} )$-block-encoding $\tilde{U}_{\bar{\sfA}^{-1}}$ of $\bar{\sfA}^{-1}$ 
  can be constructed by $O(d)$ uses of $U_\sfA$ and $O((a+1)d)$ other one- and two-qubit gates. 
  Moreover, we can compute a description of such a circuit with a classical computer in $\poly(d, \log(\frac{1}{\delta}))$ time.  
  Hence, from  \cref{prop:return-inv}, 
  \begin{align}
    \tilde{U}_{\sfA^{-1}} := 
    \Bigl(I_{a+2} \otimes \SWAP_1^{n+m+1} \Bigr)
    (I_{a+2+n+m} \otimes \sigma_x)\tilde{U}_{\bar{\sfA}^{-1}}
    \Bigl(I_{a+2} \otimes \SWAP_1^{n+m+1} \Bigr)
  \end{align}
  is the 
  $(\tilde{\beta}, a+3, \tilde{\epsilon} )$-block-encoding of $\sfA^{-1}$. 
  Thus, 
  it follows from \cref{thm:be-lc-invs} that 
  \begin{align}
   U_F = (I_{a+3} \otimes P_L^\dag \otimes I_n ) \tilde{U}_{\sfA^{-1}} (I_{a+3} \otimes P_R \otimes I_n ) 
  \end{align}
  is a $(\tilde{\beta}\mu, a+m+3, \beta\delta_{\text{sp}} + \mu\tilde{\epsilon})$-block-encoding of $\sum_k w_k A_k^{-1}$. 
  In other words, 
  $U_F$ is the $(r\tilde{\beta}\mu, a+m+3, r(\beta\delta_{\text{sp}} + \mu\tilde{\epsilon}))$-block-encoding of $r\sum_k w_k A_k^{-1}$. 
\end{proof}

\section{Block-Encoding of Matrix Functions}\label{sec:MatrixFunction}
In the previous section, we proved that we can construct the block-encoding of the linear combination of the inverses $F = \sum_k w_k A_k^{-1}$
when the block-encoding of $\sfA = \diag(A_0,A_1,\cdots,A_{M-1})$ and the state-preparation pair for $\vec{w} = (w_0,w_1,\dots,w_{M-1})$ are given. 
In this section, we consider the block-diagonal matrix $\sfA$ and the vector $\vec{w}$ with respect to the coefficients from the approximation for matrix functions.  
We then construct the block-encoding for matrices and the state-preparation pair for vectors. 
We consider two cases. 
\Cref{subsec:matfunc-circle} examines the case in which the contour is a circle, 
and  \Cref{subsec:notcircle} investigates the case in which the contour is not a circle.  
In particular, for the case in which the contour is a circle, we also consider the bounds of the singular values of $\sfA$.

\subsection{Cases in which the contour is a circle}\label{subsec:matfunc-circle}

In this section, we consider a case in which a circular contour encloses all eigenvalues of the matrix. 
In such the case, the matrix function of the form \Cref{eq:deffA} can be approximated by the trapezoidal rule with high accuracy. 
In the following definition, we describe the approximation. For the derivation of the approximation by the trapezoidal rule, see the appendix. 
\begin{defi}\label{def:trapezoidal}
  Let $M = 2^m \ (m \in \N)$ and let $\theta_k = \frac{2\pi\im}{M}k \ ( k \in [M])$. 
  Suppose that $f$ is an analytical function on the disk $\abs{z - z_0} \le R$
  and suppose that an $n$-qubit matrix $A$ satisfies $\norm{A - z_0 I_n} < r < R$. 
  Then, we define an approximation $f_M(A)$ to $f(A)$ by $M$-point trapezoidal rule as 
  \begin{align}
    f_M(A) = \frac{1}{M} \sum_{k=0}^{M-1} f(z_0 + r\e^{\im\theta_k}) r\e^{\im\theta_k} ( (z_0 + r\e^{\im\theta_k})I_n - A )^{-1}. 
    \label{eq:f_MA}
  \end{align}    
  Putting $w_k = \frac{1}{M}f(z_0 + r\e^{\im\theta_k})\e^{\im\theta_k}$ and $A_k = (z_0 + r\e^{\im\theta_k})I_n - A$, 
  we write the above as $f_M(A) =r \sum_{k=0}^{M-1} w_k A_k^{-1}$. 
\end{defi}

In the following, we first consider the block-encoding of the block-diagonal matrix $\sfA$. 
As mentioned in the earlier sections, we decompose $\sfA$ to the linear combination of tensor products and then construct the block-encoding of it using \Cref{prop:linearcobination_tensorproduct}. 
Next, we derive the upper bounds of $\norm{\sfA}, \norm{\sfA^{-1}}$. 
Further, we consider a state-preparation-pair for a vector that approximates $\vec{w} = (w_0,w_1,\dots,w_{M-1})$. 

\begin{prop} 
  \label{lem:sfA-case-circle}
  We use the same notation as in \Cref{def:trapezoidal}. 
  Suppose that $U_A$ is an $(\alpha, a, \epsilon_A)$-block-encoding of an $n$-qubit matrix $A$. 
  Let  $(Q_L, Q_R)$ be an $(\alpha', 2, 0)$-state-preparation pair for $(z_0, r, -\alpha, 0)$, 
  where $\alpha' = r + \alpha + \abs{z_0}$. 
  Let $\SWAP_{a, m} = \prod_{i=1}^a \SWAP_{i+m}^i$ and 
  \begin{align}
    W = 
      \gaiseki{00}{00} \otimes I_{a+n+m}
      + \gaiseki{01}{01} \otimes \mathsf{R} \otimes I_{a+n} 
      + \gaiseki{10}{10} \otimes I_m \otimes U_A
      + \gaiseki{11}{11} \otimes I_{a+n+m}, 
  \end{align}
  where $\mathsf{R} := \diag(\e^{\im\theta_0}, \e^{\im\theta_1},\dots,\e^{\im\theta_{M-1}}) = R_0 \otimes R_1 \otimes \cdots \otimes R_{m-1}$ and $R_j := \diag(1, \e^{2\pi\im/2^j})$. 
  Then,  
  \begin{align}
    U_\sfA &= \Bigl( Q_L^\dag \otimes \SWAP_{a, m} \otimes I_n \Bigr) W \Bigl( Q_R \otimes \SWAP_{a, m} \otimes I_n  \Bigr)
  \end{align}
  is an $(\alpha', a+2, \epsilon_A)$-block-encoding of the block-diagonal matrix $\sfA = \diag(A_0,A_1,\dots,A_{M-1})$. 
\end{prop}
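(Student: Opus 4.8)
The plan is to recognize $\sfA$ as a short linear combination of tensor products and then invoke \Cref{prop:linearcobination_tensorproduct} verbatim. First I would unfold the block-diagonal matrix: writing $\sfA = \sum_{k=0}^{M-1}\gaiseki{k}{k}\otimes A_k$, substituting $A_k = (z_0 + r\e^{\im\theta_k})I_n - A$, and using $\sum_k\gaiseki{k}{k} = I_m$ together with $\sum_k\e^{\im\theta_k}\gaiseki{k}{k} = \mathsf{R}$, one obtains
\begin{align*}
  \sfA = z_0\,(I_m\otimes I_n) \;+\; r\,(\mathsf{R}\otimes I_n) \;+\; (-1)\,(I_m\otimes A).
\end{align*}
This exhibits $\sfA = \sum_{j=0}^{2} y_j(A_j'\otimes B_j')$ with $(y_0,A_0',B_0') = (z_0,I_m,I_n)$, $(y_1,A_1',B_1') = (r,\mathsf{R},I_n)$, and $(y_2,A_2',B_2') = (-1,I_m,A)$, the left factors acting on the $m$-qubit block-index register and the right factors on the $n$-qubit system register.

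Next I would supply block-encodings of the factors. The identities $I_m,I_n$ and the diagonal unitary $\mathsf{R}$ (its entries all have modulus $1$, and it factors as the product of the single-qubit phase gates $R_0,\dots,R_{m-1}$, so it is implementable as claimed) each serve as their own $(1,0,0)$-block-encoding, while $A$ is covered by the given $(\alpha,a,\epsilon_A)$-block-encoding $U_A$. Feeding these into \Cref{prop:linearcobination_tensorproduct} --- with the roles of its ``$n$'' and ``$m$'' exchanged to match our orientation --- the vector of scaled coefficients is $(y_0\cdot 1\cdot 1,\ y_1\cdot 1\cdot 1,\ y_2\cdot 1\cdot\alpha) = (z_0,r,-\alpha)$, which I pad by one zero to length $4 = 2^2$; its $1$-norm is exactly $\abs{z_0}+r+\alpha = \alpha'$, so the hypothesis that $(Q_L,Q_R)$ be an $(\alpha',2,0)$-state-preparation pair for $(z_0,r,-\alpha,0)$ is precisely the needed input. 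Reading off the conclusion then gives scaling factor $\alpha'$, ancilla count $(\max_j a_j') + (\max_j b_j') + 2 = 0 + a + 2 = a+2$, and error $\sum_j\abs{y_j}(\alpha_j'\delta_j' + \beta_j'\epsilon_j')$, which collapses to the single surviving contribution $\abs{-1}\cdot(1\cdot\epsilon_A + \alpha\cdot 0) = \epsilon_A$ (the $j=0,1$ terms vanish since both of their factors are encoded exactly) --- i.e., an $(\alpha',a+2,\epsilon_A)$-block-encoding of $\sfA$.

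It then remains to check that the circuit \Cref{prop:linearcobination_tensorproduct} produces is literally the stated $U_\sfA = (Q_L^\dag\otimes\SWAP_{a,m}\otimes I_n)\,W\,(Q_R\otimes\SWAP_{a,m}\otimes I_n)$. Substituting $a_j' = 0$ for every $j$ and $(b_0',b_1',b_2') = (0,0,a)$, the factors $I_{a'-a_j'}$ collapse to nothing while $I_{b'-b_j'}$ equals $I_a$ for $j=0,1$ and nothing for $j=2$, so the controlled operator in that proposition becomes exactly the four-term $W$ (the $\gaiseki{11}{11}$-term being the padding for the unused index $j=3$), and its swap $\SWAP_{b',n'}$ becomes $\SWAP_{a,m}$ with the vanishing $I_{a'}$ factor dropped. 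I expect this last identification to be the only delicate point: one must keep straight, in the compact notation, which qubits form the block-index register, which are the $U_A$-ancilla, and which are the $A$-system, and confirm that the two $\SWAP_{a,m}$ insertions are exactly the swap gates that \Cref{prop:tensorproductbe} (through \Cref{prop:linearcobination_tensorproduct}) requires in order to turn the interleaved-ancilla product $I_m\otimes U_A$ into a genuine block-encoding of $I_m\otimes A$. No nontrivial norm estimate enters; once the decomposition and the parameter matching are in place, the statement is a direct corollary of the earlier proposition.
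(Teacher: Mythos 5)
Your proof is correct and takes essentially the same route as the paper's: the paper likewise decomposes $\sfA = z_0 I_{m+n} + r\,\mathsf{R}\otimes I_n - I_m\otimes A$ and cites \Cref{prop:linearcobination_tensorproduct}, while you additionally carry out the parameter matching and circuit identification explicitly.
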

\begin{proof}
  We can see that the block-diagonal matrix $\sfA$ can be decomposed as 
  $\sfA = z_0 I_{m+n} + r \sfR \otimes I_n - I_m \otimes A$. 
  Thus, the proposition follows from \Cref{prop:linearcobination_tensorproduct}. 
\end{proof}

\begin{prop}[Upper and lower bound of the singular values of $\sfA$]\label{lem:sfA-case-circle-cond}
  We use the same notation as in \Cref{def:trapezoidal}. 
  Let $\sfA = \diag(A_0,A_1,\dots,A_{M-1})$. Then,
  \begin{align}
    \norm{\sfA} \le r + \abs{z_0} + \norm{A} \quad\text{and}\quad \norm{\sfA^{-1}} \le \frac{1}{r - \norm{A - z_0I_n}}. 
  \end{align}
\end{prop}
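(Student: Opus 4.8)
The plan is to bound $\norm{\sfA}$ and $\norm{\sfA^{-1}}$ by reducing everything to bounds on the individual diagonal blocks $A_k = (z_0 + r\e^{\im\theta_k})I_n - A$, using the elementary fact that the spectral norm of a block-diagonal matrix equals the maximum of the spectral norms of its blocks. So first I would record $\norm{\sfA} = \max_k \norm{A_k}$ and $\norm{\sfA^{-1}} = \max_k \norm{A_k^{-1}}$ (the latter after noting each $A_k$ is invertible, which will follow from the hypothesis $\norm{A - z_0I_n} < r$).

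For the upper bound on $\norm{\sfA}$, I would apply the triangle inequality to $A_k = (z_0 + r\e^{\im\theta_k})I_n - A$, giving $\norm{A_k} \le |z_0 + r\e^{\im\theta_k}| + \norm{A} \le |z_0| + r + \norm{A}$, uniformly in $k$; taking the max yields the stated bound. For the bound on $\norm{\sfA^{-1}}$, the key step is to control $\norm{A_k^{-1}}$. I would write $A_k = r\e^{\im\theta_k}I_n - (A - z_0 I_n) = r\e^{\im\theta_k}\bigl(I_n - \tfrac{1}{r}\e^{-\im\theta_k}(A - z_0I_n)\bigr)$. Since $\norm{\tfrac{1}{r}\e^{-\im\theta_k}(A - z_0I_n)} = \tfrac{1}{r}\norm{A - z_0I_n} < 1$ by hypothesis, the Neumann series shows $I_n - \tfrac{1}{r}\e^{-\im\theta_k}(A-z_0I_n)$ is invertible with inverse of norm at most $\bigl(1 - \tfrac{1}{r}\norm{A-z_0I_n}\bigr)^{-1}$, and hence
\begin{align}
  \norm{A_k^{-1}} \le \frac{1}{r}\cdot\frac{1}{1 - \frac{1}{r}\norm{A - z_0I_n}} = \frac{1}{r - \norm{A - z_0I_n}}.
\end{align}
This bound is independent of $k$, so taking the maximum over $k$ gives $\norm{\sfA^{-1}} \le \bigl(r - \norm{A - z_0I_n}\bigr)^{-1}$.

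There is no real obstacle here; the only point that needs a little care is justifying that $\norm{\sfA^{-1}} = \max_k\norm{A_k^{-1}}$, i.e. that $\sfA$ is genuinely invertible and that inversion respects the block-diagonal structure — both of which are immediate once each $A_k$ is shown invertible by the Neumann-series argument above. One could alternatively phrase the block-norm fact directly in terms of singular values, matching the proposition's title: the singular values of $\sfA$ are the union of the singular values of the $A_k$, so $\sigma_{\max}(\sfA) = \max_k\sigma_{\max}(A_k)$ and $\sigma_{\min}(\sfA) = \min_k\sigma_{\min}(A_k) = 1/\max_k\norm{A_k^{-1}}$, giving the lower bound on the singular values as $r - \norm{A - z_0I_n}$.
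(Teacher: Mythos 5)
Your proposal is correct and follows essentially the same route as the paper: the bound on $\norm{\sfA^{-1}}$ is obtained by exactly the same Neumann-series argument applied to $A_k = r\e^{\im\theta_k}I_n - (A - z_0I_n)$, and the bound on $\norm{\sfA}$ differs only cosmetically (you apply the triangle inequality block-by-block and take the maximum, whereas the paper applies it to the global decomposition $\sfA = r\,\diag(\e^{\im\theta_0},\dots,\e^{\im\theta_{M-1}})\otimes I_n + I_m\otimes(z_0I_n - A)$). Your explicit justification that each $A_k$ is invertible and that inversion respects the block-diagonal structure is a welcome bit of care that the paper leaves implicit.
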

\begin{proof}
  The matrix $\sfA$ can be decomposed as $\sfA = r\diag(\e^{\im\theta_0},\dots,\e^{\im\theta_{M-1}})\otimes I_n + I_m \otimes (z_0 I_n - A)$. 
  Thus, $\norm{\sfA} \le r + \norm{z_0 I -  A} \le r + \abs{z_0} + \norm{A}$ holds. 
  Next, we consider the upper bound of $ \norm{\sfA^{-1}}$. 
  We can see that $\norm{\sfA^{-1}} = \max_k \{ \norm{A_k}^{-1} \}$ because $\sfA$ is the block-diagonal matrix in which each diagonal block is $A_k$. 
  From $A_k = (z_0 + r\e^{\im\theta_k})I_n - A = r\e^{\im\theta_k}I_n - (-z_0I_n + A)$ and $r > \norm{-z_0 I - A}$, 
  $\norm{A_k^{-1}} \le \frac{1}{r}\sum_{j=0}^\infty \norm{((-z_0I_n + A)/r\e^{\im\theta_k})} \le \frac{1}{r}(1 - \frac{\norm{-z_0I_n + A}}{r})^{-1} = 
  (r - \norm{z_0I - A})^{-1}$
  holds according to the Neumann series.  
\end{proof}

\begin{prop}\label{lem:st-case-circle}
  We use the same notation as in \Cref{def:trapezoidal}. 
  Let $f(z) = \sum_{\ell=0}^\infty a_\ell (z - z_0)^\ell$ be the Taylor series of $f(z)$ and let $\tilde{f}_L(z) = \sum_{\ell=0}^{L-1} a_\ell (z - z_0)^\ell$ be the truncated series at $L$-th term. 
  Let $\delta_L = (1 - \frac{r}{R})^{-1}(r/R)^L$ and let $\tilde{w}_k = \frac{1}{M}\tilde{f}_L(z_0 + r\e^{\im\theta_k})\e^{\im\theta_k}, \ \vec{\tilde{w}} = (\tilde{w}_0, \tilde{w}_1, \dots, \tilde{w}_{M-1})$. 
  We define $P_L$ and $P_R$ such that 
  \begin{align}  
    P_L\ket{0^{\otimes m}} = \frac{1}{ \sqrt{ \norm{ \vec{\tilde{w} } }_1 } }\sum_{k=0}^{M-1}\sqrt{\tilde{w}_k^\ast}\ket{k}
    \quad \text{and} \quad
    P_R\ket{0^{\otimes m}} = \frac{1}{ \sqrt{ \norm{ \vec{\tilde{w} } }_1 }}\sum_{k=0}^{M-1}\sqrt{\tilde{w}_k}\ket{k},
  \end{align}
  respectively. Then, $(P_L, P_R)$ is a $(\norm{\vec{w}}_1, m, 2 \norm{f}_{\infty} \delta_L)$-state-preparation-pair for $\vec{w}$, 
  where $\norm{f}_{\infty} = \max_{z:\abs{z - z_0} \le R} \abs{f(z)} $. 
  Moreover, we can perform $P_L, P_R$ with $O(M)$ runtime, and we can compute a description of such a circuit with a classical computer in $O(ML)$.     
\end{prop}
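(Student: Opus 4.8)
The plan is to recognize that $(P_L,P_R)$ is \emph{exactly} the exact (error-free) state-preparation pair for the \emph{truncated} coefficient vector $\vec{\tilde{w}}$, and then to absorb the discrepancy between $\vec{\tilde{w}}$ and the true vector $\vec{w}$ into the error parameter via a Taylor-remainder estimate. Concretely, writing $P_L\ket{0^m}=\sum_{k}c_k\ket{k}$ and $P_R\ket{0^m}=\sum_{k}d_k\ket{k}$, the amplitudes $c_k=\sqrt{\tilde{w}_k^\ast}/\sqrt{\norm{\vec{\tilde{w}}}_1}$ and $d_k=\sqrt{\tilde{w}_k}/\sqrt{\norm{\vec{\tilde{w}}}_1}$ form normalized states (since $\sum_k\abs{\tilde{w}_k}=\norm{\vec{\tilde{w}}}_1$) with $c_k^\ast d_k=\tilde{w}_k/\norm{\vec{\tilde{w}}}_1$; this is precisely the construction in the example following the definition of the state-preparation pair, so $(P_L,P_R)$ is a $(\norm{\vec{\tilde{w}}}_1,m,0)$-state-preparation pair for $\vec{\tilde{w}}$. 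Since $t=M=2^m$ here, there is no padding block and the condition $c_k^\ast d_k=0$ for $k\ge t$ is vacuous; I tacitly assume $\vec{\tilde{w}}\neq\vec{0}$ so that $P_L,P_R$ are well defined.

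Next I would measure the error against $\vec{w}$, taking $\mu=\norm{\vec{w}}_1$ (so $\norm{\vec{w}}_1\le\mu$ trivially). Using $\mu c_k^\ast d_k=(\norm{\vec{w}}_1/\norm{\vec{\tilde{w}}}_1)\tilde{w}_k$ and then adding and subtracting $\tilde{w}_k$ inside the sum,
\begin{align*}
  \sum_{k=0}^{M-1}\abs{\mu c_k^\ast d_k-w_k}
  &\le \frac{\abs{\norm{\vec{w}}_1-\norm{\vec{\tilde{w}}}_1}}{\norm{\vec{\tilde{w}}}_1}\sum_{k=0}^{M-1}\abs{\tilde{w}_k}+\sum_{k=0}^{M-1}\abs{\tilde{w}_k-w_k}\\
  &=\abs{\norm{\vec{w}}_1-\norm{\vec{\tilde{w}}}_1}+\norm{\vec{w}-\vec{\tilde{w}}}_1\le 2\norm{\vec{w}-\vec{\tilde{w}}}_1,
\end{align*}
where the last inequality is the reverse triangle inequality. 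So it only remains to bound $\norm{\vec{w}-\vec{\tilde{w}}}_1$.

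The key estimate is the Taylor-remainder bound on the integration circle. By definition $\norm{\vec{w}-\vec{\tilde{w}}}_1=\frac{1}{M}\sum_{k}\abs{f(z_0+r\e^{\im\theta_k})-\tilde{f}_L(z_0+r\e^{\im\theta_k})}$, and for any $z$ with $\abs{z-z_0}=r$ we have $\abs{f(z)-\tilde{f}_L(z)}\le\sum_{\ell=L}^{\infty}\abs{a_\ell}r^\ell$. Cauchy's estimates on the circle $\abs{z-z_0}=R$ give $\abs{a_\ell}\le\norm{f}_\infty/R^\ell$, whence $\abs{f(z)-\tilde{f}_L(z)}\le\norm{f}_\infty\sum_{\ell=L}^{\infty}(r/R)^\ell=\norm{f}_\infty(1-r/R)^{-1}(r/R)^L=\norm{f}_\infty\delta_L$ (here $r<R$ is used). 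Averaging over $k$ yields $\norm{\vec{w}-\vec{\tilde{w}}}_1\le\norm{f}_\infty\delta_L$, and combining with the previous display gives $\sum_{k}\abs{\mu c_k^\ast d_k-w_k}\le 2\norm{f}_\infty\delta_L$; together with $\norm{\vec{w}}_1\le\mu$ this is exactly the statement that $(P_L,P_R)$ is a $(\norm{\vec{w}}_1,m,2\norm{f}_\infty\delta_L)$-state-preparation pair for $\vec{w}$.

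Finally, for the complexity, $P_L$ and $P_R$ each prepare a fixed $M$-dimensional amplitude vector on $m$ qubits, which is implementable with $O(M)$ gates by the method of \cite{SBM2006}; obtaining the description of these circuits only requires the numbers $\sqrt{\tilde{w}_k}$, i.e.\ evaluating the degree-$(L-1)$ polynomial $\tilde{f}_L$ at the $M$ points $z_0+r\e^{\im\theta_k}$, which by Horner's rule costs $O(L)$ per point, hence $O(ML)$ altogether. I expect the only mildly delicate point to be the middle step — controlling the normalization mismatch $\norm{\vec{w}}_1\neq\norm{\vec{\tilde{w}}}_1$ and extracting the clean constant $2$ — while the remainder bound is routine once Cauchy's inequalities are invoked.
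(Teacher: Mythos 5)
Your proof is correct and follows essentially the same route as the paper's: both bound $\abs{w_k-\tilde{w}_k}$ via Cauchy's estimate on the Taylor remainder to get $\norm{\vec{w}-\vec{\tilde{w}}}_1\le\norm{f}_\infty\delta_L$, and both reduce the state-preparation error to $2\norm{\vec{w}-\vec{\tilde{w}}}_1$ (you derive the normalization-mismatch factor of $2$ by an explicit add-and-subtract plus the reverse triangle inequality, which is just a spelled-out version of the inequality the paper invokes directly). The complexity argument is likewise identical.
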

\begin{proof}
To obtain $\tilde{w}_k$, we need $O(L)$ runtime. Thus, we can obtain $\vec{\tilde{w}}$ with gate complexity of $O(ML)$. 
The error in the $\tilde{w}_k$ is bounded as follows: 
\begin{align}
  \abs{w_k - \tilde{w}_k}
  = \left\lvert \frac{\e^{\im\theta_k}}{M} \sum_{\ell = L}^\infty a_\ell (r\e^{\im\theta_k})^\ell \right\rvert 
  \le \frac{1}{M} \sum_{\ell = L}^\infty \frac{\norm{f}_{\infty}}{R^\ell} r^\ell 
  = \frac{\norm{f}_{\infty}}{M}\frac{1}{1 - \frac{r}{R}} \left(\frac{r}{R}\right)^L = \frac{\norm{f}_{\infty}}{M}\delta_L,
\end{align}
where we used Cauchy's estimate: $\abs{a_\ell} \le \norm{f}_\infty/R^\ell$. 
Thus, we have $\norm{\vec{w} - \vec{\tilde{w}}}_1 \le \sum_{k=0}^{M-1} \abs{w_k - \tilde{w}_k} \le \norm{f}_{\infty}\delta_L$. 
Therefore, 
\begin{align}
  \sum_{k=0}^{M-1} \abs{ \norm{\vec{w}}_1 \frac{\tilde{w}_k}{\norm{\vec{\tilde{w}}}_1} - w_k }
  = 
  \norm{\vec{w}}_1 \bigCnorm{\frac{\vec{\tilde{w}}}{ \norm{\vec{\tilde{w}}}_1 } - \frac{\vec{w}}{\norm{\vec{w}}_1} }_1
  \le 
  2 \norm{\vec{\tilde{w}} - \vec{w}}_1
  \le 
  2 \norm{f}_{\infty}\delta_L. 
\end{align}
Hence, $(P_L, P_R)$ is the $(\norm{\vec{w}}, m, 2\norm{f}_{\infty}\delta_L)$-state-preparation-pair for 
$\vec{w}$. 
By using the state preparation method described in \cite{SBM2006}, we can perform $P_L$ and $P_R$ with $O(M)$ runtime. 
\end{proof}

Combining the aboves and \Cref{thm:be-F}, we can derive the following to describe the construction of the block-encoding of the approximation $f_M(A)$. 
From this theorem, we can construct the block-encoding of matrix functions. 

\begin{thm}[Block-encoding of $f_M(A)$]\label{Thm:MainResult:MatrixFunction1}
  We use the same notation as in \Cref{def:trapezoidal}. 
  Let $\delta > 0$.
  Let $\norm{f}_{\infty} = \max_{z:\abs{z - z_0} \le R} \abs{f(z)} $ and let $\delta_L = (1 - \frac{r}{R})^{-1}(r/R)^L$. 
  Suppose that $U_A$ is an $(\alpha, a, \epsilon_A)$-block-encoding of an $n$-qubit matrix $A \in \C^{N \times N}$. 
  Then, we can construct a quantum circuit $U_{f_M(A)}$ 
  that is a $ (\tau, a+m+5, \eta  )$-block-encoding of $f_M(A)$, where 
  \begin{align}\label{eq:factorerror_fm}
    \tau = \frac{16}{3}\frac{\norm{\vec{w}}}{1 - \frac{\norm{A - z_0 I_n}}{r} }, \quad
    \eta = 
    \frac{\norm{f}_\infty}{1 - \frac{\norm{A - z_0 I_n}}{r} }
    \left(
    2\delta_L + \frac{16}{3} \left(4d\sqrt{\frac{2\epsilon_A}{r + \abs{z_0} + \alpha }} +  \delta\right)
    \right), 
  \end{align}
  Further, the circuit $U_{f_M(A)}$ consists of $O(d)$
  uses of the $U_A$ operation and $O((a+ \log(M) + 1)d + M)$ other one- and two-qubit gates, 
  where
  \begin{align}
    d = O\left( \frac{r + \alpha + \abs{z_0}}{r - \norm{A - z_0 I}} \log \frac{1}{\delta}  \right). 
  \end{align}
  Moreover, we can compute a description of such a circuit with a classical computer in $\poly(L, M, d, \log\frac{1}{\delta})$ time.   
\end{thm}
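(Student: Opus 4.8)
The plan is to obtain $U_{f_M(A)}$ purely by composing the three ingredient results of this subsection with \Cref{thm:be-F}. By \Cref{def:trapezoidal} we have $f_M(A) = r\sum_{k=0}^{M-1} w_k A_k^{-1}$ with $A_k = (z_0 + r\e^{\im\theta_k})I_n - A$ and $w_k = \frac{1}{M} f(z_0 + r\e^{\im\theta_k})\e^{\im\theta_k}$, so $f_M(A)$ is exactly a matrix of the form $rF$ treated in \Cref{thm:be-F}. Hence it suffices to supply that proposition with a block-encoding of $\sfA = \diag(A_0,\dots,A_{M-1})$, with upper bounds on $\max_k\norm{A_k}$ and $\max_k\norm{A_k^{-1}}$, and with a state-preparation pair for $\vec w = (w_0,\dots,w_{M-1})$.

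First I would apply \Cref{lem:sfA-case-circle} to obtain the $(\alpha', a+2, \epsilon_A)$-block-encoding $U_\sfA$ of $\sfA$ with $\alpha' = r + \abs{z_0} + \alpha$; each use of $U_\sfA$ costs one use of $U_A$ together with $O(a+m)$ further one- and two-qubit gates (the $2$-qubit state-preparation pair for $(z_0, r, -\alpha, 0)$, the diagonal rotation, and the $\SWAP$ network, $m = \log_2 M$). Next I would read off from \Cref{lem:sfA-case-circle-cond} that, since $\sfA$ is block-diagonal with invertible blocks, $\max_k\norm{A_k} = \norm{\sfA} \le r + \abs{z_0} + \norm{A} \le \alpha'$ and $\max_k\norm{A_k^{-1}} = \norm{\sfA^{-1}} \le \beta := (r - \norm{A - z_0 I_n})^{-1}$; these are precisely the hypotheses \Cref{thm:be-F} imposes on the scaling factor and on $\beta$, and they also guarantee (for the QSVT step inside that proof) that the singular values of $\sfA$, and hence the eigenvalues of its Hermitian extension, lie in $[1/\beta, \alpha']$. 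Finally I would take $(P_L, P_R)$ from \Cref{lem:st-case-circle}, which is a $(\norm{\vec w}_1, m, 2\norm{f}_{\infty}\delta_L)$-state-preparation pair for $\vec w$, implementable with $O(M)$ gates and computable classically in $O(ML)$ time.

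Feeding these into \Cref{thm:be-F} --- with its ``$\alpha$'' $= \alpha'$, ``$a$'' $= a+2$, ``$\epsilon$'' $= \epsilon_A$, ``$\beta$'' as above, ``$\mu$'' $= \norm{\vec w}_1$, ``$\delta_{\text{sp}}$'' $= 2\norm{f}_{\infty}\delta_L$, and multiplier ``$r$'' $= r$ --- yields a block-encoding of $r\sum_k w_k A_k^{-1} = f_M(A)$ on $(a+2) + m + 3 = a + m + 5$ ancilla qubits, with $\tau = \frac{16}{3}r\beta\norm{\vec w}_1 = \frac{16}{3}\norm{\vec w}_1/(1 - \norm{A - z_0 I_n}/r)$ and $\eta = r\beta\left(2\norm{f}_{\infty}\delta_L + \frac{16}{3}\norm{\vec w}_1\left(4d\sqrt{2\epsilon_A/\alpha'} + \delta\right)\right)$. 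To bring $\eta$ to the stated form I would use $\norm{\vec w}_1 = \frac{1}{M}\sum_k\abs{f(z_0 + r\e^{\im\theta_k})} \le \norm{f}_{\infty}$ (each point $z_0 + r\e^{\im\theta_k}$ lies in the disk $\abs{z - z_0}\le R$ over which $\norm{f}_{\infty}$ is the maximum) to pull $\norm{f}_{\infty}$ in front. The QSVT degree is $d = O(\alpha'\beta\log\frac{1}{\delta}) = O\left(\frac{r + \alpha + \abs{z_0}}{r - \norm{A - z_0 I}}\log\frac{1}{\delta}\right)$; expanding the $O(d)$ uses of $U_\sfA$ into $O(d)$ uses of $U_A$ plus $O((a+m)d)$ extra gates, adding the $O((a+m)d)$ QSVT phase gates and the single uses of $P_L, P_R$ ($O(M)$ gates each), gives $O((a + \log M + 1)d + M)$ other gates; the classical description time is $\poly(d, \log\frac{1}{\delta})$ from \Cref{thm:be-F} together with the $O(ML)$ cost of forming $\vec{\tilde w}$, hence $\poly(L, M, d, \log\frac{1}{\delta})$.

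I expect no genuine analytic obstacle: the theorem is a composition and all the real work sits in \Cref{lem:sfA-case-circle,lem:sfA-case-circle-cond,lem:st-case-circle,thm:be-F}. The delicate part is purely the parameter bookkeeping --- confirming that the ancilla count $a+2$ of $U_\sfA$ accumulates the $+3$ from the Hermitian extension, inversion, and return-to-$A^{-1}$ steps plus the $+m$ from the state-preparation pair, and that the factor-of-two error inflation of the extension is what makes the square-root term carry $2\epsilon_A$ --- together with the elementary observation $\norm{\vec w}_1 \le \norm{f}_{\infty}$ used to tidy the error.
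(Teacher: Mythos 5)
Your proposal is correct and follows essentially the same route as the paper: construct $U_\sfA$ via \Cref{lem:sfA-case-circle}, bound $\norm{\sfA}$ and $\norm{\sfA^{-1}}$ via \Cref{lem:sfA-case-circle-cond}, take the state-preparation pair from \Cref{lem:st-case-circle}, and feed everything into \Cref{thm:be-F}. Your explicit note that $\norm{\vec w}_1 \le \norm{f}_\infty$ is needed to bring $\eta$ to the stated form is a detail the paper leaves implicit, and is a worthwhile addition.
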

\begin{proof}
  From  \cref{lem:st-case-circle}, 
  the $(\norm{\vec{w}}, m, 2\norm{f}_{\infty}\delta_L )$-state-preparation pair $(P_L, P_R)$ for 
  the vector $\vec{w} = (w_0, w_1,\dots, w_{M-1})$
  can be implemented using a circuit consisting of $O(M)$ one- and two-qubit gates. 
  Moreover, such a circuit can be described in $O(ML)$ runtime on a classical computer. 
  Let us consider a block-diagonal matrix $\sfA = (A_0,A_1, \dots, A_{M-1})$. 
  We can see from \cref{lem:sfA-case-circle-cond} that 
  $\sfA$ satisfies $\norm{\sfA} \le r + \abs{z_0} + \norm{A} \le r + \abs{z_0} + \alpha =: \alpha'$ and $\norm{\sfA^{-1}} \le (r - \norm{A - z_0 I_n })^{-1} =: \beta'$. 
  Moreover, by \cref{lem:sfA-case-circle}, 
  the $(\alpha', a+2, \epsilon_A)$-block-encoding $U_\sfA$  of $\sfA$ can be constructed  
  using one $U_A$ operation and $O(m + a) = O(\log M + a)$ other one- and two-qubit gates. 
  Thus, the theorem follows from \Cref{thm:be-F}.  
\end{proof}


The error between $f(A)$ and $f_M(A)$ is upper bounded as $  \bigAnorm{f(A) - f_M(A)} \le \epsilon_M$ \cite[Theorem 18.1]{TW14}, where
\begin{align}
  \epsilon_M = 
  \frac{\norm{f}_\infty}{1 - \frac{\norm{-z_0 I + A}}{R}} 
  \left( \frac{1}{1 - \left(\frac{\norm{-z_0 I + A}}{r}\right)^M} \left(\frac{\norm{-z_0 I + A}}{r}\right)^M
  +
  \frac{1}{1 - \left(\frac{r}{R}\right)^M} \left(\frac{r}{R}\right)^M \right), 
        \label{eq:erroroffMA}
\end{align}
The details of the derivation of the upper bound is described in the appendix. 
Thus, the $(\tau, a+m+5, \eta)$-block-encoding in \Cref{Thm:MainResult:MatrixFunction1} is also 
the $(\tau, a+m+5, \eta + \epsilon_M)$-block-encoding of $f(A)$.

Compared with the technique described in \cite{TOSU2020}, 
this method has the novelty that the algorithm is described in terms of the block-encoding framework. 
Additionally, this method can be applied to cases in which the circle is not centered on the origin. 
However, the overall complexity has increased from $O(\log M)$ to $O(M)$ 
with respect to the number of integration points $M$, but the order of $M$ is the logarithm of the inverse of the desired error. Therefore, this does not have a strongly negative effect. 

\subsection{Cases in which the contour is not a circle}\label{subsec:notcircle}
The approximation of the matrix functions in cases where the contour is not a circle is also be considered. 
In such cases, the parameters that appear in the approximation by the quadratures become  intricate. 
Therefore, unlike the case of a circle, it is difficult to use the structure of the parameters.  
In this subsection, we consider a block-encoding of such an approximation. Specifically, we consider the block-encoding of a matrix defined by the following. 
\begin{defi}\label{def:calFMA}
    Let $w_k, z_k, x_k \in \C \ (k = 0,1,\dots,M-1)$ and let $r > 0$. 
    Then, we define 
    \begin{align}
      \mathcal{F}_M(A) = r\sum_{k=0}^{M-1}w_k (y_k I + z_k A)^{-1} 
      \label{eq:calFM}
    \end{align}
  Putting $A_k = y_kI + z_k A$, we write the above as $\mathcal{F}_M(A) = r \sum_{k=0}^{M-1} w_k A_k^{-1}$. 
\end{defi}
As the approximations have this form, it is sufficient to consider them in this manner. 
Through the block-encoding formulation described here, 
it is possible to construct block-encodings for various matrix functions.

Here, we also consider the block-diagonal matrix $\sfA = \diag(A_0,A_1,\dots,A_{M-1})$ and its decomposition as a linear combination of the tensor products. 
Using \Cref{prop:linearcobination_tensorproduct,lem:diagbe}, we construct the block-encoding of $\sfA$. 


\begin{prop}\label{lem:sfA-case-random} 
We use the same notation as \Cref{def:calFMA}. 
Suppose that $U_A$ is an $(\alpha, a, \epsilon_A)$-block-encoding of an $n$-qubit matrix $A \in \C^{N \times N}$. 
Let $U_Y$ be a $(y_{\max}, 1, 0)$-block-encoding of the diagonal matrix $Y = \diag(y_0, \dots, y_{M-1})$
and $U_Z$ be a $(z_{\max}, 1, 0)$-block-encoding of the diagonal matrix $Z = \diag(z_0, \dots, z_{M-1})$, 
where $y_{\max} = \max_k \abs{y_k}$ and $z_{\max} = \max_k \abs{z_k}$. 
Let $(Q_L, Q_R)$ be an $(\alpha', 1, 0)$-state-preparation pair for $(y_{\max}, z_{\max}\alpha )$, 
where $\alpha' = y_{\max} + z_{\max}\alpha$. 
Let $\SWAP_{a, m} = \prod_{i=1}^a \SWAP_{m+i}^i$ and let 
\begin{align}
  W = \gaiseki{0}{0} \otimes U_Y \otimes I_{a+n} + 
  \gaiseki{1}{1} \otimes U_Z \otimes U_A. 
\end{align}
Then, 
\begin{align}
  U_\sfA :=
  \Bigl(Q_L^\dag \otimes I_1 \otimes \SWAP_{a, m} \otimes I_n \Bigr)
  W
  \Bigl(Q_R      \otimes I_1 \otimes \SWAP_{a, m} \otimes I_n \Bigr)
\end{align}
is an $(\alpha', a+2, z_{\max}\epsilon_A)$-block-encoding of $\sfA = \diag(A_0, A_1, \dots, A_{M-1})$, 
which can be constructed using a single $U_A$ operation and $O(M + a)$ one- and two-qubit gates. 
\end{prop}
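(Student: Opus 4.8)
The plan is to mirror the strategy already used for \Cref{lem:sfA-case-circle}: first express the block-diagonal matrix $\sfA$ as a linear combination of tensor products, then assemble its block-encoding from the given ingredients via \Cref{prop:linearcobination_tensorproduct}, and finally read off the parameters. The key algebraic observation is that, since $A_k = y_k I_n + z_k A$, we may write
\begin{align}
  \sfA = \diag(A_0,\dots,A_{M-1}) = Y \otimes I_n + Z \otimes A,
\end{align}
where $Y = \diag(y_0,\dots,y_{M-1})$ and $Z = \diag(z_0,\dots,z_{M-1})$ are the two $m$-qubit diagonal matrices in the statement. This is a linear combination of two tensor products of the form $\sum_{j} \hat y_j (B_j \otimes C_j)$ with $t = 2$, coefficient vector $(1,1)$, $B_0 = Y, C_0 = I_n$ and $B_1 = Z, C_1 = A$ — equivalently, absorbing the scalars into the state-preparation pair, the vector fed to \Cref{prop:linearcobination_tensorproduct} is $(y_{\max}, z_{\max}\alpha)$, which is exactly what $(Q_L,Q_R)$ prepares.

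Next I would check that each tensor factor has the claimed block-encoding. For the first term, $U_Y$ is a $(y_{\max},1,0)$-block-encoding of $Y$ (by \Cref{lem:diagbe}), and the trivial block-encoding $I_{a+n}$ plays the role of a $(1,a,0)$-block-encoding of $I_n$ after padding — so $U_Y \otimes I_{a+n}$ is an $(y_{\max}, a+1, 0)$-block-encoding of $Y \otimes I_n$. For the second term, $U_Z$ is a $(z_{\max},1,0)$-block-encoding of $Z$ and $U_A$ is an $(\alpha, a, \epsilon_A)$-block-encoding of $A$, so by \Cref{prop:tensorproductbe} the appropriately-swapped product $U_Z \otimes U_A$ (with the $\SWAP_{a,m}$ reordering to match the normal tensor convention, exactly as in the discussion after \Cref{prop:tensorproductbe}) is a $(z_{\max}\alpha, a+1, z_{\max}\epsilon_A + \alpha\cdot 0) = (z_{\max}\alpha, a+1, z_{\max}\epsilon_A)$-block-encoding of $Z \otimes A$. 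Feeding these two block-encodings and the $(\alpha', 1, 0)$-state-preparation pair $(Q_L,Q_R)$ for $(y_{\max}, z_{\max}\alpha)$ into \Cref{prop:linearcobination_tensorproduct} (with $t=2$, $y_0=y_1=1$) yields an $(\alpha', (a+1) + 1, \ 1\cdot(0) + 1\cdot(z_{\max}\epsilon_A)) = (\alpha', a+2, z_{\max}\epsilon_A)$-block-encoding of $Y\otimes I_n + Z\otimes A = \sfA$, and $\alpha' = \norm{(y_{\max}, z_{\max}\alpha)}_1 = y_{\max} + z_{\max}\alpha$ as stated. The circuit $W$ in the statement is precisely the select operator of that proposition specialized to two controlled blocks, and the outer $\SWAP_{a,m}$ and $Q_L^\dag, Q_R$ wrappers match its form.

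The main obstacle is bookkeeping rather than mathematics: one must be careful that the identity factor $I_{a+n}$ in the first branch of $W$ really does implement a padded block-encoding of $I_n$ with the same scaling ($\alpha=1$, so that $Y\otimes I_n$ and not some rescaled version appears), that the ancilla-qubit counts add as $1$ (from the diagonal-matrix block-encodings) plus $1$ (the $P_L/P_R$ register of width $\log 2 = 1$) plus $a$ (from $U_A$, shared between the two branches since only the second uses it), giving $a+2$ total, and that the $\SWAP_{a,m}$ placement correctly converts between the "ancillas-interleaved" convention of \Cref{prop:linearcobination_tensorproduct} and the explicit circuit written here. For the gate count, $U_Y$ and $U_Z$ each cost $O(M)$ elementary gates (they are diagonal with $M$ distinct entries, as in \Cref{lem:diagbe}), $Q_L,Q_R$ act on a single qubit, the $\SWAP_{a,m}$ blocks contribute $O(a)$ gates, and $U_A$ is used exactly once; hence $O(M+a)$ one- and two-qubit gates and a single use of $U_A$, completing the claim.
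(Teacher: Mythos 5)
Your proposal is correct and follows essentially the same route as the paper: the paper's proof likewise observes the decomposition $\sfA = Y \otimes I_n + Z \otimes A$, invokes \Cref{lem:diagbe} for the $O(M)$-gate implementations of $U_Y, U_Z$, and concludes via \Cref{prop:linearcobination_tensorproduct}. Your additional bookkeeping of the ancilla counts, the error term $z_{\max}\epsilon_A$, and the role of the $\SWAP_{a,m}$ wrappers is accurate and merely makes explicit what the paper leaves implicit.
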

\begin{proof}
  We can see that $\sfA$ is decomoposed as $\sfA = Y \otimes I_n + Z \otimes A$. 
  By \Cref{lem:diagbe}, we can implement $U_Y, U_Z$ with $O(M)$ one- and two-qubit gates.  
  Thus, the proposition follows from \Cref{prop:linearcobination_tensorproduct}. 
\end{proof}

  Combining the block-encoding of $\sfA$ and the method in \cite{SBM2006} to prepare the state with respect to the coefficients, 
  we can obtain the block-encoding of $\mathcal{F}_M(A)$ through the quantum algorithm for the linear combination of inverses (\Cref{thm:be-F}). 
  Here, we assume that the bounds for the singular values of $\sfA$ are given. 

\begin{thm}[Block-encoding of $\mathcal{F}_M(A)$]
  Let $\beta'$ be a real number such that $\max_k \{ \norm{(y_kI_n + z_k A )^{-1}} \} \le \beta'$. 
  Suppose that $U_A$ is an $(\alpha, a, \epsilon_A)$-block-encoding of an $n$-qubit matrix $A$ and 
  suppose that $(P_L, P_R)$ is an $(\mu, m, \delta_w)$-state-preparation pair for $\vec{w}$. 
  Then, we can construct a quantum circuit $U_{\mathcal{F}_M(A)}$ that is a 
  $ ( \tau, a+m+5, \eta)$-block-encoding of $\mathcal{F}_M(A)$, 
  where 
  \begin{align}\label{eq:factorerror_Fm}
    \tau =  \frac{16}{3} \cdot r\beta'\mu, 
    \quad 
    \eta = r \beta' \left( \delta_w + \frac{16}{3} \cdot \mu 
    \left(4d\sqrt{\frac{2 z_{\max} \epsilon_A}{y_{\max} + z_{\max}\alpha}  } + \delta\right) \right), 
  \end{align}
  Further, $U_{\mathcal{F}_M(A)}$ consists of $O(d)$ uses of the $U_A$ operation and $O((a+M+1)d+M)$ 
  other one- and two-qubit gates, where 
  \begin{align}
    d = O \left( ( y_{\max} + z_{\max}\alpha ) \beta' \log\frac{1}{\delta} \right). 
  \end{align}
  Moreover, 
  we can compute a description of such a circuit with a classical computer in $\poly(d, \log\frac{1}{\delta})$ time. 
\end{thm}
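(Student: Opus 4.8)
The plan is to read the theorem as a direct corollary of \Cref{thm:be-F}, fed with the block-encoding of the relevant block-diagonal matrix supplied by \Cref{lem:sfA-case-random}, in exact parallel with the proof of \Cref{Thm:MainResult:MatrixFunction1}.

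First I would invoke \Cref{lem:sfA-case-random}: writing $A_k = y_kI_n + z_kA$ and $\sfA = \diag(A_0,\dots,A_{M-1})$, it produces an $(\alpha', a+2, z_{\max}\epsilon_A)$-block-encoding $U_\sfA$ of $\sfA$ with $\alpha' = y_{\max} + z_{\max}\alpha$, using a single use of $U_A$ together with $O(M+a)$ one- and two-qubit gates (the $O(M)$ coming from the diagonal block-encodings $U_Y,U_Z$ of \Cref{lem:diagbe}). I would also record that, from the decomposition $\sfA = Y\otimes I_n + Z\otimes A$, one has $\norm{\sfA} \le \norm{Y} + \norm{Z}\norm{A} \le y_{\max} + z_{\max}\alpha = \alpha'$, so the standing assumption $\norm{\sfA}\le\alpha'$ needed by \Cref{thm:be-F} holds.

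Second, since $\sfA$ is block-diagonal, $\sfA^{-1} = \diag(A_0^{-1},\dots,A_{M-1}^{-1})$, hence $\norm{\sfA^{-1}} = \max_k\norm{A_k^{-1}} \le \beta'$ by hypothesis (finiteness of this bound also supplies invertibility of each $A_k$, and thus of $\sfA$). Now $\sfA$, the bound $\beta'$, the block-encoding $U_\sfA$, and the given $(\mu,m,\delta_w)$-state-preparation pair $(P_L,P_R)$ for $\vec{w}$ meet all the hypotheses of \Cref{thm:be-F} under the correspondences $\alpha\mapsto\alpha' = y_{\max}+z_{\max}\alpha$, $a\mapsto a+2$, $\epsilon\mapsto z_{\max}\epsilon_A$, $\beta\mapsto\beta'$, $\delta_{\text{sp}}\mapsto\delta_w$, with scaling constant $r$. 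Applying that proposition yields a circuit $U_{\mathcal{F}_M(A)}$ that is a $(\tau,(a+2)+m+3,\eta)$-block-encoding, i.e. an $(a+m+5)$-ancilla block-encoding, of $r\sum_k w_kA_k^{-1} = \mathcal{F}_M(A)$, with $\tau = \frac{16}{3}r\beta'\mu$, degree $d = O\bigl((y_{\max}+z_{\max}\alpha)\beta'\log\frac{1}{\delta}\bigr)$, and $\eta = r\beta'\bigl(\delta_w + \frac{16}{3}\mu(4d\sqrt{2z_{\max}\epsilon_A/(y_{\max}+z_{\max}\alpha)}+\delta)\bigr)$ — precisely the claimed parameters — and classical preprocessing time $\poly(d,\log\frac{1}{\delta})$ inherited verbatim.

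Finally, for the gate count I would propagate the bookkeeping of \Cref{thm:be-F}: it uses $O(d)$ copies of $U_\sfA$ plus $O((a+3)d)$ other gates and single uses of $P_L,P_R$; each copy of $U_\sfA$ contributes one $U_A$ and $O(M+a)$ further gates, and $P_L,P_R$ cost $O(M)$ each via the method of \cite{SBM2006}, giving in total $O(d)$ uses of $U_A$ and $O((a+M+1)d+M)$ other one- and two-qubit gates. There is no genuine obstacle here: the statement is bookkeeping on top of machinery already built. The only points requiring care are the substitution of the scaling factor $\alpha'$ and the block-encoding error $z_{\max}\epsilon_A$ (not $\epsilon_A$) into the formulas of \Cref{thm:be-F}, so that the $\sqrt{2z_{\max}\epsilon_A/(y_{\max}+z_{\max}\alpha)}$ term and the degree $d$ come out with the correct arguments, and the tracking of the $O(M)$ overhead of the diagonal block-encodings, which is incurred once per use of $U_\sfA$ and must be folded into the final $O((a+M+1)d+M)$ bound.
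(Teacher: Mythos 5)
Your proposal is correct and follows essentially the same route as the paper: obtain the $(\alpha', a+2, z_{\max}\epsilon_A)$-block-encoding of $\sfA=\diag(y_0I+z_0A,\dots,y_{M-1}I+z_{M-1}A)$ from the decomposition $\sfA = Y\otimes I_n + Z\otimes A$, bound the singular values of $\sfA$ by $[1/\beta',\alpha']$, and then feed everything into the linear-combination-of-inverses proposition with the substitutions $\alpha\mapsto y_{\max}+z_{\max}\alpha$, $\epsilon\mapsto z_{\max}\epsilon_A$, $a\mapsto a+2$. Your write-up is in fact more explicit than the paper's (which compresses the parameter substitution and gate bookkeeping into a one-line appeal to the earlier proposition), but there is no substantive difference in approach.
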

\begin{proof}
  The $(\mu, m, \delta_w)$-state-preparation pair $(P_L, P_R)$ for $\vec{w} = (w_0, w_1,\dots, w_{M-1})$
  can be implemented using a circuit consisting of $O(M)$ one- and two-qubit gates. 
  By \cref{lem:sfA-case-random}, 
  the $(\alpha', a+2, z_{\max} \epsilon_A)$-block-encoding $U_\sfA$ of $\sfA$ 
  consists of a single $U_A$ operation and $O(M + a)$ other one- and two-qubit gates. 
  Furthermore, the singular values of the block-diagonal matrix $\sfA$ range over $[1/\beta', \alpha']$, 
  where $\alpha' = y_{\max} + z_{\max}\alpha$. 
  Thus, the theorem follows from \Cref{thm:be-F}. 
\end{proof}

When $\norm{\mathcal{F}(A) - \mathcal{F}_M(A)} \le \epsilon_M$, 
the block-encoding of $\mathcal{F}_M(A)$ is also a $(\tau, a+m+5, \eta + \epsilon_M)$-block-encoding of $\mathcal{F}(A)$. 
The above results include the case in which the contour is a circle. 
Indeed, if we set $w_k,y_k$ and $z_k$ in \Cref{eq:calFM} to be consistent with \Cref{eq:f_MA} (i.e., 
$w_k = f(z_0 + r\e^{\im\theta_k})\e^{\im\theta_k}/M, y_k = (z_0 + r\e^{\im\theta_k})$ and $z_k = -1$),
then the normalization factor and error for \Cref{eq:factorerror_Fm} 
are the same as in \Cref{eq:factorerror_fm}. 
However, the number of one- and two-qubit gates increases.  

\section{Conclusion}\label{sec:conclusion}

In this paper, we considered the quantum algorithm to perform the linear combination of inverses. 
The motivation for this is that the matrix function based on the contour integrals can be approximated by the 
linear combination of the inverses of the shifted matrices. 
Utilizing this algorithm,
we proposed the quantum algorithm for the matrix functions approximated by the trapezoidal rule. 
Unlike our previous work  \cite{TOSU2020}, 
this algorithm can be applied to 
the case not only at the origin but also any circle on the complex plain. 

Moreover, we considered the quantum algorithm to perform the matrix of the form $\sum_k w_k(z_kI + y_kA)^{-1}$. 
This form arises in the approximations by the various quadratures, 
for example, trapezoidal rule, Gauss-Laguerre quadrature, and the double exponential formula \cite{TM74}. 
Moreover, this form arises not only contour integral representations, but also real integral representations. 
Therefore, the algorithms presented here can be applied to various cases. 
These quantum algorithms have been described concretely in terms of the block-encoding framework. 
Thus, the algorithms can be used as a subroutine easily.

The construction for a block-encoding of specific matrix functions (e.g., the matrix logarithm) 
and the analysis of the parameters in the block-encoding will be investigated in future work. 
Moreover, the investigation of the trace estimation of the matrix functions via the methods here is a topic for further research. 


\section*{Acknowledgements}
This work was supported by 
MEXT Quantum Leap Flagship Program (QLEAP) Grant Numbers JPMXS0118067394 and JPMXS0120319794. 
This work was supported by JSPS KAKENHI Grant Numbers JP18H05392 and JP20H00581.  
We thank Stuart Jenkinson, 
PhD, from Edanz Group (https://en-author-services.edanz.com/ac) for editing a draft of this manuscript.

\newcommand{\etalchar}[1]{$^{#1}$}

\appendix
\section{Approximation by the trapezoidal rule}

In this section, we describe the derivation of the approximation
when the trapezoidal rule is applied to a matrix function represented by a circular contour integral.
Moreover, we derive the upper bound \eqref{eq:erroroffMA} of the error in the approximation.

We first derive the approximation \eqref{eq:f_MA}.
Let $\Gamma$ be a circle of radius $r$ centered at $z_0$.
The integral can be represented as follows:
\begin{align}
  \frac{1}{2\pi\im}\int_\Gamma f(z) (zI - A)^{-1} \dz
  &= \frac{1}{2\pi\im} \int_{|z - z_0| = r} f(z)(zI - A)^{-1} \dz \notag \\
  &= \frac{1}{2\pi\im} \int_0^{2\pi} f(z_0 + r\e^{\im\theta}) ((z_0 + r\e^{\im\theta})I - A)^{-1} r\im\e^{\im\theta} \dtheta \notag \\
  &= \frac{1}{2\pi} \sum_{k=0}^{M-1} \int_{\theta_k}^{\theta_{k+1}} h(\theta) \dtheta \label{eq:true},
\end{align}
where $h(\theta) = f(z_0 + r\e^{\im\theta}) ((z_0 + r\e^{\im\theta})I - A)^{-1} r\e^{\im\theta} $
and $\theta_k = \frac{2\pi}{M}k \ (k=0,1,\dots,M-1)$.
We take the $M$ points at regular intervals and apply the trapezoidal rule $\int_a^b h(z) \dz \approx (b-a)\frac{h(b) - h(a)}{2}$.
Then, \Cref{eq:true} can be approximated as follows:
\begin{align}
  \frac{1}{2\pi} \sum_{k=0}^{M-1} \int_{\theta_k}^{\theta_{k+1}} h(\theta) \dtheta
  &\approx \frac{1}{M}\sum_{k=0}^{M-1} h(\theta_k) \notag \\
  &= \frac{1}{M} \sum_{k=0}^{M-1}   f(z_0 + r\e^{\im\theta_k}) ((z_0 + r\e^{\im\theta_k})I - A)^{-1} r\e^{\im\theta_k}.
\end{align}
Therefore, we obtain \Cref{eq:f_MA}.

Next, we show the upper bound of the error in the approximation.
\begin{proof}
  From $(z_0 + r\e^{\im\theta_k})I - A = r\e^{\im\theta_k}I - (-z_0 I +A )$ and  $r > \norm{-z_0 I + A}$,
  we can see that $((z_0 + r\e^{\im\theta_k})I - A)^{-1} = (r\e^{\im\theta_k})^{-1} \sum_{\ell = 0}^\infty ((-z_0 I + A)/(r\e^{\im\theta_k}))^\ell$.
  Let $f(z) = \sum_{n=0}^\infty a_n (z - z_0)^n $ be the Taylor expansion of $f(z)$.
  We have
  \begin{align}
    f_M(A)
    &= \frac{1}{M}\sum_{k=0}^{M-1} f(z_0 + r\e^{\im\theta_k}) \sum_{\ell=0}^\infty \left( \frac{-z_0 I + A}{r\e^{\im\theta_k}} \right)^\ell \notag \\
    &= \frac{1}{M}\sum_{k=0}^{M-1} \sum_{n=0}^\infty a_n (r\e^{\im\theta_k})^n \sum_{\ell=0}^\infty \left( \frac{-z_0 I + A}{r\e^{\im\theta_k}} \right)^\ell \notag \\
    &=  \sum_{n=0}^\infty\sum_{\ell=0}^\infty  a_n (-z_0 I + A)^\ell r^{n- \ell} \left(\frac{1}{M}\sum_{k=0}^{M-1} \e^{\im\theta_k(n - \ell)}\right).
  \end{align}
  We divide the sum into three parts for $n = \ell, n < \ell, $ and $n > \ell$:
  \begin{align}
    f_M(A) &=
    \sum_{n=0}^\infty a_n (-z_0 I + A)^n \label{first_term}  \\
    &\quad + \sum_{n   =0}^\infty \sum_{\ell=n+1}^\infty a_n (-z_0 I + A)^\ell r^{n - \ell} S_M(n - \ell)  \label{second_term} \\
    &\quad + \sum_{\ell=0}^\infty \sum_{n=\ell+1}^\infty a_n (-z_0 I + A)^\ell r^{n - \ell} S_M(n - \ell), \label{third_term}
  \end{align}
  where $ S_M(n-\ell) = \frac{1}{M}\sum_{k=0}^{M-1} \e^{\im\theta_k(n - \ell)}$.
  The first part \eqref{first_term} coincide with $f(A)$.  Therefore, we have
  \begin{align}
    f_M(A) = f(A) + \sum_{n   =0}^\infty \sum_{y = 1}^\infty a_n (-z_0 I + A)^{n+yM} r^{-yM}
    + \sum_{\ell=0}^\infty \sum_{y=1}^\infty a_{\ell + yM} (-z_0 I + A)^\ell r^{yM}.
  \end{align}
  The norm of the second part \eqref{second_term} is bounded as follows:
  \begin{align}
    \bigDnorm{ \sum_{n   =0}^\infty \sum_{y = 1}^\infty a_n (-z_0 I + A)^{n+yM} r^{-yM}  }
    &\le
    \sum_{n=0}^\infty \sum_{y=1}^\infty \abs{a_n} \norm{-z_0 I + A}^{n+yM} r^{-yM} \notag \\
    &\le
    \sum_{n=0}^\infty \sum_{y=1}^\infty \norm{f}_\infty \left(\frac{\norm{-z_0I + A}}{R}\right)^n \left(\frac{\norm{-z_0I + A}}{r}\right)^{yM} \notag \\
    &=
    \frac{\norm{f}_\infty}{1 - \frac{\norm{-z_0 I + A}}{R}}  \frac{1}{1 - \left(\frac{\norm{-z_0 I + A}}{r}\right)^M} \left(\frac{\norm{-z_0 I + A}}{r}\right)^M.
  \end{align}
  where we used that $\abs{a_n} \le \norm{f}_\infty/R^n$.
 The norm of the third part \eqref{third_term} is bounded as follows:
  \begin{align}
    \bigDnorm{
      \sum_{\ell=0}^\infty \sum_{y=1}^\infty a_{\ell + yM} (-z_0 I + A)^\ell r^{yM}
    }
    &\le
    \sum_{\ell=0}^\infty \sum_{y=1}^\infty \abs{a_{\ell + yM}} \norm{-z_0 I + A}^\ell r^{yM} \notag \\
    &\le
    \sum_{\ell=0}^\infty \sum_{y=1}^\infty \norm{f}_\infty \left(\frac{\norm{-z_0 I + A}}{R}\right)^\ell \left(\frac{r}{R}\right)^{yM} \notag \\
    &\le
    \frac{\norm{f}_\infty}{1 - \frac{\norm{-z_0 I + A}}{R}}  \frac{1}{1 - \left(\frac{r}{R}\right)^M} \left(\frac{r}{R}\right)^M.
  \end{align}
  Therefore, $\norm{f(A) - f_M(A)}$ can be bounded as \Cref{eq:erroroffMA}.
\end{proof}

\end{document}